\DeclareMathAlphabet{\mathcal}{OMS}{ntxm}{m}{n}
\let\mathbb\relax
\let\mathbb\mathds
\renewcommand{\paragraph}{%
  \@startsection{paragraph}{4}%
  {\z@}{2.25ex \@plus 1ex \@minus .2ex}{-1em}%
  {\normalfont\normalsize\bfseries}%
}
\definecolor{linkblue}{HTML}{001487}
\crefname{enumi}{Step}{Steps}
\newtheorem{theorem}{Theorem}[section]
\newtheorem*{theorem*}{Theorem}
\newtheorem{lemma}[theorem]{Lemma}
\newtheorem{fact}[theorem]{Fact}
\newtheorem{corollary}[theorem]{Corollary}
\theoremstyle{remark}
\theoremstyle{definition}
\newtheorem{definition}[theorem]{Definition}
\numberwithin{equation}{section}
\newtheorem{question}{Question}
\newtheorem*{question*}{Question}
\newcommand{\eps}{\epsilon}
\newcommand{\F}{\ensuremath{\mathds{F}}}
\newcommand{\N}{\ensuremath{\mathds{N}}}
\newcommand{\R}{\ensuremath{\mathds{R}}}
\DeclareMathOperator{\poly}{poly}
\DeclareMathOperator{\negl}{negl}
\newcommand{\ket}[1]{|#1\rangle}
\newcommand{\bra}[1]{\langle#1|}
\DeclarePairedDelimiterX\braket[2]{\langle}{\rangle}{#1 \delimsize\vert #2}
\newcommand{\cK}{\ensuremath{\mathcal{K}}}
\newcommand{\pt}{\mathsf{pt}}
\newcommand{\TN}{\mathsf{TN}}
\newcommand{\mincut}{\mathsf{mincut}}
\newcommand{\bF}{\bm{F}}
\newcommand{\bP}{\bm{P}}
\renewcommand{\H}{\mathbb{H}}
\newcommand{\TR}{\mathsf{tr}}
\newcommand{\Haar}{\mathsf{Haar}}
\newcommand{\State}{\mathsf{State}}
\newcommand{\caE}{\mathcal{E}}
\newcommand{\bulk}{\mathsf{bulk}}
\newcommand{\mc}{\mathsf{mc}}
\newcommand{\Weight}{\textsf{Weight}}
\newcommand{\nseg}{n_0}
\newcommand{\Ex}{\operatornamewithlimits{\mathbb{E}}}
\def\caD{\mathcal{D}}
\renewcommand{\N}{\mathbb{N}}
\newcommand{\posN}{\mathbb{N}_{\ge 1}}
\newcommand{\floor}[1]{\ensuremath{\left\lfloor#1\right\rfloor}}
\newcommand{\ceil}[1]{\ensuremath{\left\lceil#1\right\rceil}}
\def\caG{\mathcal{G}}
\newcommand{\spz}[1]{|#1\rangle}
\newcommand{\rpz}[1]{\langle #1 |}
\newcommand{\opt}[1]{\spz{#1}\rpz{#1}}
\newcommand{\optk}[2]{\left(\opt{#1}\right)^{\otimes #2}}
\newcommand{\WT}{\widetilde}
\newcommand{\mathdot}{\ensuremath{\;\text{.}}}
\newcommand{\mathcomma}{\ensuremath{\;\text{,}}}
\def\caH{\mathcal{H}}
\def\caT{\mathcal{T}}
\newcommand{\dmax}{d_{\sf max}}
\newenvironment{reminder}[1]{\bigskip
	\noindent {\bf Reminder of #1.  }\em}{\smallskip}
\newenvironment{proofof}[1]{\begin{proof}[{Proof of #1}]}{\end{proof}}
\begin{document}

\title{

Holographic pseudoentanglement and the \\ complexity of the AdS/CFT dictionary \vspace{5pt}}

\author[1]{Chris Akers}
\author[2]{Adam Bouland}
\author[3]{Lijie Chen}
\author[2]{Tamara Kohler}
\author[4]{\\Tony Metger}
\author[3]{Umesh Vazirani}

\affil[1]{University of Colorado Boulder}
\affil[2]{Stanford University}
\affil[3]{UC Berkeley}
\affil[4]{ETH Zurich}

\date{}

\maketitle

\begin{abstract}
The ``quantum gravity in the lab'' paradigm suggests that quantum computers might shed light on quantum gravity by simulating the CFT side of the AdS/CFT correspondence and mapping the results to the AdS side.
This relies on the assumption that the duality map (the ``dictionary'') is efficient to compute.
In this work, we show that the complexity of the AdS/CFT dictionary is surprisingly subtle: there might be cases in which one can efficiently apply operators to the CFT state (a task we call ``operator reconstruction'') without being able to extract basic properties of the dual bulk state such as its geometry (which we call ``geometry reconstruction'').
Geometry reconstruction corresponds to the setting where we want to extract properties of a completely unknown bulk dual from a simulated CFT boundary state.

We demonstrate that geometry reconstruction may be generically hard due to the connection between geometry and entanglement in holography. 
In particular we construct ensembles of states whose entanglement approximately obey the Ryu-Takayanagi formula for arbitrary geometries, but which are nevertheless computationally indistinguishable.
This suggests that even for states with the special entanglement structure of holographic CFT states, geometry reconstruction might be hard.
This result should be compared with existing evidence that operator reconstruction is generically easy in AdS/CFT.
A useful analogy for the difference between these two tasks is quantum fully homomorphic encryption (FHE): this encrypts quantum states in such a way that no efficient adversary can learn properties of the state, but operators can be applied efficiently to the encrypted state. We show that quantum FHE can separate the complexity of geometry reconstruction vs operator reconstruction, which raises the question whether FHE could be a useful lens through which to view AdS/CFT.


\end{abstract}

\newpage
\section{Introduction}

A central challenge of theoretical physics is to develop a unified theory of quantum gravity.
A major source of progress in this area has been the AdS/CFT correspondence~\cite{Maldacena_1999} -- a conjectured duality between a theory of quantum gravity in Anti de Sitter (AdS) space, and a conformal field theory (CFT) living on its boundary.  
The two theories are connected by a ``dictionary'' which maps states and operators in one theory to the other, allowing us to learn about quantum gravity by studying the dual system using tools from quantum information.
This has led to number of key insights into quantum gravity, such as a sharper understanding of spacetime as an emergent phenomenon~\cite{Ryu_2006, VanRaamsdonk:2010pw, Almheiri:2014lwa} and progress towards solving the black hole information paradox~\cite{penington2020entanglementwedgereconstructioninformation,Almheiri_2019, akers2022blackholeinteriornonisometric}.
It also raises the exciting possibility that quantum computers might one day shed light on quantum gravity, by simulating the dual quantum mechanical system.
This ``quantum gravity in the lab'' paradigm~\cite{Brown_2023,Nezami_2023} has already seen its first toy experimental implementations~\cite{Jafferis_2022, Shapoval:2022xeo}.

However, realizing this potential requires not only that the dictionary exists, but that we have an explicit and \emph{efficiently computable} description of it -- otherwise we can only learn from the dual system in principle, but not in practice.
In this paper, we argue that the complexity of the AdS/CFT dictionary is surprisingly subtle and depends crucially what we mean by ``implement the dictionary''.
We distinguish between two different versions of this question:
\begin{itemize}
    \item \emph{Operator reconstruction:} 
    Given an AdS operator, how complicated is the CFT operator that ``reconstructs'' it? 
    More precisely, let $\mathcal{H}_\mathrm{code}$ be a subspace of AdS states, $V: \mathcal{H}_\mathrm{code} \to \mathcal{H}_\mathrm{bdry}$ the linear (``holographic'') map to the Hilbert space of the CFT dual, and $U$ an operator on $\mathcal{H}_\mathrm{code}$. 
    The goal of operator reconstruction is to implement some $U_\mathrm{bdry}$ on $\mathcal{H}_\mathrm{bdry}$ such that
    \begin{equation}
        U_\mathrm{bdry} V \ket{\psi} \approx V U \ket{\psi}~,
    \end{equation}
    for any $\ket{\psi} \in \mathcal{H}_\mathrm{code} \otimes \mathcal{H}_R$ where $R$ is an arbitrary reference system.
    Note that $U_\mathrm{bdry}$ is specific to the code subspace, which means that it can depend on properties of the bulk such as its geometry.

    \item \emph{Geometry reconstruction:} given (possibly many copies of) a CFT boundary state, how hard is it to estimate properties of the geometry of the dual gravitational system, promised a simple semiclassical geometrical dual exists? This is a simplified version of \emph{state reconstruction} -- where the goal is to produce the AdS state from the boundary state -- as we will focus on properties easy to compute for an AdS observer, such as the spacetime curvature in some subregion.
\end{itemize}

The goal of this paper is to study whether operator and geometry reconstruction may have different complexity in AdS/CFT, and to provide examples where geometry reconstruction is hard in toy models of AdS/CFT even outside of the analog of horizons.
We summarize this by the following question:

\begin{question*}
Are there scenarios in which operator reconstruction is easy but geometry reconstruction is hard? Are these scenarios relevant in real AdS/CFT, and what limits do they place on the ``quantum gravity in the lab'' paradigm? 
\end{question*}

At first glance, reconstructing operators vs states looks like two sides of the same coin, related by switching between the Heisenberg and Schr\"odinger picture.
However, using ideas from quantum cryptography, in particular pseudoentanglement and quantum fully homomorphic encryption, we will argue that the complexities of operator and geometry reconstruction can differ, even in scenarios that can reproduce some aspects of real AdS/CFT. 

\subsection{Holographic pseudoentanglement}
A core tenet of AdS/CFT is that the entanglement of CFT boundary states is dual to the geometry of their AdS bulk duals.
This is exhibited by the Ryu-Takayanagi (RT) formula which states that the entanglement entropy of the CFT state is related to the length of minimal geodesics in the AdS space~\cite{Ryu_2006}.\footnote{{Of course, it is well understood that knowing the geometry is not sufficient to compute the boundary entanglement in general, because in many cases one must use the \emph{quantum extremal surface} formula \cite{Engelhardt:2014gca}, which requires you know the bulk entropy. However, in this paper we will restrict our attention to setups in which the simpler Ryu-Takayanagi formula holds, i.e.~all boundary entropies are computed by the minimal \emph{area} surface in the bulk. We will argue that even when restricting to these situations, distinguishing the bulk geometry can be hard in toy models.}}
This connection seems odd from a computer science perspective, because entanglement is in general not an efficiently measurable property of quantum states.
In fact, it has recently been shown that even exponentially large gaps in entanglement can be cryptographically hard to detect, a phenomenon known as pseudoentanglement~\cite{aaronson2023quantumpseudoentanglement,gheorghiu2024estimatingentropyshallowcircuit}.
This suggests that geometry reconstruction might be difficult for generic states in AdS/CFT, as an efficient algorithm for reconstructing the geometry of a holographic state would provide an efficient method of  calculating the entanglement, which the existence of pseudoentangled states demonstrates is not possible in general.
However, this line of argument  falls short of showing hardness of geometry reconstruction.
This is because holographic states arising in the CFT duals of smooth AdS geometries have very particular entanglement structures.
In particular, they live in the ``holographic entropy cone'', which is a set of entropy inequalities imposed by the RT formula~\cite{Bao:2015bfa}.
Prior constructions of pseudoentanglement did not obey the RT formula for any geometry, and therefore this argument did not connect to AdS/CFT.

In this work we close this gap by constructing pseudoentanglement approximately within the holographic entropy cone. In particular we show that it can be hard to distinguish states which approximately obey the RT formula for arbitrarily different geometries:

\begin{theorem}[Informal] For any two bulk geometries $g_1,g_2$, there exist two ensembles of quantum states $\{\ket{\Psi}_k$,$\ket{\Phi}_k\}_{k \in \cK}$ such that
\begin{enumerate}
    \item The states $\ket{\Psi}_k$,$\ket{\Phi}_k$ are efficiently constructable by poly-size quantum circuits given the key $k$.
    \item The states $\ket{\Psi}_k$,$\ket{\Phi}_k$ approximately obey the RT formula for geometries $g_1,g_2$ respectively (for any choice of key $k$).
    \item No poly-time quantum algorithm can distinguish a random $\ket{\Psi}_k$ from a random $\ket{\Phi}_k$ given polynomially many copies of the state.
\end{enumerate}
\end{theorem}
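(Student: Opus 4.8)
The plan is to reduce the statement to a purely combinatorial problem about graph min-cuts and then solve that problem with a ``network'' of pseudorandom subset states, so that each of the two ensembles sits (approximately) inside the holographic entropy cone while the pair stays cryptographically indistinguishable. To make room for the hiding, I would take the CFT to live on $n$ \emph{blocks} of $m=\poly(n)$ qubits each and only ever discuss the RT formula for regions that are unions of whole blocks. Discretize each geometry $g_i$ ($i\in\{1,2\}$) into a planar weighted graph $G_i$ drawn in the bulk disk whose boundary vertices are the $n$ blocks, chosen fine enough that for every block-region $A$ one has $\mincut_{G_i}(A)=\tfrac{1}{4G}\,\mathrm{Length}(\gamma^{(i)}_A)+(\text{small error})$, with $\gamma^{(i)}_A$ the minimal bulk geodesic homologous to $A$; this standard bit-thread/tensor-network discretization of RT is what produces the word ``approximately'' in item~(2). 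It then suffices to build, for each $i$, an ensemble $\{\ket{\cdot}_k\}_{k}$ on the $nm$ qubits that is (1) $\poly(n)$-preparable given $k$, (2) has $S(\rho_A)=\mincut_{G_i}(A)\cdot\log 2$ for every block-region $A$ (for every $k$, or at least with overwhelming probability over $k$), with (3) the two ensembles $\poly(n)$-copy computationally indistinguishable.

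For the construction, put a ``fat bond'' on each edge $e$ of $G_i$: a bipartite state on a pair of registers, one stored at each endpoint, with Schmidt rank $2^{w(e)}$ ($w(e)$ the edge weight), realized as a subset state $\ket{\psi_{S_e}}$ whose support $S_e$ is the graph of a pseudorandom function, so that $|S_e|=2^{w(e)}$ exactly and $\ket{\psi_{S_e}}$ is $\poly$-preparable from $k$ \emph{no matter how large $w(e)$ is} (one prepares $2^{-w(e)/2}\sum_y\ket{y}\ket{h_k(y)}$). Contract the bonds meeting at each bulk vertex against a tensor that behaves, for the entropy count, like an isometry of the incident legs into the bulk: taking these tensors to be efficient approximate unitary $2$-designs lets one invoke the random-tensor-network RT theorem of Hayden et al.\ to get $S(\rho_A)=\mincut_{G_i}(A)\cdot\log 2$ up to small errors with high probability over $k$; alternatively, choosing absorbing/perfect tensors (or a suitably structured collection of subset states) keeps this equality exact and elementary and can be made to hold for every $k$. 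Either way, items (1) and (2) hold, and the key $k$ is simply the PRF seed used to generate all bonds and tensors.

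Indistinguishability (item~(3)) splits into two moves. First, a hybrid argument over the $\poly(n)$ bonds and bulk tensors shows that swapping every pseudorandom object for a truly random one of the same shape is undetectable to any $\poly(n)$-time algorithm with $\poly(n)$ copies, assuming post-quantum one-way functions. Second, once the objects are truly random, the two ensembles $\cE_1,\cE_2$ for $g_1$ and $g_2$ are \emph{statistically} $\negl(n)$-close on $\poly(n)$ copies: the $t$-copy density matrix of the ensemble of random subset states of a fixed size $2^{w}$ inside an $m$-qubit block depends only negligibly on $w$ throughout the range $\omega(\log n)\le w\le m-\omega(\log n)$ whenever $t=\poly(n)$, so the $t$-copy states built bond-by-bond from $G_1$ and from $G_2$ agree up to $\negl(n)$ even though their region entropies differ drastically --- this is precisely the mechanism behind existing pseudoentanglement lower bounds, reused edge-by-edge. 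Chaining the two moves for both ensembles gives item~(3).

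I expect the main obstacle to be keeping the second move valid for \emph{arbitrary} $g_1,g_2$: the argument needs every bond weight to lie in the safe window $\omega(\log n)\le w(e)\le m-\omega(\log n)$, whereas two very different geometries can force a bond weight down to $O(\log n)$ (e.g.\ the RT surface of a region consisting of a single block) or all the way up to $m$, and there the $t$-copy statistics of that bond genuinely leak its size. Handling this requires showing that any geometry can be represented, at the resolution of block-regions, by a discretization all of whose edge weights fall in the safe window --- morally, that a block-region of $m$ qubits can only ``see'' $m-\omega(\log n)$ qubits' worth of entanglement before pseudoentanglement kicks in --- and choosing $m=\poly(n)$ large enough to absorb the discretization error of both geometries simultaneously. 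A secondary check is that the subset-state network realizes $\mincut$ entropy for arbitrary unions of disjoint block-intervals, not merely single intervals, which is what places the ensembles inside the full holographic entropy cone rather than just matching the entropy-versus-length curve.
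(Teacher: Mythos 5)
Your proposal takes a genuinely different route from the paper's proof of this theorem. The paper's construction (Section~4, Theorem~4.1) does \emph{not} build the boundary state out of pseudorandom subset-state bonds; instead it takes an ordinary HQECC (random stabilizer tensors, all EPR bonds) for each geometry and then applies the two-layer brickwork pseudorandom unitary of Schuster--Huang--Haferkamp to the boundary. The brickwork PRU has a $\poly\log(n)$ lightcone, so it perturbs every region's entropy by at most $O(\poly\log n)$ (``approximate RT''), while simultaneously it makes \emph{any} input state look Haar random; the two ensembles are therefore each indistinguishable from Haar, hence from each other, with essentially no combinatorics at all. Your proposal is instead a variant of the paper's \emph{second} construction (Section~5, pseudoentangled link states), but it generalizes it in a direction the paper deliberately avoids.

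The generalization is where the real gaps are. The paper's link-state construction replaces \emph{one} edge (or the edges of a \emph{single} min-cut $\gamma$) with a pseudoentangled state, keeping all other bonds maximally entangled. Its proof of exact RT (Lemma~5.6 and the planar analogue) hinges on a path-covering decomposition that turns the tensor network into $(U_A\otimes U_{A^c})$ applied to a tensor product of EPR pairs and the one modified bond, so the entropy of the boundary state reduces to the entropy of that one bond. If you put a subset state on \emph{every} edge, this decomposition no longer holds: the perfect/random tensors are unitaries from any half of their legs to the other half \emph{with respect to maximally entangled inputs}, and feeding them subset states on all legs means you cannot peel the network down to a tensor product across the cut. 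Neither the paper's Lemma~5.6 nor the Hayden~et~al.\ random-tensor-network replica argument (both built around EPR bonds) gives you ``$S(\rho_A)=\mincut$'' as stated; this is a missing proof, not a routine invocation.

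Your indistinguishability argument also has a gap beyond the one you flagged. After the hybrid to truly random subset states, the two ensembles live on \emph{different graphs} $G_1,G_2$; statistical closeness of $t$-copy moments of random-subset-state ensembles of different sizes is a bond-by-bond statement on a \emph{fixed} topology, so you need a common refinement in which every edge has a weight in the safe window under \emph{both} geometries, and you then need to show that hybridizing weights edge-by-edge lands you on the target ensemble. That refinement step is not obviously possible (e.g.\ a region whose RT surface crosses $3$ fat edges in $g_1$ but $5$ in $g_2$ forces a topology change, not just a weight change), and even if it were, you would still have to show the tensor-network contraction sends bond-wise statistical closeness to boundary-state statistical closeness, which does not follow automatically because $A_T$ is a postselected (non-CPTP) map in general. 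The paper sidesteps all of this by restricting the link-state construction to a single min-cut, so that its Lemma~5.15's reduction is via an honest isometry $A_T$, and by relying on the PRU construction for the ``arbitrary $g_1,g_2$'' claim in the theorem you are proving. If you want your approach to work you would need, at minimum, a new proof of the RT formula for networks whose bonds are \emph{all} random subset states, and a precise version of the common-refinement-plus-hybrid argument; the paper has neither, and provides a much shorter path to the same statement.
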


In other  words, we can create holographic pseudoentangled states with any two geometries we desire.
For example, one ensemble could correspond to a bulk with a black hole, and the other without, and given copies of the CFT state, no algorithm can efficiently distinguish the two.
We also give a second construction of holographic pseudoentanglement, where the set of geometries spoofed is more limited, but the states exactly obey the RT formula. 
Our construction uses a discrete toy model of gravity based on tensor networks \cite{Pastawski_2015,Hayden_2016}.
We note that the concrete constructions we provide here suffer the same limitations of other tensor network toy models of AdS/CFT, and while our boundary states obey the Ryu-Takayanagi formula they do not necessarily have all other properties CFT states.
However, one of our constructions could be applied more generally, and could be applied to CFT states -- we use tensor network toy models here in order to present a concrete construction. 
Our constructions only require the existence of a post-quantum secure one-way function, a standard cryptographic assumption.

Our result shows that geometry reconstruction is computationally hard even for states which obey the RT formula. 
This result applies to any pair of generic bulk geometries; we do not need any particular geometrical structure such as black holes or wormholes inside the bulk to argue that geometry reconstruction is hard. 
This might, at first sight, appear to be in contradiction to results in AdS/CFT showing that bulk reconstruction should be easy in the absence of horizons or similar geometric obstructions~\cite{brown2019pythonslunchgeometricobstructions,Engelhardt_2022}. 
We argue that this apparent contradiction can be due to the distinction between operator reconstruction and geometry reconstruction, and more subtly due to a difference in the input-output model considered in the two questions.

In the existing literature, both types of reconstruction have been considered, but their difference has not been made explicit.
In~\cite{bouland2019computationalpseudorandomnesswormholegrowth} the authors consider the task of determining the volume of a wormhole in AdS given access to the boundary CFT.
This is an example of geometry reconstruction in our language.
It was shown that this task is intractable on a quantum computer under plausible cryptographic assumptions by relating it to breaking quantum pseudorandomness constructions.
This led to a number of works studying when implementing the dictionary \emph{is} efficient~\cite{susskind2020blackholesexptime,susskind2020horizonsprotectchurchturing,engelhardt2024cryptographic,Engelhardt_2022}.
However, these works generally considered operator reconstruction.
Indeed, there came to exist a powerful, plausible conjecture about operator reconstruction, called the \emph{Strong Python's Lunch Conjecture}, which delineates exactly which AdS operators have high complexity reconstructions~\cite{brown2019pythonslunchgeometricobstructions,Engelhardt_2021}.
See \cref{sec:pythons lunch} for an extended discussion of this conjecture.
It turns out~\cite{Engelhardt_2022} that the Python's Lunch conjecture agrees with~\cite{bouland2019computationalpseudorandomnesswormholegrowth} -- apparently both operator and state reconstruction are hard inside the wormholes of~\cite{bouland2019computationalpseudorandomnesswormholegrowth}.
This sharpens the question: do operator and state reconstruction \emph{always} have the same complexity in AdS/CFT?

Our pseudoentanglement result suggests that the answer might be no: geometry reconstruction might be hard in cases even where operator reconstruction is easy.
Another way of looking at this distinction is that geometry and operator reconstruction consider different input-output models.

In operator construction, the goal is to \emph{implement} the boundary dual of a bulk operator, but the algorithms typically assume one is given as input substantial information about $\mathcal{H}_\mathrm{code}$, often even restricting to one particular fixed geometry.
 For example, the HKLL bulk reconstruction \cite{Hamilton_2006} assumes the geometry is known and fixed. 
 Similarly, in Python's lunch examples in tensor networks, one assumes the tensor network (and hence geometry, and even the values of the tensors) is given as input, and Python's lunch is formulated in studying the complexity of ``pushing'' operators on the bulk to the boundary through the tensor network.

In contrast, an algorithm for geometry reconstruction takes as input (polynomially many copies of) an $n$-qubit CFT state with an unknown geometry, and the entire goal is to learn the geometry of the corresponding state.  
This viewpoint is trying to generalize the tensor network toy model of gravity \cite{Pastawski_2015} to be one step closer to real AdS/CFT, where the dictionary should hold not just for one geometry, but for many.
In the tensor network toy model, this is akin to not knowing the geometry of the tensors in advance, and perhaps even their values. 
We argue that geometry reconstruction is a relevant question for quantum gravity in the lab, where the goal should be to simulate CFT states with a quantum computer for which \emph{we have little information about the dual bulk state ahead of time} -- if geometry reconstruction is hard, the quantum computer will be greatly limited in what \emph{new information} it can tell us.

\subsection{Intuition: separating state vs. operator complexities via FHE}

Before describing our construction of holographic pseudoentanglement, it will be helpful to first build some intuition for how  operator questions can be easy, but state questions can be hard. 
This is counterintuitive as one usually thinks of state and operators as being on the same footing.
An illustrative example comes from \emph{homomorphic encryption}.
The goal of fully homomorphic encryption (FHE) is to encrypt data in such a way that an adversary cannot read the data, but nonetheless can perform computations on it, i.e.~for any function $f$ transform an encryption of a string $x$ into an encryption of the string $f(x)$, without ever knowing $x$.
Classical FHE was famously constructed by Gentry~\cite{gentry2009fully}, and Mahadev has constructed a quantum version of FHE~\cite{Mahadev:2017bwz}.
More precisely, in a \emph{quantum} fully homomorphic encryption (QFHE) scheme, an encoding circuit $V_k$ (indexed by some key $k$) is applied to a quantum state $\ket{\psi}$.
Then, anyone who has access to the state $V_k \ket{\psi}$, but does not know the key $k$, cannot efficiently compute information about the original state $\ket{\psi}$; the state $\ket{\psi}$ has been encrypted.
The special property of a \emph{homomorphic} scheme is that nevertheless, someone with access to $V_k\ket{\psi}$ (and not $k$ itself) can still efficiently apply operations to the encoded state.
That is, for any $U$ we wish to apply to $\ket{\psi}$, it is easy to apply a $\widetilde{U}$ such that
$$
    \widetilde{U} V_k \ket{\psi} = V_k U \ket{\psi}. 
$$
The interesting property of this scheme is that we have enough knowledge about $V_k$ to efficiently apply operators, but not enough knowledge to efficiently learn about the underlying state.

The key point is that the problem of applying a unitary $U$ homomorphically on the encrypted state looks \emph{precisely} like the problem of operator reconstruction, where the AdS/CFT dictionary $V$ is playing the role of the encrypting map. This is, of course, not a perfect equivalence. Traditionally, it is assumed that the holographic map $V$ is some linear map that is fully known, i.e.~there is no secret key $k$. However, we cannot simply use a known $V_k$ from a QFHE scheme while keeping state reconstruction hard. We fix this issue in section \ref{sec:FHE_holo_maps}, where we use QFHE to construct a (fully known) linear map $V$ with the properties that state reconstruction is hard while operator reconstruction is easy for many operators.

This FHE construction provides a clean separation between the complexity of operator versus state reconstruction in the general case. However, it only does so in the \emph{single copy} setting.
Known examples of QFHE schemes are not secure against having multiple copies -- access to multiple copies of the state gives the power to make state reconstruction \emph{easy}.
The multi-copy setting is more relevant to the quantum gravity in the lab paradigm, as one could prepare multiple copies of the CFT state on a quantum computer. 
For this reason we have also included holographic pseudoentanglement constructions, which \emph{are} secure against multiple copies. However, the gap in complexity between state and operator reconstruction does not manifest as clearly in the holographic pseudoenanglement constructions. 
In \cref{sec:pythons lunch} we discuss in detail the gap in complexity between state and operator reconstruction in these constructions.

\subsection{Proof sketch for holographic pseudoentanglement}

We give two constructions of holographic pseudoentanglement.
The first, \emph{holographic pseudoentanglement from low-entangling pseudorandom unitaries (PRUs)} (\cref{sec:pe_pru}), is more flexible in the sense that it works for arbitrary geometries, i.e., we can use it to show that the geometry reconstruction problem is hard for any pair of geometries $g_1, g_2$.
The second, \emph{holographic pseudoentanglement from pseudoentangled link states} (\cref{sec:pe_link}) only hides more minor differences in geometry, but has the advantage that results in states that exactly obey the Ryu-Takayanagi formula, while the PRU example results in an approximate version of the formula.
The latter construction can also be made public-key, i.e., it remains secure if the state preparation circuits are known.
Both constructions are very simple.
Here, we give a brief overview and refer to \cref{sec:pe_pru} and \cref{sec:pe_link} for details.

\paragraph{Holographic pseudoentanglement from low-entangling PRUs.}
There exist many tensor network constructions of quantum states with (approximate) RT entanglement scaling, for example from perfect tensors, random tensor networks, or Clifford tensor networks~\cite{Pastawski_2015,Hayden_2016,Apel_2022}.
We will start from a pair of states arising from such constructions for two different geometries $g_1, g_2$.
Indeed, for us it is not even necessary that these ``starting states'' be constructed using a tensor network -- this merely serves to make the construction concrete, but we can use any pair of CFT states (modelled as a quantum state of $N$ systems with local dimension $d$) as our starting states.

We now want to hide the difference between these two states without altering their geometry too much, i.e.,~we want to apply some operation to these states that makes them indistinguishable, but preserves their RT entanglement structure.
For this, we rely on a recent result by Schuster, Huang, and Haferkamp~\cite{brickwork_t_design}, who proved that a two-layer brickwork arrangement of ``small'' Haar random unitaries is a good approximation to a ``big'' Haar random unitary.
They then observed that if one replaces the Haar random unitaries with pseudorandom unitaries (PRUs) on polylogarithmically many qubits (which have been recently constructed~\cite{metger2024simpleconstructionslineardepthtdesigns,chen2024efficientunitarydesignspseudorandom}), then one obtains PRUs with polylogarithmic circuit depth.\footnote{A PRU is an ensemble of unitaries that is efficiently implementable but computationally indistinguishable from a Haar random unitary -- see \cref{sec:prus} for a rigorous definition.}
As a matter of fact, these PRUs are not merely low-depth, but also low-entangling: due to their brickwork structure, the lightcone of any qubit is only polylogarithmically large.
This allows us to bound the change in entanglement structure produced by these PRUs in a straightforward manner and show that the states after applying the brickwork PRU still approximately satisfy the RT formula.

The final construction is shown in \cref{fig:hqecc pru}: we start from any two geometries, use tensor networks to obtain states whose entanglement structure obeys the RT formula for the chosen geometry, and then hide the difference between the two states by applying a 2-layer brickwork arrangement of polylogarithmically sized PRUs. 

\paragraph{Holographic pseudoentanglement from pseudoentangled link states.} For our second construction, we start with a \emph{tree} tensor network consisting of perfect tensors. Here, a perfect tensor (see~\cref{sec:prelim:perfect-tensors} for a formal definition) is a tensor with an even number of legs that acts as a unitary from any set of half its legs to the complement set. 
Let the tree $T$ have $n$ leaves (i.e.~boundary nodes) and local dimension $d = n^{\omega(1)}$.

In order to construct two different (computationally indistinguishable) geometries from this tree tensor network we will replace one of the links in the tensor network with a state from a \emph{pseudoentangled state ensemble}.
A pseudoentangled state ensemble is a pair of ensembles of quantum states, $\caD_{\sf low}$ and $\caD_{\sf high}$, such that all states from $\caD_{\sf low}$ (resp.~$\caD_{\sf high}$) have low (resp.~high) von Neumann entropy across a cut, and no poly-time quantum algorithm given polynomially many copies of a state can determine which distribution it was drawn from (see \cref{sec:pseudoentanglement} for a rigorous definition).
The pseudoentangled state ensemble we use is a bipartite quantum system of two $d$-dimensional qudits from \cite{aaronson2023quantumpseudoentanglement}, where $\caD_{\sf low}$ (resp.~$\caD_{\sf high}$) have von Neumann entropy $\frac{\log d}{2}$ (resp.~$\log d$) across the cut.

Fixing an edge $e \in T$, for each ensemble $\caD \in \{ \caD_{\sf low}, \caD_{\sf high} \}$, we construct a distribution of holographic states by ``inserting'' a new tensor $T_\psi$ for $\psi \sim \caD$ on the edge $e$ of $T$. Here, we view the $2$-qudit state $\psi$ as a $2$-leg tensor $T_\psi$ with local dimension $d$. Let $\caT_{\sf low}$ and $\caT_{\sf high}$ be the corresponding distributions of tensor network states. 

By a standard reduction, we can show that if a polynomial-time quantum algorithms can distinguish between $\caT_{\sf low}$ and $\caT_{\sf high}$, then it can also distinguish between $\caD_{\sf low}$ and $\caD_{\sf high}$. Since the latter two are indistinguishable, $\caT_{\sf low}$ and $\caT_{\sf high}$ are indistinguishable as well.

Let $T_{\sf high}$ (resp.~$T_{\sf low}$ ) be the weighted version of tree $T$ with each edge having weight $\ln d$ (resp.~$\frac{1}{2} \ln d$).
To show that states from $\caT_{\sf high}$ (resp.~$\caT_{\sf low}$) satisfy the RT formula exactly, we need to show that for every bipartition of the $n$ leaves (boundary states) into set $S$ and $[n] \setminus S$ the entanglement entropy of this cut is given by the RT formula. For the purpose of proof, we can imagine that we have also inserted new tensors representing the maximal mixed state $\sum_{i \in [D]} \spz{i}\spz{i}$ on every edge of $T$ except $e$, on which we already inserted a tensor drawn from the pseudoentangled state ensemble. This does not change the ensembles $\caT_{\sf low}$ and $\caT_{\sf high}$.  
We then follow a similar argument from~\cite{Pastawski_2015} based on max-flow min-cut theorem, and show that by cleverly constructing a path-covering of the tree (see \cref{sec:pe_link-tree-RT}), for any state from either $\caD_{\sf low}$ or $\caD_{\sf high}$, we can take the bi-partite states sitting on the min-cut between $S$ and $[n] \setminus S$ as input, and convert them into the whole state without changing the entropy across the cut. This proves the RT formula because the entropy on the min-cut is exactly the minimum cut between $S$ and $[n] \setminus S$ on $T_{\sf high}$ or $T_{\sf low}$.

\subsection{Discussion \label{sec:discussion}}

In this work we have constructed examples of states that satisfy the RT formula for radically different hyperbolic geometries, but which are computationally hard to distinguish from one another. 
There is no clear need for horizons in our construction, so this opens the possibility that geometry reconstruction might be exponentially intractable even outside of event horizons, due to the generic relation between entanglement and geometry. 
Our work does not settle this question, and more work needs to be done to make our construction more physically relevant.
First, our construction produces states which would be high energy in the CFT, as it does not take into account the CFT Hamiltonian. 
Applying a pseudorandom unitary to our states would necessarily boost their energy to something close to the Haar average.
Generically this might result in the formation of a black hole if the state is time evolved, as that ring of energy in the AdS could collapse into a black hole.
A natural open question is if our pseudoentanglement construction can be improved so that the pseudoentangled boundary state is also low-energy with respect to a given Hamiltonian.
One possible way to do this would be to make a version of a pseudoentangling pseudorandom unitaries which preserves the low energy subspace of a Hamiltonian:
\begin{question}
    Are there low-energy pseudoentangling PRUs? That is, given a local Hamiltonian $H$ and an energy cutoff $E$, does there exist a PRU construction that maps low-energy states to other low-energy states without dramatically altering the entanglement structure of the input state?
    Here, the security requirement of the PRU needs to be weakened so that it is only required to look Haar random within the low-energy subspace.
\end{question}
If such a PRU is possible, it would immediately create low-energy pseudoentangled CFT states, which would then not create horizons in AdS/CFT.
This would seriously question whether the ``quantum gravity in the lab'' paradigm could shed light on quantum gravity, even in situations without horizons. 
It would also open the question of what characterizes the hardness of geometry reconstruction.
In operator construction, exponential complexity is characterized by geometrical features of the bulk.
What is the analogous condition for geometry reconstruction?

Additionally, there is the question of how to interpret our construction from the Python's lunch perspective; we discuss this in detail in \cref{sec:pythons lunch}.
This turns out to be quite subtle, again due to incomparable input/output models which make translating a result from one setting to another tricky.
A related issue is whether the key in our pseudoentanglement constructions is public or private. 
As well as being pertinent to the Python's lunch discussion (see \cref{sec:pythons lunch} for details), the reliance of a holographic pseudoentanglement construction on a private key would weaken the link with AdS/CFT as it is typically assumed that the holographic map is completely known.
Our holographic pseudoentanglement construction from pseudoentangled link states can be instantiated using public key pseudoentanglement constructions (see \cref{sec:pe_link-planar-public-key}). 
However, the holographic pseudoentanglement construction from low-entangling PRUs  currently requires a private key.
Constructing a public key holographic pseudoentanglement scheme which can hide large differences in bulk geometry would strengthen the link with AdS/CFT and shed more light on the relationship between this work and the Python's lunch conjecture:
\begin{question}
    Can one create public-key holographic pseudoentanglement which hides large differences in the bulk geometry? 
    \end{question}

On the cryptography side, our argument based on FHE is restricted to a single copy because e.g.~the FHE scheme \cite{Mahadev:2017bwz} relies on the quantum one-time pad, which is a unitary one-design. 
This is necessary if one is considering the strongest form of security where \emph{no} properties of the original quantum state are accessible to observers who have multiple copies of the encoded state and the ability to apply \emph{arbitrary} operators to it homomorphically.
This follows because if one is given two copies of a state $V_k\ket{\psi}$, one can, for example, estimate expectation values of a binary operator $O_B$ via performing the operator homomorphically on one copy as $\tilde{O}_BV_k\ket{\psi} = V_k O_B\ket{\psi}$ and applying the SWAP test with the other copy of $V_k\ket{\psi}$.
Therefore it is not possible to hide every property of the encoded state in a multi-copy QFHE scheme that still allows homomorphic evaluation of \emph{all} operators. 
However, AdS/CFT does not exhibit the strongest form of homomorphic encryption because operator reconstruction depends on the bulk geometry, and hence on properties of the state that is being encoded (see \cref{sec:pythons lunch} for a discussion of how boundary duals of bulk operators are reconstructed and the dependence on the bulk geometry).
This raises a question about the existence of a variant of a quantum FHE scheme where the encryption is not ``fully'' homomorphic, but instead the homomorphic evaluation depends on some coarse-grained properties of the state:
\begin{question}
    Does there exist a quantum homomorphic encryption scheme which is multicopy secure for hiding some properties of the state, but where the homomorphic evaluation of operators can depend arbitrarily on some coarse-grained properties of the encrypted state?
\end{question}

A more open ended avenue for future research is to explore the relationship between AdS/CFT and FHE. 
It has previously been argued that AdS/CFT should be viewed as a quantum error correcting code (QECC)~\cite{Almheiri_2015}.
Could AdS/CFT also be an example of a (weakened kind of) FHE scheme?
Somewhat provocatively:
\begin{question}
    Does AdS/CFT=FHE?
\end{question}
In fact, Gottesman has recently discussed an intriguing conceptual connection between FHE and black holes \cite{aaronsonblog2022}, and our construction shows the relationship might be made more direct.
Indeed there exist examples of combined QECC and homomorphic encryption schemes~\cite{ouyang2022generalframeworkcompositionquantum,sohn2024errorcorrectableefficientquantum}, which strengthens the possibility.

\paragraph{Acknowledgements.}
We thank Netta Engelhardt, Andru Gheorghiu, Patrick Hayden, Henry Lin, Geoff Pennington, Renato Renner, Arvin Shahbazi-Moghaddam, Leonard Susskind, Douglas Stanford, and Lisa Yang for helpful discussions.
This work was done in part while the authors were visiting the Simons Institute for the Theory of Computing, supported by DOE QSA grant \#FP00010905.
C.A.~was supported by the Heising-Simons Foundation via Grant 2024-4848.
A.B.~and T.K.~were supported in part by the DOE QuantISED grant DE-SC0020360.  
A.B.~was supported in part by the U.S. DOE Office of Science under Award Number DE-SC0020377 and by the AFOSR under grants FA9550-21-1-0392 and FA9550-24-1-0089.
T.K.~is supported in part by SLAC (Q-NEXT) and QFARM. 
L.C.~is supported by a Miller Research Fellowship.
T.M.~acknowledges support from SNSF Grant No.~20CH21\_218782 and the ETH Zurich Quantum Center.
U.V.~was supported in part by DOE NQISRC QSA grant FP00010905, NSF QLCI Grant No.~2016245, and MURI Grant FA9550-18-1-0161.

\paragraph{Independent concurrent work.}
We note that independent work of Cheng, Feng and Ippoliti ~\cite{matteoPaper} and Engelhardt et al.~\cite{engelhardt2024spoofingentanglementholography} have also obtained pseudoentanglement constructions via different techniques. In particular~\cite{matteoPaper} produces a construction based on standard cryptographic assumptions using tensor network states, and~\cite{engelhardt2024spoofingentanglementholography} produces a construction based on the heuristic assumptions of~\cite{bouland2019computationalpseudorandomnesswormholegrowth}. We view these as complementary to our results.

\section{Preliminaries}

In this section, we introduce the necessary preliminaries for this work. We begin by introducing some notation. 

\paragraph*{Notation.} We use $\N$ and $\posN$ to denote the set of all non-negative integers and the set of all positive integers, respectively. For a set $S$, we use $\Haar(S)$ to denote the Haar measure on it. 

\newcommand{\rmS}{\mathrm{S}}
\newcommand{\rmD}{\mathrm{D}}
\newcommand{\rmU}{\mathrm{U}}
\newcommand{\rmL}{\mathrm{L}}

For a Hilbert space $\H$, we use $\rmS(\H)$, $\rmD(\H)$, $\rmU(\H)$, and $\rmL(\H)$ to denote the set of pure quantum states, density operators, unitary operators and bounded linear operators on $\H$, respectively. For a (continuous) distribution $\caD$, we often write $\Ex_{x \sim \caD} [f(x)]$ to denote $\int_{x \sim \caD} f(x) dx$ for brevity.

For a unitary $U \in \rmU(\H)$, we use $U^{\dagger t}$ to denote $\left(U^\dagger\right)^{\otimes t}$ for simplicity.

\paragraph*{Graphs and min-cuts.} For a weighted graph $G = (V,E)$ and two disjoints sets $A,B \subseteq V$, we write $\mincut_{A,B}(G)$ as the min-cut between $A$ and $B$ in $G$. 

Formally, we say a set of edges $\gamma \subseteq E$ is a \emph{cut} between $A$ and $B$ on $G$ if $A$ and $B$ are disconnected in $G$ after removing $\gamma$. Slightly abusing the notation, we also say a set $A \subseteq S \subseteq (V \setminus B)$ is a \emph{cut} between $A$ and $B$ on $G$. More formally, for every set $S \subseteq V$, we define $\Weight_G(S)$ as the total weights of edges with exactly one end-points contained in $S$, and we set
\[
\mincut_{A,B}(G) = \min_{A \subseteq S \subseteq (V \setminus B)} \Weight_G(S).
\]

\subsection{Pseudorandomness and pseudoentanglement}\label{sec:pseudoentanglement}

Now we define the central notion of this work: pseudoentangled holographic state ensembles (PES).

\newcommand{\caK}{\mathcal{K}}
\newcommand{\sfK}{\mathsf{K}}

\paragraph*{Pseudoentangled holographic states.} In~\cite{aaronson2023quantumpseudoentanglement} a pseudoentangled state ensemble (PES) is defined as two ensembles of quantum states that are (1) computationally indistinguishable and (2) with high probability, a state drawn from one ensemble has a high entropy across every cut and a state drawn from another one has low entropy across every cut. In this work, we will consider states that (approximately) satisfy the RT formula, which motivates the following definition.

For a weighted graph $G$ with at least $n$ vertices, we say an $n$-qudit quantum state $\rho$ \emph{has holographic entropy structure $G$} (or, it satisfies the RT-formula with respect to a weighted graph $G$), if for every $A \subseteq [n]$, it holds that\footnote{Here, we should think of the vertices $1,\dotsc,n$ corresponds to the boundary nodes, and the rest of vertices correspond to the bulk node.}
\[
S_{A}(\rho) = \mincut_{A,[n] \setminus A}(G),
\]
where $ \mincut_{A,[n] \setminus A}(G)$ denotes the minimum cut between $A$ and $[n] \setminus A$ in $G$. We also say a distribution $\caD$ of $n$-qudit quantum states have holographic entropy structure $G$ if all $\rho$ in the support of $\caD$ has holographic entropy structure $G$.

We can also relax the requirement for an ensemble of quantum states by only asking the entropy to be approximated by $\mincut_{A,[n] \setminus A}(G)$ with high probability. 

Formally, we say a distribution $\caD$ of $n$-qudit quantum states \emph{has holographic entropy structure approximated by $G$} (or, it satisfies the RT-formula approximately with respect to a weighted graph $G$), if for every $A \subseteq [n]$ and every $\tau \in (0,1)$ it holds that with $1-\tau$ probability over $\rho \sim \caD$, 
\[
S_A(\rho) \ge \mincut_{A,[n] \setminus A}(G) \cdot (1 - o(1)) - \ln\tau^{-1}\mathdot\footnote{
	In particular, if $\mincut_{A,[n] \setminus A}(G) \ge \omega(\ln n)$ (in our work, each edge always has weight $\ln q > \ln n$, so this means the cut has super-constant edges), then we can pick $\tau = n^{-\omega(1)}$ and it follows that for every $A \subseteq [n]$ it holds that with $1-n^{-\omega(1)}$ probability over $\rho \sim \caG$ that $S_A(\rho) \ge \mincut_{A,[n] \setminus A}(G) \cdot (1 - o(1))$.}
\]

Then, we are ready to state our more general definition of pseudoentangled state ensemble (PES) with holographic entropy structure $G$ vs $H$.

\begin{definition}[Pseudoentangled holographic states with entropy structure $G$ vs $H$]
	Let $\lambda$ be the security parameter. Let $\H = \{ \H_{\lambda} \}_{\lambda \in \posN}$ and $\sfK = \{ \sfK_{\lambda} \}_{\lambda \in \posN}$ be a family of Hilbert spaces and a family of key spaces. Let $G = \{G_\lambda\}_{\lambda \in \posN}$ and $H = \{H_\lambda\}_{\lambda \in \posN}$ be two families of weighted graphs. Two keyed families of quantum states $\{\spz{\Phi}_{k} \in \rmS(\H)\}_{k \in \sfK}$ and $\{\spz{\Psi}_{k} \in \rmS(\H)\}_{k \in \sfK}$ (parameterized by $\lambda$) form a pseudoentangled holographic state ensemble (PES) with exact (resp.~approximate) entropy structure $G$ vs $H$, if the following three conditions hold:
	
	\begin{enumerate}
		\item There is a polynomial-time quantum algorithm $G_{\Phi}$ (resp.~$G_{\Psi}$) that generates state $\spz{\Phi_k}$ (resp.~$\spz{\Psi_k}$) on input $k \in \sfK$.
		
		\item The state ensemble $\{\spz{\Phi}_{k} \in \rmS(\H)\}_{k \in \sfK}$ has entropy structure (resp.~approximated by) $G$, and the state ensemble $\{\spz{\Psi}_{k} \in \rmS(\H)\}_{k \in \sfK}$ has entropy structure (resp.~approximated by) $H$.
		
		\item For any polynomial $m \le \poly(\lambda)$, and any polynomial-time quantum algorithm $A$, it holds that
		\[
		\left| \Pr_{k \leftarrow \sfK_{\lambda}}\left[ A\left(\spz{\Psi_k}^{\otimes m}\right) = 1\right] - \Pr_{k \leftarrow \sfK_{\lambda}}\left[ A\left(\spz{\Phi_k}^{\otimes m}\right) = 1\right] \right| \le \negl(\lambda)\mathcomma
		\]
	\end{enumerate}

\end{definition}

\subsection{Holographic quantum error correcting codes}

Holographic quantume error correcting codes are toy models of AdS/CFT built out of tensor networks.

\begin{definition}[Holographic quantum error correcting code (HQECC), modified from~\cite{Pastawski_2015}] 
Consider a tensor network which is embedded in a tessellation of $\mathbb{H}^2$ by some Coxeter polytope. The tensor network is called a holographic quantum error correcting code if it gives rise to an isometric map from uncontracted bulk legs to uncontracted boundary legs.
\end{definition}

There are numerous examples of HQECC built out of perfect~\cite{Pastawski_2015}, random~\cite{Hayden_2016}
 and random stabilizer~\cite{Apel_2022}.
 In the remainder of this section we define each type of tensor and collect key facts about them.
 
\subsubsection{Perfect Tensors}\label{sec:prelim:perfect-tensors}

For an even $n$, an $n$-index tensor $T_{a_1,\dotsc,a_{n}}$ is a \textbf{perfect tensor} if, for any bipartition of its indices into a set $A$ of size $n/2$ and its complement $A^c$, $T$ is a unitary transformation from $A$ to $A^c$ (after normalization). 

\subsubsection{Random tensors}
\label{Random tensors}

Random tensors can be generated via random states on the respective Hilbert space. To obtain the random state $\ket{\phi} = U\ket{0}$, start from an arbitrary reference state, $\ket{0}$, and apply a random unitary operation, $U$. The average over a function of the random state, $f(\ket{\phi})$, is given by integration over the unitary group, $U$, with respect to the Haar measure
\begin{equation}
	\langle f(\ket{\phi}) \rangle = \int_{\mathcal{U}(d)} f(\ket{\phi}) dU.
\end{equation}
The Haar probability measure is a non-zero measure $\mu$ such that if $h$ is a probability density function on the group $G$, for all $S\subseteq G$ and $g\in G$:
\begin{equation}
	\mu(gS) = \mu(Sg) = \mu(S),
\end{equation}
where
\begin{equation}
	\mu(S)  := \int_{g\in S} d\mu (g) = \int_{g\in S} h(g) dg, \qquad \mu(G)  := 1.
\end{equation}
A unique Haar measure exists on every compact topological group, in particular the unitary group.

\subsubsection{Random stabilizer tensors}
\label{Random stabilizer tensors}

Random \textit{stabilizer} tensors are analogously generated by uniformly choosing \textit{stabilizer} states at random. In this case the reference state is chosen as a stabilizer state $\ket{\tilde{\psi}}$, stabilized by $S$, and instead of a random unitary, a random Clifford unitary, $C$, is applied to generate the random stabilizer state $\ket{\psi} =C \ket{\tilde{\psi}} $. Since elements of the Clifford group map the Pauli group to itself under conjugation, the resulting state is stabilized by $S' = C S C^\dagger$:
\begin{subequations}
	\begin{align}
		C P C^\dagger \ket{\psi} & = C P C^\dagger C \ket{\tilde{\psi}}\\
		& =  C P \ket{\tilde{\psi}}\\
		& =  C \ket{\tilde{\psi}}\\
		& = \ket{\psi}
	\end{align}
\end{subequations}
In the case of qudits of prime dimension the same procedure is followed for generating random stabilizer tensors, substituting for the generalised Pauli and Clifford operators.

\begin{theorem}[Random stabilizer tensors are perfect~\cite{Apel_2022}]
	\label{thm Random stabilizer tensors are perfect}
	Let the tensor $T$, with $t$ legs, describe a stabilizer state $\ket{\psi}$ chosen uniformly at random where each leg corresponds to a prime $p$-dimensional qudit. The tensor $T$ is perfect with probability
	\begin{equation} \label{high_prob_eqn}
		P \geq  \max \left\{ 0, 1 - \frac{1}{2p^b}{t\choose \floor{t/2}}\right\}
	\end{equation}
	in the limit where $p$ is large, where $0< b \leq 1$.
\end{theorem}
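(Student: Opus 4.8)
The plan is to reduce perfectness to a statement about all \emph{balanced} bipartitions of the $t$ legs, and then to control each bipartition separately by a second-moment (purity) computation that exploits the fact that the prime-$p$ Clifford group is a unitary $2$-design; a union bound over bipartitions then gives the claimed probability. The starting point is the standard structural theory of entanglement in the stabilizer formalism: if $\ket\psi$ on $t$ qudits has abelian stabilizer group $S$ and $A$ is a set of legs with $|A| = k$, then the marginal is flat, $\rho_A = \tfrac{|S_A|}{p^k}\,\Pi_A$, where $S_A \leq S$ is the subgroup of stabilizer elements supported (up to a scalar) only on $A$ and $\Pi_A$ is a projector of rank $p^k/|S_A|$; this is because partial-tracing a generalized Pauli over $A^c$ vanishes unless it is trivial on $A^c$. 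In particular $\tr{\rho_A^2} = |S_A|/p^k$, so $\rho_A$ is maximally mixed iff $S_A = \{\1\}$. Since a bipartite pure state with full-rank maximally mixed marginals on \emph{both} factors is locally equivalent to the maximally entangled state, $T$ then acts as a unitary from $A$ to $A^c$ after normalization, and marginals of smaller subsets follow by further partial tracing. Hence $T$ is perfect iff $S_A = \{\1\}$ for every balanced $A$, i.e.\ every $A$ with $|A| = \lfloor t/2\rfloor$.

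Next I would fix one such balanced $A$ with $|A| = k = \lfloor t/2\rfloor$ and estimate $\pr{S_A \neq \{\1\}}$. A uniformly random stabilizer state is distributed as $C\ket{0}$ for a Haar-random $C$ in the prime-$p$ Clifford group, and I would invoke the known fact that this group is a unitary $2$-design; since the purity $\tr{\rho_A^2}$ is a degree-$2$ function of the state, its expectation equals its Haar value $\tfrac{p^k + p^{t-k}}{p^t+1}$. Combined with the flatness identity, $\E[|S_A|] = p^k\cdot\tfrac{p^k+p^{t-k}}{p^t+1}$, which is strictly less than $2$ for $k=\lfloor t/2\rfloor$. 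Because $|S_A| \in \{1\}\cup\{p^j : j\ge 1\}$, a Markov-type argument then gives
\[
\pr{S_A \neq \{\1\}} \;\leq\; \frac{\E[|S_A|]-1}{p-1} \;<\; \frac{1}{p-1}\mathdot
\]

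Finally I would union-bound over the $\tfrac12\binom{t}{\lfloor t/2\rfloor}$ balanced bipartitions $\{A,A^c\}$ (noting the failure events for $A$ and $A^c$ coincide), obtaining $\pr{T\text{ not perfect}} < \tfrac{1}{2(p-1)}\binom{t}{\lfloor t/2\rfloor}$. Together with the trivial bound $P\ge 0$ this is the stated estimate in the large-$p$ limit, the mild slack between $\tfrac{1}{p-1}$ and $\tfrac{1}{p^b}$ being absorbed into the freedom $0<b\le 1$; a sharper single-bipartition bound (e.g.\ computing the exact distribution of $\log_p|S_A|$ for a random stabilizer state) would tighten the constant.

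The step carrying the real content is the single-bipartition estimate, and in particular the input that the \emph{qudit} Clifford group at prime dimension is a unitary $2$-design --- this is precisely what lets one replace a combinatorial count over maximal isotropic subspaces of $\F_p^{2t}$ (those meeting the coordinate subspace of $A$ nontrivially) by a one-line Haar computation; a fully self-contained proof would instead have to perform that count, which is considerably more involved. A secondary point to handle carefully is the odd-$t$ case and the precise meaning of ``perfect'' there (isometry from any $\lfloor t/2\rfloor$ legs to the complement), together with the bookkeeping of the factor $\tfrac12$ and of the exponent $b$.
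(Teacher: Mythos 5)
The paper does not actually prove this theorem — it is cited from~\cite{Apel_2022} and used as a black box — so there is no in-paper argument to compare against. Your proof, however, is correct and complete modulo the facts you explicitly invoke, and it is a genuinely different (and cleaner) route than the explicit subspace-counting argument one finds in the stabilizer-tensor literature. The reduction to balanced bipartitions is right (both directions, with max-mixedness propagating to smaller subsets by further partial trace, and with perfectness only defined for even $t$); the flatness identity $\rho_A = \frac{|S_A|}{p^{|A|}}\Pi_A$ and hence $\tr{\rho_A^2} = |S_A|/p^{|A|}$ is standard; the prime-$p$ Clifford group is indeed a unitary $2$-design, so the second-moment purity $\E\bigl[\tr{\rho_A^2}\bigr] = \tfrac{p^k + p^{t-k}}{p^t+1}$ transfers from Haar; and the observation that $|S_A| \in \{1,p,p^2,\dotsc\}$ is exactly what makes the Markov bound $\pr{|S_A|>1} \le \tfrac{\E[|S_A|]-1}{p-1} < \tfrac{1}{p-1}$ effective rather than vacuous. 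This gives $b = 1-o(1)$, which is the strongest reading of the stated $0 < b \le 1$. Compared with counting maximal isotropic subspaces of $\F_p^{2t}$ meeting a fixed coordinate subspace nontrivially, the $2$-design argument buys a one-line purity computation and avoids $q$-binomial bookkeeping; what it loses is the exact distribution of $\log_p |S_A|$, which the counting approach yields.

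Two small points to tighten. First, the factor $\tfrac12$ from identifying $A$ with $A^c$ is only valid when $t$ is even; for odd $t$ the $\binom{t}{\lfloor t/2\rfloor}$ subsets of size $\lfloor t/2\rfloor$ are pairwise non-complementary and the union bound is over all of them without the $\tfrac12$. Since perfect tensors are defined for even $t$ in this paper, this is cosmetic, but worth a sentence. Second, you should explicitly cite that the uniform ensemble of prime-$p$ stabilizer states (equivalently, the prime-$p$ Clifford group) is a projective/unitary $2$-design; this is a nontrivial imported fact and the load-bearing input, as you rightly flag.
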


\section{Holographic maps with homomorphic encryption}\label{sec:FHE_holo_maps}

Homomorphic encryption demonstrates that the complexities of geometry reconstruction and operator reconstruction can differ. 
Those schemes, however, involve a secret key that is unknown to the person trying to reconstruct the logical information. 
It is unclear what role such a secret key has to play in the bulk-to-boundary maps of AdS/CFT.
If the AdS to CFT map is one fixed linear map that applies equally well to all geometries, then in principle nothing stops us from knowing that map fully, and there should be no analog of a secret key.
In this section we argue that even in the strongest setting possible, where we assume the bulk-to-boundary map is fully known and there is no secret key, we can still demonstrate a gap between the complexity of operator reconstruction and state reconstruction.

The most obvious way of achieving this is to apply the homomorphic encryption scheme and then simply trace out the key. The downside of this is that now the bulk-to-boundary map loses information -- we have merely shifted lack of knowledge of the map into lack of knowledge of the traced-out system. 
We can circumvent this problem by effectively placing the secret key inside a Python's lunch. 
This creates a situation where the key is not information-theoretically lost but accessing it is computationally intractable due to the Python's lunch obstruction. 
As a result, geometry construction remains intractable, whereas operator reconstruction never relied on knowledge of the key or properties in the first place and remains easy.
 
We can formalize this idea using a post-selection-based argument similar to~\cite{brown2019pythonslunchgeometricobstructions}.
Consider a QFHE scheme, which is a collection of isometries $\{V_x\}_x$ indexed by keys $x$.
Assume that the QFHE scheme is information-theoretically decryptable (i.e., all the images of the different isometries $V_x$ are orthogonal for different keys $x$).
Let $\mathcal{H}_\mathrm{key}, \mathcal{H}_b,$ and $\mathcal{H}_B$ be finite dimensional Hilbert spaces.
By running the scheme coherently on a key register, this gives rise to an  encryption map 
\begin{align*}
V: \mathcal{H}_\mathrm{key} \otimes \mathcal{H}_b \to \mathcal{H}_\mathrm{key} \otimes \mathcal{H}_B
\end{align*}
and a decryption map 
\begin{align*}
V_{\rm dec}: \mathcal{H}_\mathrm{key} \otimes \mathcal{H}_B \to \mathcal{H}_\mathrm{key} \otimes \mathcal{H}_b
\end{align*}
with the following properties:
\begin{enumerate}
	\item It is easy to implement the encryption map $V$:
	\begin{equation}
		V \ket{x}_\mathrm{key}\ket{\psi}_b  = \ket{x}_\mathrm{key} V_x \ket{\psi}_b~.
	\end{equation}
	\item It is hard to implement the decryption map $V_\mathrm{dec}$:
	\begin{equation}
		V_\mathrm{dec} \ket{0}_\mathrm{key} V_x \ket{\psi}_b = \ket{x}_\mathrm{key} \ket{\psi}_b~.
	\end{equation}
	\item For any unitary $U$ on $\mathcal{H}_b$, it is easy to implement a unitary $\widetilde{U}$ such that for any $x$
	\begin{equation}
		\widetilde{U} V_x \ket{\psi}_b = V_x U \ket{\psi}_b~.
	\end{equation}
\end{enumerate}
It might be surprising that conditions (i) and (ii) can coexist.
After all, if $V$ is easy to implement then inverting it on its image is also easy.
The argument goes like this: we can in general represent the isometry $V$ as 
\begin{equation}
	V = W \ket{0}_A~,
\end{equation}
where $A$ is some arbitrary ancilla system and $W$ is a unitary on $\mathcal{H}_\mathrm{key} \otimes \mathcal{H}_\mathrm{b} \otimes \mathcal{H}_\mathrm{A}$. 
If $V$ is easy to implement, that implies $W$ is an efficient unitary operator. 
But then $W^\dagger$ is also an efficient unitary operator.
Therefore, given some state $V \ket{x}\ket{\psi}$, we can easily undo $V$ by interpreting this as
\begin{equation}
    V \ket{x}\ket{\psi} = W \ket{x}\ket{\psi}\ket{0}_A
\end{equation}
and acting with $W^\dagger$ and then measuring $A$, which will have outcome $\ket{0}_A$ with probability $1$.
Crucially, however, this operation need not do anything nice when acted on $\ket{0} V_x \ket{\psi}$!

Note that it is important that $V_\mathrm{dec}$ is required to work for \emph{any} $x$ in condition (ii), because $V_x$ is itself easy to implement (because $V$ is), so by an argument similar to above, $V_x$ is also easy to invert on its image, i.e. states of the form $V_x \ket{\psi}_b$.
But that's not enough to give us an efficient $V_\mathrm{dec}$ that works for \emph{all} $x$.  

These conditions ensure that given $V_x \ket{\psi}_b$ for \emph{unknown} $x$, state reconstruction is hard but operator reconstruction is easy.
However, this is not yet what we want if we want the holographic map to be a \emph{completely known} linear map.
In that view, we cannot say ``the holographic map is a $V_x$ for some unknown $x$''. 
We want some particular linear map that nonetheless has a gap in operator and state reconstruction. 

This can be constructed as follows.
Let the ``bulk'' Hilbert space be 
\begin{equation}
	\mathcal{H}_\mathrm{bulk} = \mathcal{H}_{\mathrm{key}} \otimes \mathcal{H}_b~,
\end{equation}
and let $V: \mathcal{H}_\mathrm{bulk} \to \mathcal{H}_B$ be a QFHE scheme.
Let $Q: \mathcal{H}_\mathrm{bulk} \to \mathcal{H}_B$ be defined as
\begin{equation}
    Q = \sum_{x} \bra{x}_\mathrm{key} V~.
\end{equation}
Then it is straightforward to see that $Q \ket{x}_\mathrm{key} \ket{\psi}_b = V_x \ket{\psi}_b$.
Note that it follows from the existence of the decryption map (or more precisely the orthogonality of the images of the different $V_x$) that $Q$ preserves the normalization of input states. 
This $Q$ is the bulk-to-boundary map with the properties we want.
To see this, let $\mathcal{H}_R$ be an arbitrary reference system, and consider an arbitrary state
\begin{equation}
	\ket{\phi}_\mathrm{bulk,R} = \sum_{x,y,z} c_{xyz} \ket{x}_{\mathrm{key}} \ket{y}_{b} \ket{z}_R~,
\end{equation}
where $x,y,z$ are labels for orthonormal bases.
Acting our map, we obtain the state
\begin{equation}
	Q \ket{\phi} = \sum_{x,y,z} c_{xyz} \left(V_x \ket{y}_{b}\right)\ket{z}_R~.
\end{equation}
Now imagine we have access to just $B$, not $R$. 
In principle we can recover the bulk state $\ket{\phi}$ by using $V_\mathrm{dec}$, but in general that won't be easy.
State reconstruction is hard!
On the other hand, reconstructing operators is easy, for any operator on $\mathcal{H}_b$.
Indeed by condition (iii), for any $U_b$ there exists an efficient $U_B$ such that 
\begin{equation}
	U_B Q \ket{\phi} = Q U_b \ket{\phi}~.
\end{equation}
Note that reconstructing operators that act on $\mathcal{H}_\mathrm{key}$ is not necessarily easy.
Furthermore this $Q$ is itself not necessarily easy to implement, because its definition involves postselection. 
Still, this demonstrates that there exists a linear map such that state reconstruction is hard but operator reconstruction is easy for many operators.

\section{Holographic pseudoentanglement from low-entangling PRUs} \label{sec:pe_pru}
\subsection{Pseudorandom unitaries}\label{sec:prus}

This construction relies on the shallow depth PRUs of~\cite{brickwork_t_design}.

\begin{definition}[Pseudorandom unitary\cite{metger2024simpleconstructionslineardepthtdesigns}]
Let $n \in \mathbb{N}$ be the security parameter. An infinite sequence $\mathcal{U} = \{\mathcal{U}_{n \in \mathbb{N}}$ of $n$-qubit unitary ensembles $\mathcal{U}_n = \{U_k\}_{k \in \mathcal{K}}$ is a pseudorandom unitary if it satisfies the following conditions:

\begin{itemize}
    \item (Efficient computation) There exists a polynomial-time quantum algorithm $\mathcal{Q}$ such that for all keys $k \in \mathcal{K}$ where $\mathcal{K}$ denotes the key space, and any $\ket{\psi} \in (\mathbb{C}^2)^{\otimes n}$ it holds that
    \begin{equation}
    \mathcal{Q}(k \ket{\psi}) = U_k \ket{\psi}
    \end{equation}
    \item (Pseudorandomness) The unitary $U_k$ for a random key $k \sim \mathcal{K}$ is computationally indistinguishable from a Haar random unitary $U \sim \textrm{Haar}(2^n)$. In other words, for any quantum polynomial-time (QPT) algorithm $\mathcal{A}$ it holds that
    \begin{equation}
    |\mathrm{Pr}_{k \sim \mathcal{K}}[\mathcal{A}^{U_k}(1^\lambda)=1]- \mathrm{Pr}_{U\sim\mathrm{Haar}}[\mathcal{A}^{U_k}(1^\lambda)=1]| \leq \mathrm{negl}(n)
    \end{equation}
\end{itemize}
\end{definition}

Initial constructions of PRUs required circuit depths polynomial in $n$~\cite{metger2024simpleconstructionslineardepthtdesigns,chen2024efficientunitarydesignspseudorandom}.
In~\cite{brickwork_t_design} this was improved via a construction that requires circuit depth $\poly \log (n)$.
It is a `brickwork' construction, that builds a PRU on $n$ qubits by `patching together' PRUs on $\omega(\log n)$ qubits (see \cref{fig:brickwork PRU}).
Since the circuit is low-depth and local it cannot change the long range entanglement structure of the state it acts on, it can only create or destroy entanglement between nearest- and next-nearest-neighbour patches of $\omega(\log n)$ qubits.

\begin{figure}
\centering
\includegraphics[scale=0.6]{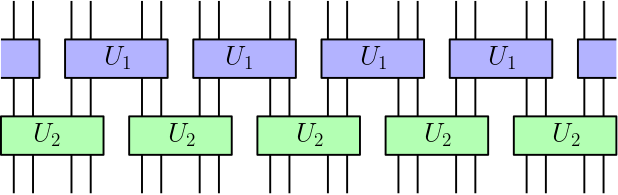}
\caption{The brickwork PRU construction from~\cite{brickwork_t_design}. Each unitary acts on $\omega(\log n )$ qubits. Two layers of these small PRUs generates a PRU on $n$ qubits.}\label{fig:brickwork PRU}
\end{figure}

\subsection{Construction}
We will take two different bulk geometries and use the shallow-depth, low-entangling PRUs to `hide' the difference in geometries, so that any observer with access to polynomially many copies of the boundary state cannot distinguish the two states.
In order to present a concrete construction we start from HQECC, however we note that the idea of applying shallow depth PRUs to `hide' geometry could equally well be applied to CFT states arising in the boundary of AdS/CFT.
The concrete construction has three simple steps:

\paragraph{Step 1:}
Take two arbitrary geometries, $g_1,g_2$ and cut them off at some finite radius.
Tessellate them both in such a way that both geometries have $n$ edges on the boundary of the tessellation for some $n \in \mathbb{N}$. 
Note that we do not require that the geometries are cut off at the same radius, just that the number of boundary edges is the same in both cases.
Crucially we assume that the two geometries are substantially different.
Let $l_i$ denote the number of faces in the polygon that tessellates $g_i$.

\paragraph{Step 2:} 
Construct a HQECC for the tessellations of $g_1,g_2$ using random stabilizer tensors.\footnote{We use random stabilizer tensors so that our HQECC exactly obeys the RT formaula and can be efficiently instantiated on a quantum computer.}
We will choose $l_i+1$-index tensors where each tensor leg has dimension $D = 2^{\omega(\log n)} = n^q$ for $q>1$.
Note that with these choices with high probability the tensor networks are isometries from bulk to boundary which exactly obey the RT formula, i.e.~for a boundary region $A$ the entanglement entropy of $\rho_A$ satisfies:
\begin{equation}
S(\rho_A) = \log D \cdot |\gamma_A| = q \log n \cdot |\gamma_A|
\end{equation}
where $\gamma_A$ is the minimal geodesic in the tessellation that has its end points at the boundary of $A$.
This follows from the results of~\cite{Apel_2022}.

\paragraph{Step 3:} Apply the brickwork PRU construction from~\cite{brickwork_t_design} to the boundary states of each tensor network (see \cref{fig:hqecc pru} for an illustration).
As outlined in the previous section, the brickwork construction relies on implementing PRUs on $\omega(\log n)$ qubits to construct a PRU on $n$ qubits.
Existing constructions of PRUs are non-adaptive, therefore we will need to use two different PRU ensembles -- one for each layer of the brickwork -- to achieve security.\footnote{A PRU is said to be non-adaptive if it is only secure against algorithms which query the PRU in parallel. An adaptive PRU on the other hand would be secure against algorithms which can make sequential queries to the PRU.}

\begin{figure}
\centering
\includegraphics[scale = 0.3]{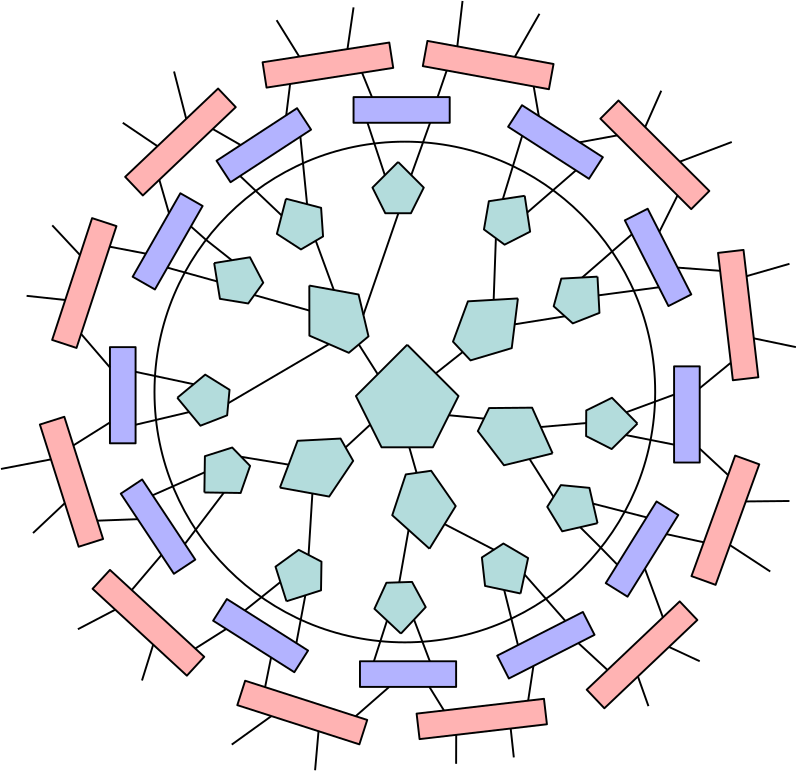}
\caption{We apply the brickwork PRU construction to the boundary of a HQECC to `hide' the geometry.}\label{fig:hqecc pru}
\end{figure}

The brickwork PRU construction is low-entangling because it has a small lightcone, however it will slightly modify the entanglement structure of the boundary states.
The entanglement corresponding to any  boundary region $A$ can now be bounded by:\footnote{This follows because the brickwork PRU can entangle or disentangle at most a constant number of outputs from the HQECC.}
\begin{equation}
S(\rho_A) = q \log n \cdot |\gamma_A| \pm O(q \log n)  = q \log n \cdot \left(|\gamma_A| + O(1) \right)
\end{equation}

This construction leads to an example of hardness of geometry reconstruction, which we capture in the following theorem: 

\begin{theorem}\label{thm:PRU hqecc} For any two bulk geometries $g_1,g_2$, there exist two ensembles of quantum states $\ket{\Psi}_k$,$\ket{\Phi}_k$ such that
\begin{enumerate}
    \item The states $\ket{\Psi}_k$,$\ket{\Phi}_k$ are efficiently constructable by poly-size quantum circuits.
    \item The entanglement entropy of the states $\ket{\Psi}_k$,$\ket{\Phi}_k$ (for any choice of key $k$) satisfies:
    \begin{equation}
S(\rho_A) = q \log n \cdot |\gamma_A| \pm O(q \log n)  = q \log n \cdot \left(|\gamma_A| + O(1) \right)
    \end{equation}
    where $A$ is some boundary region, $\rho_A$ is the reduced density matrix of $\ket{\Psi}_k$ (resp.~$\ket{\Phi}_k$) on $A$ and $|\gamma_A|$ is the length of the minimal cut through $g_1$  (resp.~$g_2$) that shares a boundary with $A$
    \item No poly-time quantum algorithm can distinguish a random $\ket{\Psi}_k$ from a random $\ket{\Phi}_k$ given polynomially many copies of the state.
\end{enumerate}
\end{theorem}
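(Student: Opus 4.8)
The plan is to make the three-step construction of Steps~1--3 precise and then verify the three listed properties in turn. Fix the two tessellations from Step~1 so that both have the same number $n$ of boundary edges; fix once and for all a choice of the random stabilizer tensors of Step~2 (local dimension $D=n^q$ a prime, and each tessellating polygon chosen so the tensors have an even number $l_i+1$ of legs) for which \emph{every} tensor is perfect -- such a choice exists by a union bound over the polynomially many tensors, using the stated fact that a random stabilizer tensor on $O(1)$ legs of large prime dimension $D$ is perfect except with probability $n^{-\Omega(1)}$. Let $\ket{\psi_{g_1}},\ket{\psi_{g_2}}$ be the resulting \emph{fixed} boundary states, obtained by feeding a fixed bulk input (say all zeros on the uncontracted bulk legs) into the two perfect-tensor HQECC isometries. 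Finally let $\{U_k\}_{k\in\cK}$ be the two-layer brickwork PRU of~\cite{brickwork_t_design} acting on the full boundary register (of $n\log D$ qubits), built from two \emph{independent} polylog-qubit PRU ensembles, one per brickwork layer, with $k$ the joint key; set $\ket{\Psi}_k\deq U_k\ket{\psi_{g_1}}$ and $\ket{\Phi}_k\deq U_k\ket{\psi_{g_2}}$.

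Property (i) is routine: a stabilizer tensor-network state is preparable by a polynomial-size Clifford circuit -- prepare each stabilizer tensor (a stabilizer state on $O(\log D)$ qubits) and realize each edge contraction as a Bell-basis measurement of the two legs followed by a classically controlled Pauli correction; there are polynomially many edges, and composing with the polylog-depth brickwork PRU circuit keeps the total size polynomial. For property (ii), the fixed perfect-tensor network obeys the RT formula exactly: the standard max-flow/min-cut argument for perfect-tensor HQECCs~\cite{Pastawski_2015} (in the stabilizer instantiation,~\cite{Apel_2022}) gives $S_A(\ket{\psi_{g_i}})=\log D\cdot|\gamma_A|$ for every boundary region $A$, with $\gamma_A$ the minimal geodesic homologous to $A$. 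It remains to bound the perturbation caused by $U_k$. Here I would use that the two-layer brickwork, whose bricks each span only $O(1)$ boundary qudits, decomposes -- for a region $A$ that is a union of $O(1)$ arcs -- as a genuine tensor product $U_k=U_A\otimes U_{A^c}\otimes U_\partial$, where $U_\partial$ is supported on an $O(1)$-qudit neighborhood $A_\partial\sqcup A^c_\partial$ of the cut and $U_A,U_{A^c}$ are supported strictly inside $A,A^c$; then $\rho'_{A\setminus A_\partial}=U_A\,\rho_{A\setminus A_\partial}\,U_A^\dagger$ has the same entropy as $\rho_{A\setminus A_\partial}$, and subadditivity together with the Araki--Lieb inequality across $A_\partial$ (of dimension $2^{O(\log D)}$) give $\bigl|S_A(U_k\rho U_k^\dagger)-S_A(\rho)\bigr|=O(\log D)$. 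Hence $S_A(\ket{\Psi}_k)=\log D\,(|\gamma_A|+O(1))=q\log n\,(|\gamma_A|+O(1))$ for \emph{every} key $k$, and likewise for $\ket{\Phi}_k$ with geometry $g_2$.

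Property (iii) is where the cryptography enters, and the key point is that $\ket{\Psi}_k$ and $\ket{\Phi}_k$ arise by applying the \emph{same} random unitary $U_k$ to two \emph{fixed} pure states of the same dimension. Under a genuinely Haar-random $U$, both $U\ket{\psi_{g_1}}$ and $U\ket{\psi_{g_2}}$ are Haar-distributed on the unit sphere, so $(U\ket{\psi_{g_1}})^{\otimes m}$ and $(U\ket{\psi_{g_2}})^{\otimes m}$ are \emph{identically} distributed and no algorithm has any distinguishing advantage in the Haar world. I would then reduce to PRU security in the standard way: given a polynomial-time distinguisher $\cD$ with advantage $\eps$ between $\ket{\Psi}_k^{\otimes m}$ and $\ket{\Phi}_k^{\otimes m}$, build a PRU distinguisher that, given oracle access to $W\in\{U_k,\Haar\}$, prepares $\ket{\psi_{g_b}}^{\otimes m}$ (efficient by property (i)), applies $W$ to each of the $m$ copies \emph{in parallel} -- a non-adaptive query pattern, which is all the two-layer brickwork PRU requires, whence the two independent PRU ensembles -- and runs $\cD$. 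When $W=U_k$ this feeds $\cD$ the real distribution, and when $W$ is Haar it feeds the $b$-independent Haar distribution; by the triangle inequality over $b\in\{1,2\}$ the PRU distinguisher has advantage $\ge\eps/2$, forcing $\eps\le\negl$.

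I expect property (ii) to be the main obstacle: honestly verifying that the brickwork PRU perturbs \emph{every} relevant entanglement entropy by only $O(\log D)$ requires tracking the two-layer lightcone relative to the grouping of qubits into $D$-dimensional boundary legs and establishing the tensor-product decomposition above, and it requires a little extra care for boundary regions $A$ with many connected components, where the correction is $O(\log D)$ per component; one absorbs this into the $|\gamma_A|$ term using that $|\gamma_A|$ itself grows at least linearly in the number of components in a hyperbolic tessellation, so at worst the additive $O(1)$ in the statement degrades to a bounded multiplicative constant. By contrast, property (i) is a standard fact about stabilizer tensor networks, and property (iii) is a few lines once the Haar-invariance observation and the non-adaptive security of~\cite{brickwork_t_design} are in hand, the only bookkeeping being to thread the two-layers/two-ensembles point through the reduction.
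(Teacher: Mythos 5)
Your overall route is the same as the paper's: fix a perfect-tensor stabilizer HQECC for each geometry, apply a two-layer brickwork PRU, and argue indistinguishability via the Haar-invariance of $U\ket{\psi_{g_1}}$ and $U\ket{\psi_{g_2}}$ followed by a non-adaptive PRU-security reduction with two independent layer ensembles. Your treatments of property~(i) and property~(iii) are sound and actually more careful than the paper's prose; in particular the reduction in~(iii) with the triangle inequality over $b\in\{1,2\}$ is exactly the content of the paper's ``both indistinguishable from Haar, hence from each other'' line, just made explicit. You also correctly flag that the additive error in~(ii) is really per cut endpoint, a subtlety the theorem statement glosses over.

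The one concrete flaw is in the entropy-perturbation step. You claim that the two-layer brickwork $U_k$ factorizes as a \emph{genuine tensor product} $U_k = U_A\otimes U_{A^c}\otimes U_\partial$ with $U_\partial$ on an $O(1)$-qudit neighborhood of the cut. This is false: the two layers are offset by half a brick, so for any choice of the neighborhood $A_\partial\sqcup A^c_\partial$, whichever layer is not aligned with that window has a brick straddling its edge, and $L_2L_1$ does not respect the tripartition. The correct argument peels off one layer at a time. Each individual layer $L_i$ \emph{is} a tensor product of bricks, so it factors as $L_i = L_i^\partial \otimes L_i^{\mathsf{bulk}}$ where $L_i^{\mathsf{bulk}}$ is a tensor product over $A$ and $A^c$ (hence preserves $S_A$) and $L_i^\partial$ consists of the $O(1)$ bricks straddling each endpoint of the cut; Araki--Lieb across the $O(\log D)$-qubit support of $L_i^\partial$ then gives $\lvert S_A(L_i\sigma L_i^\dagger)-S_A(\sigma)\rvert=O(\log D)$. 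Summing over the two layers yields $\lvert S_A(U_k\rho U_k^\dagger)-S_A(\rho)\rvert = O(\log D) = O(q\log n)$ per cut endpoint, which is what the theorem needs. The paper's own justification of~(ii) is a single informal sentence (``the brickwork PRU can create and destroy entanglement only between a constant number of tensor legs'') that does not assert a tensor-product decomposition, and hence does not run into this issue; your write-up would be fine once the decomposition claim is replaced by the layer-by-layer argument.
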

\begin{proof}

The states are efficiently constructable because the boundary state (before applying the PRU) is obtained by acting on the all zero state with a random Clifford circuit, and random Clifford circuits can be efficiently implemented.
The PRUs themselves are efficiently implementable by the arguments of~\cite{brickwork_t_design}.

The entanglement scaling follows because the brickwork PRU can only modify the short range entanglement of the state -- it can create and destroy entanglement only between a constant number of tensor legs, and the tensor legs are composed of $q \log n $ qubits.

By the arguments of~\cite{brickwork_t_design} these boundary states are both indistinguishable from Haar random states to any poly-time bounded observer with access to polynomially many copies of the state.
This follows because the brickwork PRU construction is indistinguishable from a Haar random unitary, and the result of applying a Haar random unitary to an arbitrary state is a Haar random state.
Since the boundary states are both indistinguishable from Haar random states, they are also indistinguishable from each other.
\end{proof}

\cref{thm:PRU hqecc} applies to arbitrary bulk geometries, so in particular we can apply it to two geometries which are easy to distinguish in the bulk gravitational theory. 
The fact that the two geometries can be efficiently distinguished but (polynomially many copies of) the boundary state cannot be distinguished implies that the geometry cannot be reconstructed given access to just the boundary state.
Note that this argument applies to arbitrary bulk geometries $g_1,g_2$, therefore in particular it does not rely on the existence of a horizon in either geometry.

\section{Holographic pseudoentanglement from pseudoentangled link states} \label{sec:pe_link}

In this section, we present our constructions of pseudoentangled holographic states via pseudoentangled link states. In~\cref{sec:pe_link-prelim}, we first introduce the necessary preliminaries. Then, in~\cref{sec:pe_link-tree}, we present our construction based on tree tensor networks, which satisfies the exact RT formula. Finally, in~\cref{sec:pe_link-planar}, we present our construction based on random stabilizer tensor networks that only satisfies RT formula approximately. Both of our constructions can be made public-key pseudoentangled states.

\subsection{Preliminaries} \label{sec:pe_link-prelim}

We will need the following construction of pseudoentangled states from~\cite{aaronson2023quantumpseudoentanglement}.

\begin{theorem}[{\cite[Theorem 2.5]{aaronson2023quantumpseudoentanglement}}]\label{theo:pseudo-ent}
	Let $D \in \posN$ be such that $\log D =\omega(\log n)$, and $k \in \posN$ be such that $k \le D$ and $\log k = \omega(\log n)$ (both $D$ and $k$ are parameterized by $n$). Let $\bF\colon [D] \to \{0,1\}$ be a quantum-secure pseudorandom function and $\bP \colon [D] \to [D]$ be a quantum-secure pseudorandom permutation (both against $\poly(n)$-time quantum adversaries). The following two distributions over quantum states
	\[
	\frac{1}{\sqrt{D}} \sum_{i \in [D]} (-1)^{f(i)} \spz{i} \quad \text{where } f \sim \bF
	\]
	and
	\[
	\frac{1}{\sqrt{|S|}} \sum_{i \in S} (-1)^{f(i)} \spz{i} \quad \text{where } f \sim \bF, p \sim \bP, S = \{ p(i) : i \in [k] \}
	\]
	are computationally indistinguishable against $\poly(n)$-time quantum adversaries given $\poly(n)$ many copies. 
\end{theorem}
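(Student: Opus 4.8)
The plan is to strip off the cryptographic assumptions, reducing the claim to a statement about $m$-copy moment operators, and then to estimate a single trace norm. First I would observe that one copy of either state is prepared with a single quantum query to the underlying Boolean function: prepare the uniform superposition $\frac{1}{\sqrt D}\sum_{i\in[D]}\ket i$ (or, for the second distribution, $\frac{1}{\sqrt k}\sum_{i\in[k]}\ket{p(i)}$ after one query to $p$), apply $f$ in the phase, and discard the ancilla. Hence any $\poly(n)$-time quantum algorithm $\cA$ distinguishing the two state ensembles given $m=\poly(n)$ copies yields a $\poly(n)$-time quantum algorithm making $O(m)$ oracle queries, so by quantum security of the pseudorandom function $\bF$ and pseudorandom permutation $\bP$ I may, at a cost of $\negl(n)$ in advantage, replace $\bF$ by a uniformly random $f\colon[D]\to\{0,1\}$ and $\bP$ by a uniformly random permutation, so that $S$ becomes a uniformly random size-$k$ subset of $[D]$. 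It then suffices to show that the $m$-copy moment operators
\[
\rho_1=\E_{f}\!\left[\proj{\psi_f}^{\otimes m}\right]~,\qquad \rho_2=\E_{f,S}\!\left[\proj{\phi_{f,S}}^{\otimes m}\right]~,\qquad\text{with }\ket{\psi_f}=\tfrac{1}{\sqrt D}\sum_{i\in[D]}(-1)^{f(i)}\ket i,\ \ \ket{\phi_{f,S}}=\tfrac{1}{\sqrt k}\sum_{i\in S}(-1)^{f(i)}\ket i,
\]
satisfy $\norm{\rho_1-\rho_2}_1=\negl(n)$ whenever $m=\poly(n)$.

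Next I would compute these operators explicitly. Since $f(x)$ is uniform in $\{0,1\}$ and independent across $x$, for any tuple $(\bm i,\bm j)$ the phase $\E_f\big[(-1)^{\sum_{a}f(i_a)+\sum_b f(j_b)}\big]$ equals $1$ if every value of $[D]$ occurs an even number of times in the combined tuple $(\bm i,\bm j)$ and $0$ otherwise, so
\[
\rho_1=\frac{1}{D^{m}}\sum_{\bm i,\bm j\in[D]^{m}}\1\!\left[\text{every value occurs an even number of times in the combined tuple }(\bm i,\bm j)\right]\ket{\bm i}\!\bra{\bm j}~,
\]
and likewise $\rho_2=\E_{S}\!\left[\tfrac1{k^m}\sum_{\bm i,\bm j\in S^{m}}\1[\cdots]\ket{\bm i}\!\bra{\bm j}\right]$. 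In both cases the dominant contribution is the common positive operator
\[
\sigma=\frac{1}{D^{m}}\sum_{\substack{\bm i\in[D]^{m}\text{ with pairwise distinct entries}\\ \bm j\text{ a permutation of }\bm i}}\ket{\bm i}\!\bra{\bm j}=\frac{1}{D^{m}}\sum_{O}\ket{w_O}\!\bra{w_O}~,
\]
where $O$ ranges over the orbits of distinct tuples under the symmetric group $S_m$ and $\ket{w_O}=\sum_{\bm i\in O}\ket{\bm i}$; one has $\norm{\sigma}_1=\TR\sigma=\prod_{j=0}^{m-1}(1-j/D)=1-O(m^2/D)$. (Equivalently, $\sigma$ is the restriction of the Haar moment operator $\E_{\ket\phi\sim\Haar}[\proj{\phi}^{\otimes m}]=\binom{D+m-1}{m}^{-1}\Pi_{\mathrm{sym}}$ to tuples with distinct entries.)

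Finally I would bound the remainders in trace norm. For $\rho_1$, the terms outside $\sigma$ are those where $\bm i$ has a repeated entry (with $\bm j$ constrained to match value-parities); grouping these by how the repetitions coincide — diagonal blocks, SWAP-type blocks, and coincidence blocks of the form $\ket v\!\bra v$ with $\ket v=\sum_a\ket{a,\dots,a}$ supported on the coincident legs — one bounds each block's trace norm by its rank times its operator norm and finds the total is $O(m^2/D)$; this is a birthday-type estimate that crucially uses $\log D=\omega(\log n)$. For $\rho_2$, a fixed distinct tuple $\bm i$ lies in $S^{m}$ with probability $\prod_{j=0}^{m-1}\frac{k-j}{D-j}=(k/D)^m\bigl(1-O(m^2/k)\bigr)$, so $\E_S$ of the dominant part of $\rho_2$ is $\bigl(1-O(m^2/k)\bigr)\sigma$, and the same block decomposition controls the rest, giving $\norm{\rho_2-\sigma}_1=O(m^2/k)$. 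Hence $\norm{\rho_1-\rho_2}_1=O(m^2/D)+O(m^2/k)=O(m^2/k)$ (using $k\le D$), which is $n^{-\omega(1)}=\negl(n)$ since $m=\poly(n)$ and $\log k=\omega(\log n)$; chaining this statistical bound with the two cryptographic hybrids completes the argument.

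The reductions of the first step and the moment identity of the second are routine; the main obstacle — and the one place where the hypotheses $\log D,\log k=\omega(\log n)$ are essential — is the trace-norm estimate of the third step, i.e.\ establishing the $O(m^2/k)$ bound on $\norm{\rho_1-\rho_2}_1$ rather than the useless bound that is exponential in the number of copies $m$.
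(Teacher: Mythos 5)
Your proposal is correct in outline, but it takes a different route from the paper. The paper does not reprove this statement: it observes that $\caD_{\sf full}$ is a pseudorandom state ensemble (from Brakerski--Shmueli/JLS), that $\caD_{\sf subset}$ is also a pseudorandom state ensemble (from Aaronson et al.), and concludes by a hybrid/triangle-inequality argument that the two must be mutually indistinguishable, since each is indistinguishable from copies of a Haar-random state. You instead open the black box: after stripping the PRF/PRP by a standard query reduction, you compute the $m$-copy moment operators in closed form and show both sit within $O(m^2/k)$ trace distance of a common explicit operator $\sigma$ (the ``distinct-tuple'' part). This is essentially a merged re-derivation of the two statistical lemmas underlying the cited PRS results, rather than invoking them as black boxes; what it buys is an explicit, self-contained bound and a clean single intermediary, at the cost of redoing the combinatorial trace-norm estimates.

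Two small cautions. First, your parenthetical that $\sigma$ ``is the restriction of the Haar moment operator $\binom{D+m-1}{m}^{-1}\Pi_{\mathrm{sym}}$ to tuples with distinct entries'' is not quite right: your $\sigma$ carries normalization $1/D^m$, whereas the Haar restriction carries $\binom{D+m-1}{m}^{-1}$, and these differ by a factor $\prod_{j=0}^{m-1}(1+j/D)$. The discrepancy is $1+O(m^2/D)$ and so harmless for your bound, but the ``equivalently'' is not literal. Second, the crux of the argument is precisely the remainder estimate $\|\rho_1-\sigma\|_1=O(m^2/D)$ and the analogous $O(m^2/k)$ bound for the subset ensemble, and your block decomposition (``rank times operator norm'') is only sketched. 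The bound is indeed what comes out of the detailed computations in the cited papers, and your outline is plausible, but as written this is a proof sketch rather than a complete argument; since the whole point of the theorem hinges on this step, you should either carry out the block decomposition in full or, following the paper, simply invoke the two known PRS results and conclude by transitivity.
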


We remark that~\cite{aaronson2023quantumpseudoentanglement} proved that $\caD_{\sf subset}$ is a pseudorandom state ensemble,\footnote{see~\cite{JLS18} for a formal definition of pseudorandom state ensemble} and the previous work~\cite{BS19} proved that $\caD_{\sf full}$ is pseudorandom. The theorem above follows as a simple corollary of these two results.

The following corollary is straightforward from~\cref{theo:pseudo-ent}.

\begin{corollary}\label{cor:pseudo-ent}
	Let $D,k,\bF,\bP$ be the same as in~\cref{theo:pseudo-ent}. The following two distributions over quantum states
	\[
	\caD_{\sf full} \colon \frac{1}{\sqrt{D}} \sum_{i \in [D]} (-1)^{f(i)} \spz{i}\spz{i} \quad \text{where } f \sim \bF
	\]
	and
	\[
	\caD_{\sf subset} \colon\frac{1}{\sqrt{|S|}} \sum_{i \in S} (-1)^{f(i)} \spz{i}\spz{i} \quad \text{where } f \sim \bF, p \sim \bP, S = \{ p(i) : i \in [k] \}
	\]
	are computationally indistinguishable against $\poly(n)$-time quantum adversaries given $\poly(n)$ many copies.
\end{corollary}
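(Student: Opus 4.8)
The plan is to reduce Corollary~\ref{cor:pseudo-ent} to Theorem~\ref{theo:pseudo-ent} by a standard, copy-preserving reduction built around the efficient ``index-copying'' map $V\colon \spz{i} \mapsto \spz{i}\spz{i}$. Concretely, identifying $[D]$ with $\lceil \log_2 D\rceil$-bit strings (or padding $D$ to a power of two, or equivalently using a generalized CNOT $\spz{i}\spz{j}\mapsto\spz{i}\spz{j+i \bmod D}$ on two $D$-dimensional qudits), $V$ is the isometry that appends a fresh $\spz{0}$ register and applies a transversal layer of CNOT gates. Since $\log D = \poly(n)$ (this is implicit in the state-preparation circuits of \cref{theo:pseudo-ent} being polynomial-time), $V$ is a $\poly(n)$-time quantum circuit.

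The first step is to observe that $V$ carries the two distributions of \cref{theo:pseudo-ent} \emph{exactly} onto the two distributions of \cref{cor:pseudo-ent}, with the same underlying $f \sim \bF$, $p \sim \bP$, $S$: we have $V\bigl(\tfrac{1}{\sqrt D}\sum_{i\in[D]}(-1)^{f(i)}\spz{i}\bigr) = \tfrac{1}{\sqrt D}\sum_{i\in[D]}(-1)^{f(i)}\spz{i}\spz{i}$, which is precisely a sample from $\caD_{\sf full}$, and likewise $V$ sends the subset state of \cref{theo:pseudo-ent} to a sample from $\caD_{\sf subset}$. Consequently, applying $V^{\otimes m}$ to $m$ copies of a state drawn from one of the theorem's distributions produces exactly $m$ copies of the corresponding corollary distribution.

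The second step is the contrapositive argument. Suppose some $\poly(n)$-time quantum algorithm $A$ distinguishes $\caD_{\sf full}$ from $\caD_{\sf subset}$ with non-negligible advantage given $\poly(n)$ copies. Define $A'$ which, on input $\poly(n)$ copies of a state drawn from one of the two distributions of \cref{theo:pseudo-ent}, first applies $V$ to each copy (efficiently, by the above) and then runs $A$ on the resulting states. By the exact correspondence of the previous step, $A'$ receives precisely the input distribution that $A$ expects, so $A'$ distinguishes the two distributions of \cref{theo:pseudo-ent} with the same non-negligible advantage — contradicting \cref{theo:pseudo-ent}. Hence no such $A$ exists, which is the claim.

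There is essentially no substantive obstacle here; the only points to verify are routine and were flagged above: that $V$ is efficient (it is a product of $O(\log D) = \poly(n)$ two-qubit gates), that the number of copies is preserved exactly (it is, since $V$ acts copy-wise), and that the distinguishing advantage transfers verbatim rather than with loss (it does, because the reduction is an exact bijection on states, not an approximation). If one prefers to avoid padding, working directly with $D$-dimensional qudits and the generalized controlled-addition gate gives the same map with the same properties.
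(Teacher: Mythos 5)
Your proposal is correct and is essentially identical to the paper's own proof: the paper also applies the unitary $\spz{x}\spz{y} \mapsto \spz{x}\spz{x+y}$ (with $\spz{0}$ padded) to transform the ensembles of \cref{theo:pseudo-ent} into those of \cref{cor:pseudo-ent}, and then argues by contraposition. Your write-up just spells out more explicitly that this map is efficient, acts copy-wise, and preserves the distinguishing advantage exactly.
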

\begin{proof}
	Note that applying unitary $\spz{x}\spz{y} \mapsto \spz{x}\spz{x+y}$ on the two ensembles of states from~\cref{theo:pseudo-ent} (with $\spz{0}$ padded at the end) gives the two ensembles of states in the corollary. Hence, if the two ensembles of states in the statements are computationally distinguishable, so are the two ensembles of states from~\cref{theo:pseudo-ent}.
\end{proof}

We will also need the following construction of perfect tensors from~\cite{helwig2012absolute,helwig2013absolutely}. 

Let $n \in \N$ be even and $D \in \N$ be such that $n \le D$ and $D$ is a prime power. We use $\F=\F_D$ to denote the finite field of size $D$ and $\omega_1,\dotsc,\omega_{n}$ to denote the first $n$ elements from $\F=\F_D$ (in an arbitrary but fixed ordering), and $\F_{<n/2}[X]$ denote the set of all polynomials in $\F[X]$ with degree less than $n/2$.

We have the following lemma, which follows from the argument from~\cite[Section~4.2]{helwig2013absolutely}; we include a self-contained proof for completeness.
\begin{lemma}\label{lm:perfect-tensor}
	Let $n \in \N$ be even and $D \in \N$ be such that $n \le D$ and $D$ is a prime power.
	\[
	\pt_{n,D} = \sum_{p \in \F_{<n/2}[X]} \bigotimes_{i \in [n]} \spz{p(\omega_i)}.
	\]
	is a perfect tensor with $n$ legs and bond dimension $D$. Moreover, for any bipartition of its indices into a set $A$ of size $n/2$ and its complement $A^c$, the corresponding unitary transformation from $A$ to $A^c$ has a $\poly(n,\log D)$-size quantum circuit.
\end{lemma}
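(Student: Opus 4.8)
The plan is to verify directly that $\pt_{n,D}$, viewed as a tensor with $n$ legs each of dimension $D$, implements (up to normalization) a unitary from any half of its legs to the complementary half, and then to exhibit an efficient circuit for that unitary. The algebraic content is the classical fact that a polynomial of degree less than $n/2$ is determined by its values at any $n/2$ distinct points, i.e.\ polynomial interpolation / the invertibility of Vandermonde matrices over $\F_D$. Concretely, fix a bipartition $[n] = A \sqcup A^c$ with $|A| = |A^c| = n/2$. Re-indexing the sum over $p \in \F_{<n/2}[X]$ by the vector of values $(p(\omega_i))_{i \in A} \in \F_D^{n/2}$ — which is a bijection, since evaluation at the $n/2$ distinct points $\{\omega_i : i \in A\}$ is an isomorphism $\F_{<n/2}[X] \to \F_D^{n/2}$ — we can write
\[
\pt_{n,D} = \sum_{v \in \F_D^{n/2}} \Bigl( \bigotimes_{i \in A} \spz{v_i} \Bigr) \otimes \Bigl( \bigotimes_{j \in A^c} \spz{p_v(\omega_j)} \Bigr),
\]
where $p_v$ is the unique polynomial of degree $< n/2$ with $p_v(\omega_i) = v_i$ for $i \in A$. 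This exhibits $\pt_{n,D}$ as $\sum_v \ket{v}_A \otimes \ket{M_A v}_{A^c}$ for a fixed map $v \mapsto (p_v(\omega_j))_{j \in A^c}$; since $v \mapsto p_v$ is linear and evaluation is linear, this map is an $\F_D$-linear map $M_A \colon \F_D^{n/2} \to \F_D^{n/2}$, and it is a bijection because $p_v$ is also recoverable from its values on $A^c$ (again $n/2$ distinct points). Hence the tensor, with the standard normalization $D^{-n/4}$, is the isometry sending the product basis state $\ket{v}$ on $A$ to the product basis state $\ket{M_A v}$ on $A^c$; being a bijection on a basis, it is a unitary, so $\pt_{n,D}$ is a perfect tensor.

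Next I would argue the efficiency claim. The unitary associated to the bipartition $(A, A^c)$ is, in suitable coordinates, the permutation-like map $\ket{v} \mapsto \ket{M_A v}$ where $M_A \in \mathrm{GL}_{n/2}(\F_D)$ is the composition of the inverse Vandermonde matrix at the nodes $\{\omega_i\}_{i \in A}$ with the Vandermonde matrix at the nodes $\{\omega_j\}_{j \in A^c}$. This matrix, and its inverse, can be computed classically in $\poly(n, \log D)$ time by Gaussian elimination over $\F_D$ (arithmetic in $\F_D$ costs $\poly(\log D)$). An invertible $\F_D$-linear map on $\F_D^{n/2}$ is implemented on $\log_2(D^{n/2}) = \tfrac n2 \log_2 D$ qubits by a reversible linear circuit: factor $M_A$ into elementary row operations (swaps, scalings by units of $\F_D$, and additions of one coordinate into another), each of which is a permutation of the computational basis that can be realized by a $\poly(\log D)$-size circuit acting on one or two of the $D$-dimensional registers (modular addition / multiplication by a constant in $\F_D$ are standard reversible arithmetic primitives). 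The number of elementary operations is $\poly(n)$, so the total circuit size is $\poly(n, \log D)$, as claimed. I should also note that the two registers $A$ and $A^c$ can be taken to carry the same dimension $D^{n/2}$, so no dimension-matching issue arises, and the "after normalization" caveat in the definition of a perfect tensor is handled by the global scalar $D^{-n/4}$.

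The part needing the most care is not any single inequality but making sure the self-contained proof cleanly matches \cite[Section 4.2]{helwig2013absolutely} while staying elementary: one must check that the same linear map $M_A$ works simultaneously for \emph{every} bipartition (it does, since the only input to the argument is that both $A$ and $A^c$ have size $n/2$ and the $\omega_i$ are distinct — and $D \ge n$ guarantees we have $n$ distinct field elements), and one must be slightly careful that "perfect tensor" in \cref{sec:prelim:perfect-tensors} is defined with the normalization built in, so the statement to prove is about the \emph{normalized} tensor $D^{-n/4}\pt_{n,D}$. The efficiency half is routine reversible-arithmetic bookkeeping; I expect the main obstacle, such as it is, to be presenting the interpolation/Vandermonde argument in a way that transparently yields a single explicit unitary (rather than merely an existence statement) so that the circuit-size claim follows without extra work.
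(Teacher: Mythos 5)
Your proof is correct and takes essentially the same approach as the paper: polynomial interpolation (equivalently, invertibility of the Vandermonde matrix at the $n/2$ distinct nodes indexed by $A$) shows that the values on $A$ uniquely determine $p$, hence the values on $A^c$, which gives the unitary; and the efficiency follows by implementing the induced $\F_D$-linear map with a $\poly(n,\log D)$-size reversible circuit. The only cosmetic difference is that the paper treats the specific bipartition $[n/2]$ vs.\ $\{n/2+1,\dotsc,n\}$ and invokes ``by symmetry'' for the rest, while you handle an arbitrary bipartition $(A,A^c)$ directly and make the $M_A = V_{A^c}V_A^{-1}$ structure explicit, which is a small clarity gain but the same argument.
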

\begin{proof}
	By symmetric, it suffices to show $\pt_{n,D}$ (interpreted as a tensor) is a unitary transformation from $[n/2]$ to $\{n/2+1,\dotsc,n\}$. Note that a polynomial $p \in \F_{<n/2}[X]$ is uniquely determined by its values on $\omega_1,\dotsc,\omega_{n/2}$, via interpolation. Hence, for $\alpha_1,\dotsc,\alpha_{n/2} \in \F$, we write $p_{\alpha_1,\dotsc,\alpha_{n/2}}$ to be the unique polynomial from $\F_{<n/2}[X]$ such that $p_{\alpha_1,\dotsc,\alpha_{n/2}}(\omega_i) = \alpha_i$. Also note that $p_{\alpha_1,\dotsc,\alpha_{n/2}}$ can be constructed from $\alpha_{1},\dotsc,\alpha_{n/2}$ in $\poly(n,\log D)$ time via standard interpolation.
	
	Now, we can write $\pt_{n,D}$ as
	\[
	\pt_{n,D} = \sum_{p \in \F_{<n/2}[X]} \bigotimes_{i \in [n]} \spz{p(\omega_i)} = \sum_{\alpha_1,\dotsc,\alpha_{n/2}} \bigotimes_{i \in [n/2]} \spz{\alpha_i} \otimes \bigotimes_{i \in \{n/1+1,\dotsc,n\}} \spz{p_{\alpha_1,\dotsc,\alpha_{n/2}}(\omega_i)}.
	\]
	
	Clearly, $\pt_{n,D}$ is a unitary transformation from $\bigotimes_{i \in [n/2]} \spz{\alpha_i}$ to $\bigotimes_{i \in \{n/1+1,\dotsc,n\}} \spz{p_{\alpha_1,\dotsc,\alpha_{n/2}}(\omega_i)}$, and this can be implemented by a $\poly(n,\log D)$-size quantum circuit.
\end{proof}

\subsection{Pseudoentangled holographic states via Tree Tensor Networks and Perfect Tensors} \label{sec:pe_link-tree}

We say a weighted tree is \emph{nice} if all intermediate (i.e., non-leaf) nodes have even degrees and all edge weights are the same. In this subsection, we give the following construction of pseudoentangled holographic states with exact entropy structure given by trees.

\begin{theorem}[Pseudoentangled holographic tree states]\label{theo:PSE-tree}
	Consider the following two graph families:
	\begin{itemize}
		\item Let $T$ be a nice tree with $n$ leaves and edge weight $\ln D \ge \omega(\ln n)$ ($T$ is parameterized by $n$). 
		\item Let $T_{[e]}$ be the tree that is the same as $T$ except that the weight of edge $e$ is reduced to $(\ln D) / 2$ from $\ln D$ (both $e$ and $T_{[e]}$ are parameterized $n$).
	\end{itemize}
	Assuming quantum-secure one-way functions exist, the following holds:
	\begin{itemize}
		\item There are two ensembles of quantum states $\caD_{T}$ and $\caD_{T,e}$ that constitute a pseudoentangled holographic state ensemble with exact entropy structure $T$ vs $T_{[e]}$.
	\end{itemize}
\end{theorem}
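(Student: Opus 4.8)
The plan is to realize $\caD_{T}$ and $\caD_{T,e}$ by placing a perfect tensor at every node of $T$ and splicing, into the single edge $e$, a two-leg link tensor obtained from a state drawn from the ensembles of \cref{cor:pseudo-ent}. Concretely, fix $D$ to be a prime power and a perfect square with $\log D=\omega(\log n)$ and $D$ at least the maximum degree of $T$, and set $k=\sqrt D$ (so $\ln k=\tfrac12\ln D$); this pins down the common edge weight $\ln D$ of the theorem. Put $\pt_{\deg(v),D}$ (\cref{lm:perfect-tensor}) at each node $v$; for $\caD_T$ sample $\psi\sim\caD_{\sf full}$ and for $\caD_{T,e}$ sample $\psi\sim\caD_{\sf subset}$ with subset size $k$; view $\psi$ as a $D\times D$ two-leg tensor, insert it on $e$, and let the boundary state $\ket{\Psi_\psi}$ be the contraction of the resulting network. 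The useful reformulation is $\ket{\Psi_\psi}=N\ket{\psi}$, where $N$ is the map from the two ``$e$-stub'' legs to the $n$ boundary legs obtained by contracting all the perfect tensors; orienting $T$ outward from $e$, every perfect tensor then receives exactly one incoming leg, which is at most half its legs, so $N$ is an isometry and $\ket{\Psi_\psi}$ is normalized.

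Conditions (1) and (3) are then routine. For (1), quantum-secure one-way functions yield quantum-secure pseudorandom functions and permutations, so each $\psi$ is preparable in time $\poly(n,\log D)$; by \cref{lm:perfect-tensor} each perfect tensor is an efficient unitary and $T$ has $O(n)$ nodes, so $N$ has a $\poly(n,\log D)$ circuit, and composition gives an efficient generator. For (3), I would reduce to \cref{cor:pseudo-ent}: an algorithm distinguishing $\caD_T^{\otimes m}$ from $\caD_{T,e}^{\otimes m}$ yields one that applies the circuit for $N$ to each of $m$ given copies of $\psi$ and then runs the original, distinguishing $\caD_{\sf full}^{\otimes m}$ from $\caD_{\sf subset}^{\otimes m}$ with the same advantage, which \cref{cor:pseudo-ent} says is $\negl$. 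The PES key is the PRF/PRP seed.

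The substance is condition (2): for every $A\subseteq[n]$, $S_A(\ket{\Psi_\psi})=\mincut_{A,[n]\setminus A}(T)$ for $\caD_T$ and $=\mincut_{A,[n]\setminus A}(T_{[e]})$ for $\caD_{T,e}$. Since $T$ is a tree, $\mincut_{A,[n]\setminus A}$ equals $\ln D$ times the least number of edges in a cut separating $A$, with $e$ discounted to $\tfrac12\ln D$ in $T_{[e]}$. I would first harmlessly insert a maximally entangled link $\tfrac1{\sqrt D}\sum_i\ket{i}\ket{i}$ on each edge $\neq e$, then fix a minimum cut $\gamma$ of $T$ (resp.\ $T_{[e]}$), choosing one through $e$ whenever some minimum cut through $e$ exists, and write $\ket{\Psi_\psi}=(\Phi_R\otimes\Phi_{R'})\bigotimes_{f\in\gamma}\ket{L_f}$ with $\Phi_R,\Phi_{R'}$ the contractions of the two sides of $\gamma$. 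Two facts about perfect tensors finish it: (i) a perfect tensor is an isometry from any $\le$ half of its legs to the rest, so when $e\in\gamma$ both $\Phi_R,\Phi_{R'}$ are isometries; and (ii) a rank-$r$ projector acting on one leg, pulled back through a perfect tensor, equals $\tfrac{r}{D}\,\I$, so when $e$ lies strictly inside $R$ (or $R'$) the low Schmidt rank of $\ket{L_e}$ does not drop the rank of $\Phi_R$ — it stays a scalar multiple of an isometry. Either way $S_A(\ket{\Psi_\psi})=\sum_{f\in\gamma}S(\sigma_f)$ with $\sigma_f=\I_D/D$ for $f\neq e$ and $\sigma_e$ the reduced state of $\psi$ (entropy $\ln D$ under $\caD_{\sf full}$, $\tfrac12\ln D$ under $\caD_{\sf subset}$), and this sum is exactly the claimed min-cut value. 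The niceness of $T$ (even internal degrees) is what allows a single path-covering of $T$ to orient all these isometries consistently with a max-flow/min-cut decomposition, exactly along the lines of \cite{Pastawski_2015} and the sketch in the introduction.

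The step I expect to be the main obstacle is making (ii) and the choice of $\gamma$ fully rigorous when $A$ is disconnected and when $R$ or $R'$ has ``dangling'' leafless sub-branches: one must argue that for a suitable minimum cut of $T_{[e]}$ the contractions $\Phi_R,\Phi_{R'}$ really are scalar multiples of isometries even with $e$ buried deep inside a branch, i.e.\ that pulling the rank-$k$ projector back through a whole chain of perfect tensors still yields a multiple of the identity. Carefully engineering the path-covering of the nice tree is what handles this, and it is also the reason the RT formula comes out exactly rather than approximately: genuine perfect tensors and exact-Schmidt-rank link states are needed, as opposed to the random tensors used in the PRU construction.
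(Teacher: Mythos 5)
Your construction, the reduction to \cref{cor:pseudo-ent} for indistinguishability, and the efficiency argument all match the paper's proof of \cref{theo:PSE-tree} exactly. The one place your argument genuinely diverges --- and is not yet rigorous --- is the RT-formula step, specifically the case when the special edge $e$ is \emph{not} on any min-cut between $A$ and $A^c$.

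Your decomposition $\ket{\Psi_\psi}=(\Phi_R\otimes\Phi_{R'})\bigotimes_{f\in\gamma}\ket{L_f}$ requires $\Phi_R,\Phi_{R'}$ to be (scalar multiples of) isometries, and your point~(ii) tries to establish this by pulling the rank-$\sqrt D$ projector coming from $\ket{L_e}$ back through the perfect tensors toward $\gamma$. That pull-back identity $T^\dagger(P_r\otimes\I)T=\tfrac{r}{D}\I$ is indeed correct, but it only holds when the node's ``input'' legs together with the single marked output leg still total at most half the legs; equivalently, every node along the chain from $e$ to $\gamma$ must have \emph{strictly fewer} than $d_u/2$ legs oriented toward the cut. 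This can fail for a min-cut: e.g.~at a node incident to exactly $d_u/2$ cut edges, the pull-back gives no simplification. You correctly flag that this is the sticking point and wave at ``engineering the path covering'' to fix it, but do not say how. The paper's proof (\cref{lm:RT-general}) resolves this cleanly by never needing a pull-back at all: it decomposes the whole tree into $n/2$ edge-disjoint leaf-to-leaf paths, chooses a cut edge $e_i$ on each path, and in Case II (when $e$ avoids every min-cut) deliberately routes $e$ onto one of the within-$A$ or within-$A^c$ paths and sets $e_{k+1}=e$. The special link then sits at a cut point whose two stubs $\ell_{k+1},r_{k+1}$ lie on the \emph{same} side of the $A/A^c$ boundary, so its reduced Schmidt rank simply never enters the cross-boundary entropy. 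This also upgrades your coarse isometries $\Phi_R,\Phi_{R'}$ to explicit unitaries $U_A,U_{A^c}$, since under the path-flow orientation each internal node has exactly $d_u/2$ in-flows and $d_u/2$ out-flows. I would rewrite your RT step along those lines; as written, your point~(ii) contains a gap that the paper's Case~II construction is specifically designed to avoid.
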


In the rest of this section, we will first define the ensembles $\caD_{T}$ and $\caD_{T,e}$ in~\cref{sec:pe_link-tree-ensembles}, and then prove~\cref{theo:PSE-pk-tree} in~\cref{sec:pe_link-tree-RT} and~\cref{sec:pe_link-tree-ind}. Finally, we generalize~\cref{theo:PSE-tree} to the public-key version in~\cref{sec:pe_link-tree-public-key}.

\subsubsection{The tensor network and the holographic state} \label{sec:pe_link-tree-ensembles}

We first specify the construction of quantum states $\caD_{T}$ and $\caD_{T,e}$. We will construct them using tree tensor networks consisting of perfect tensors from~\cref{lm:perfect-tensor}. Let $n,D \in \posN$ be such that $D > n$ and let $T$ be a nice tree with $n$ leaves (one can show that $n$ is even and all intermediate nodes have degree at most $n$) with all edge weights being $\ln D$. 

In the following, we first describe the standard procedure of turning a tree into a tree tensor network (state) and introduce some notation. Then, we show how to modify the resulting tree tensor network to obtain our construction.

Let $\TN_{T}$ be the tensor network obtained by replacing all intermediate nodes $u$ with the corresponding tensor $\pt_{d_u,D}$, where $d_u$ is the degree of $u$; see~\cref{fig:tree-tensor-network} for an illustration. We also let $\State(\TN_{T})$ be the $n$-qudit quantum state with qudit dimension $D$ that is obtained by contracting all intermediate tensor edges from $\TN_{T}$ to obtain a tensor with $n$ legs and bond dimension $D$, and then normalizing the resulting tensor. For simplicity, we often just use $\State(T)$ to denote $\State(\TN_T)$.

\begin{figure}[h]
	\begin{center}
		\begin{tikzpicture}[scale=0.8]
			\draw[dashed,line width=1.5] (0,0) circle (3cm);
			\node[rectangle,draw,fill=white] (center1) at (-1,0) {$\pt_{4,D}$};
			\node[rectangle,draw,fill=white] (center2) at (1,0) {$\pt_{4,D}$};
			
			\foreach \angle/\label in {60/1, 120/2, 180/3, 240/4, 300/5, 360/6} {
				\node[circle,draw,fill=white] (leaf\label) at (\angle:3cm) {\label};
			}
			
			\draw[line width=2, color=blue] (center1) -- (center2);
			\foreach \label in {1,2,3} {
				\draw[line width=2, color=blue] (center1) -- (leaf\label);
			}
			\foreach \label in {4,5,6} {
				\draw[line width=2, color=blue] (center2) -- (leaf\label);
			}
			
		\end{tikzpicture}
		\caption{Here, $T$ is a nice tree with $6$ leaves and two degree-$4$ intermediate nodes. The above depicts the tree tensor network $\TN_{T}$ with all leaves ordered on a circle.}
		\label{fig:tree-tensor-network}
	\end{center}
\end{figure}
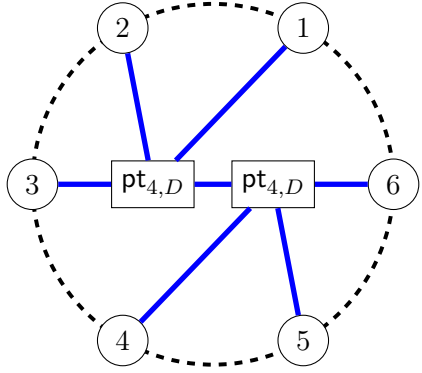

In particular, we identify the leaves of $T$ as the boundary nodes, and consider a planar drawing of $T$ such that all intermediate nodes are inside a circle, and all leaves are on the boundary circle. We then number all the leaves on the boundary circle following their ordering on the cycle (we start with an arbitrary leaf and number it as the first leaf, and continue counter-clockwise through the cycle). We also order the indices in $\State(T)$ so that the $i$-th qudit (or leg, if we interpret $\State(T)$ as a tensor) corresponds to the $i$-th leaf of $T$.

Throughout this section, we will always use $T$ to denote a nice tree with $n$ leaves and edge weight $\ln D$ for some prime power $D$ such that $D \ge n$.

Next, we show how to modify $\TN_{T}$ to obtain our construction of $\caD_{T}$ and $\caD_{T,e}$. Let $T$ be a nice tree and $e$ be an edge in $T$, $f \colon [D] \to \{0,1\}$ and $S \subseteq [D]$.

\paragraph*{Modifying a single edge of $\TN_T$.} We let $\TN_{T,e;f,S}$ be the tensor network obtained by putting the tensor $\sum_{i \in S} (-1)^{f(i)} \cdot \spz{i}\otimes \spz{i}$ on the edge $e$ in $\TN_{T}$, and $\State(T,e;f,S)$ be the corresponding normalized quantum state. Note that when $f$ is the constant $\mathbf{0}$ function and $S = [D]$, $\TN_{T,e;f,S}$  and $\State(T,e;f,S)$ are just $\TN_{T}$ and $\State(T)$, respectively. 

We are now finally ready to define $\caD_{T}$ and $\caD_{T,e}$.

\paragraph*{Defining ensembles of quantum states $\caD_{T}$ and $\caD_{T,e}$.} Assuming the existence of quantum-secure one-way functions, we let $\bF\colon [D] \to \{0,1\}$ be a quantum-secure pseudorandom function and $\bP \colon [D] \to [D]$ be a quantum-secure pseudorandom permutation. We define the following two ensembles $\caD_{T}$ and $\caD_{T,e}$:
\begin{enumerate}
	\item (The distribution $\caD_{T}$) Draw $f \sim \bF$, output
	\[
	\State(T,e; f,[D]).
	\]
	
	\item (The distribution $\caD_{T,e}$) Draw $f \sim \bF$, $p \sim \bP$, set $S = \{ p(i) : i \in \left[\sqrt{D}\right] \}$, output
	\[
	\State(T,e;f,S).
	\]
\end{enumerate}

To show~\cref{theo:PSE-tree}, we will first show in~\cref{sec:pe_link-tree-RT} that $\caD_{T}$ and $\caD_{T,e}$ has entropy structure $T$ and $T_{[e]}$, respectively; then in~\cref{sec:pe_link-tree-ind}, we will show $\caD_{T}$ and $\caD_{T,e}$ are computationally indistinguishable.

\subsubsection{Proof of RT formula entanglement scaling}\label{sec:pe_link-tree-RT}

To show that $\caD_{T}$ and $\caD_{T,e}$ has entropy structure $T$ and $T_{[e]}$, respectively, it suffices to prove the following lemma.

\begin{lemma}\label{lm:RT-general}
	$\State(T,e;f,S)$ has holographic entropy structure $T_{e,S}$, where $T_{e,S}$ is a tree that is identical to $T$ except for edge $e$ having weight $\ln |S|$ instead of $\ln D$.
\end{lemma}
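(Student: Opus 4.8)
The plan is to establish the holographic entropy structure of $\State(T,e;f,S)$ by a max-flow/min-cut argument in the spirit of~\cite{Pastawski_2015}, reducing the computation of $S_A(\rho)$ for an arbitrary boundary subset $A \subseteq [n]$ to the min-cut $\mincut_{A,[n]\setminus A}(T_{e,S})$. First I would set up the picture: as suggested in the proof sketch, imagine that on every edge of $T$ other than $e$ we have inserted a maximally entangled link state $\sum_{i\in[D]}\spz{i}\spz{i}$ (unnormalized), and on $e$ we have the link state $\sum_{i\in S}(-1)^{f(i)}\spz{i}\spz{i}$; contracting these link states into the adjacent perfect tensors does not change $\State(T,e;f,S)$. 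So the whole state is built from (i) perfect tensors $\pt_{d_u,D}$ at intermediate nodes and (ii) bipartite link states on edges, one of which (on $e$) has Schmidt rank $|S|$ and the rest Schmidt rank $D$.

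\medskip

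The key step is the upper bound $S_A(\rho)\le \mincut_{A,[n]\setminus A}(T_{e,S})$ together with a matching lower bound. For the upper bound, I would fix a minimum cut $\gamma$ separating $A$ from $[n]\setminus A$ in $T_{e,S}$ (so $\gamma$ is a set of edges whose total weight is $\mincut$), and argue that $\rho_A$ is supported on a space of dimension at most $\prod_{f\in\gamma}(\text{Schmidt rank of the link on }f)$, which has logarithm exactly $\mincut_{A,[n]\setminus A}(T_{e,S})$: since $T$ is a tree, cutting $\gamma$ disconnects the leaves in $A$ from those in $[n]\setminus A$, and the reduced state on $A$ can be written by purifying through the link states sitting on $\gamma$. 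For the lower bound — which is really the heart of the matter — I would use the sketch's idea of a clever path-covering of the tree: exhibit an explicit isometry from the Hilbert space carried by the cut edges $\gamma$ into the $A$-system that recovers $\rho_A$, using perfectness of the intermediate tensors to "push" the link-state halves through each node along vertex-disjoint paths. Concretely, in a tree the min-cut $\gamma$ has size equal to a max set of edge-disjoint (in fact one can arrange them compatibly) paths from $A$-leaves across $\gamma$ to $[n]\setminus A$-leaves, and feeding the cut-edge states along these paths and using each perfect tensor as a unitary on an appropriate bipartition reconstructs the full state; this shows $\rho_A$ has rank exactly $\prod_{f\in\gamma}(\text{rank})$ and, because the link states are (up to signs $(-1)^{f(i)}$, which are local phases that do not affect entropy) flat, that $S_A(\rho)$ equals the log of that rank, i.e.\ $\mincut_{A,[n]\setminus A}(T_{e,S})$.

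\medskip

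The main obstacle I anticipate is making the path-covering argument fully rigorous for an arbitrary bipartition $A$ — in particular verifying that one can always choose, at each intermediate perfect tensor, a bipartition of its legs into two equal halves compatible with the global min-cut so that the tensor acts as the desired unitary, and handling intermediate nodes whose degree is not met exactly by the paths (one may need to pad with the maximally mixed link states, which is exactly why the sketch inserts them on every edge). Since $D\ge n$ and the tree has $n$ leaves, each intermediate node has degree at most $n\le D$, so the perfect tensors of~\cref{lm:perfect-tensor} exist and are efficiently implementable; I would use the "nice tree" hypothesis (even degrees, equal weights) to ensure the half-leg bipartitions are well-defined. A secondary point to be careful about: the edge $e$ carries rank $|S|$ rather than $D$, so when $e\in\gamma$ its contribution to the cut weight is $\ln|S|$ and when $e\notin\gamma$ it is irrelevant; this is precisely the graph $T_{e,S}$ in the statement, so the min-cut over $T_{e,S}$ already encodes the right bookkeeping, and no separate case analysis on whether $e$ is cut is needed beyond tracking this one edge weight. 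Once the isometry-along-paths construction is in place, the entropy computation is immediate from Schmidt decomposition, since applying an isometry to a subsystem does not change its entropy and the global link states are flat up to phases.
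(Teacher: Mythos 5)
Your proposal follows essentially the same strategy as the paper's proof: decompose the tree into $n/2$ edge-disjoint leaf-to-leaf paths via max-flow/min-cut, cut each path at one chosen edge (with the min-cut edges among them), and use perfectness of the intermediate tensors to write the boundary state as $(U_A\otimes U_{A^c})$ applied to a tensor product of the link states sitting on the cuts, so that $S_A(\rho)$ is the sum of the link-state entropies across the min-cut. You replace the paper's explicit case analysis (whether $e$ belongs to some $T$-min-cut or to none) by choosing the min-cut of $T_{e,S}$ directly, which is fine since any min-cut of $T_{e,S}$ has the same number of edges as a min-cut of $T$ and is therefore also a $T$-min-cut, so the max-flow path covering applies to your chosen cut unchanged.
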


To prove \cref{lm:RT-general}, we need the following decomposition procedure. Given a nice tree $T$ with $n$ leaves and an edge $e$ from $T$, let $A \subseteq [n]$ and $A^c = [n] \setminus A$.

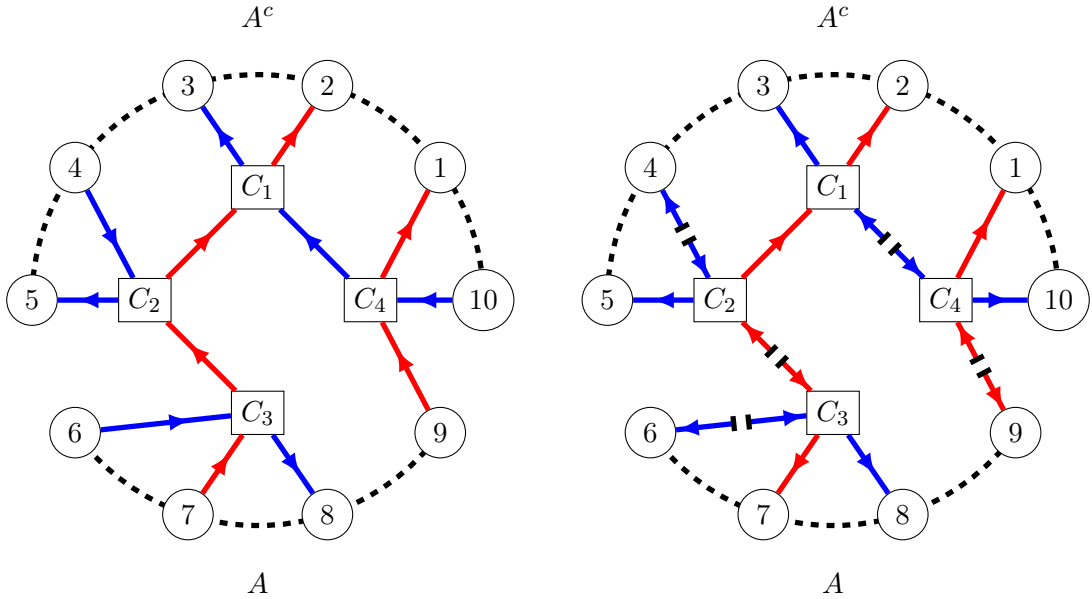
\begin{figure}[h]
	\begin{center}
		\begin{tikzpicture}[scale=1.0, line1/.style={decoration={
					markings,
					mark=at position 0.7 with {\arrow{Latex[length=3mm]}}},postaction={decorate}},
			line2/.style={decoration={
					markings,
					mark=at position 0.3 with {\arrowreversed{Latex[length=3mm]}}},postaction={decorate}},
			line3/.style={decoration={
					markings,
					mark=at position .0 with {\arrowreversed{Latex[length=3mm]}},
					mark=at position .5 with {\fill[white] (-0.1,-0.1) rectangle (0.1,0.1); \draw[black] (-0.1,-0.1) -- (-0.1,0.1); \draw[black] (0.1,-0.1) -- (0.1,0.1);},
					mark=at position 1.0 with {\arrow{Latex[length=3mm]}},
				},postaction={decorate}},
			]
			
			\node[rectangle,draw,fill=white] (center1) at (0,1.5)  {$C_1$};
			\node[rectangle,draw,fill=white] (center2) at (-1.5,0) {$C_2$};
			\node[rectangle,draw,fill=white] (center3) at (0,-1.5) {$C_3$};
			\node[rectangle,draw,fill=white] (center4) at (1.5,0)  {$C_4$};
			
			\foreach \angle/\label in {36/1, 72/2, 108/3, 144/4, 180/5, 216/6, 252/7, 288/8, 324/9, 360/10} {
				\node[circle,draw,fill=white] (leaf\label) at (\angle:3cm) {\label};
			}
			
			\draw[line width=2, red, line1] (center2) --  (center1);
			\draw[line width=2, red, line2] (center2) -- (center3);
			\draw[line width=2, blue, line2] (center1) -- (center4);
			
			\draw[line width=2, red, line1] (center1) -- (leaf2);
			\draw[line width=2, blue, line1] (center1) -- (leaf3);
			\draw[line width=2, red, line2](center3) -- (leaf7);
			
			\draw[line width=2, blue, line2] (center2) -- (leaf4);
			\draw[line width=2, blue, line1] (center2) -- (leaf5);
			\draw[line width=2, blue, line2] (center3) -- (leaf6);
			\draw[line width=2, blue, line1] (center3) -- (leaf8);
			\draw[line width=2, blue, line2] (center4) -- (leaf10);

			\draw[line width=2, red, line2] (center4) -- (leaf9);
			\draw[line width=2, red, line1] (center4) -- (leaf1);
			
			\begin{scope}[on background layer]
				\draw[dashed, line width=2] (leaf10) arc (0:144:3cm);
				\draw[dashed, line width=2] (leaf4) arc (144:180:3cm);
				\draw[dashed, line width=2] (leaf6) arc (216:288:3cm);
				\draw[dashed, line width=2] (leaf8) arc (288:324:3cm);
			\end{scope}

			\node[below] at (0,-3.5) {$A$};
			\node[above] at (0,3.5) {$A^c$};
			
		\end{tikzpicture}
		\qquad
		\begin{tikzpicture}[scale=1.0, line1/.style={decoration={
					markings,
					mark=at position 0.7 with {\arrow{Latex[length=3mm]}}},postaction={decorate}},
			line2/.style={decoration={
					markings,
					mark=at position 0.7 with {\arrowreversed{Latex[length=3mm]}}},postaction={decorate}},
			line3/.style={decoration={
					markings,
					mark=at position .0 with {\arrowreversed{Latex[length=3mm]}},
					mark=at position .5 with {\fill[white] (-0.1,-0.1) rectangle (0.1,0.1); \draw[black] (-0.1,-0.1) -- (-0.1,0.1); \draw[black] (0.1,-0.1) -- (0.1,0.1);},
					mark=at position 1.0 with {\arrow{Latex[length=3mm]}},
				},postaction={decorate}},
			]
			\node[rectangle,draw,fill=white] (center1) at (0,1.5)  {$C_1$};
			\node[rectangle,draw,fill=white] (center2) at (-1.5,0) {$C_2$};
			\node[rectangle,draw,fill=white] (center3) at (0,-1.5) {$C_3$};
			\node[rectangle,draw,fill=white] (center4) at (1.5,0)  {$C_4$};
			
			\foreach \angle/\label in {36/1, 72/2, 108/3, 144/4, 180/5, 216/6, 252/7, 288/8, 324/9, 360/10} {
				\node[circle,draw,fill=white] (leaf\label) at (\angle:3cm) {\label};
			}
			
			\draw[line width=2, red, line1] (center2) --  (center1);
			\draw[line width=2, red, line3] (center2) -- (center3);
			\draw[line width=2, blue, line3] (center1) -- (center4);
			
			\draw[line width=2, red, line1] (center1) -- (leaf2);
			\draw[line width=2, blue, line1] (center1) -- (leaf3);
			\draw[line width=2, red, line1](center3) -- (leaf7);
			
			\draw[line width=2, blue, line3] (center2) -- (leaf4);
			\draw[line width=2, blue, line1] (center2) -- (leaf5);
			\draw[line width=2, blue, line3] (center3) -- (leaf6);
			\draw[line width=2, blue, line1] (center3) -- (leaf8);
			\draw[line width=2, blue, line1] (center4) -- (leaf10);

			\draw[line width=2, red, line3] (center4) -- (leaf9);
			\draw[line width=2, red, line1] (center4) -- (leaf1);
			
			\begin{scope}[on background layer]
				\draw[dashed, line width=2] (leaf10) arc (0:144:3cm);
				\draw[dashed, line width=2] (leaf4) arc (144:180:3cm);
				\draw[dashed, line width=2] (leaf6) arc (216:288:3cm);
				\draw[dashed, line width=2] (leaf8) arc (288:324:3cm);
			\end{scope}

			\node[below] at (0,-3.5) {$A$};
			\node[above] at (0,3.5) {$A^c$};
			
		\end{tikzpicture}
		\caption{Here, $T$ is a nice tree with $10$ leaves and $4$ degree-$4$ intermediate nodes. The above depicts the tree tensor network $\TN_{T}$ with all leaves ordered on a circle. The boundary is partitioned into two parts $A = \{6,7,8,9\}$ and $A^c = \{1,2,3,4,5,10\}$. The min-cut between $A$ and $A^c$ is $2$, so by the max-flow min-cut theorem, we can find $2$ edge-disjoint paths going from $A$ to $A^c$, which are colored red in the graph on the left. In a path covering, we also cover the remaining vertices and edges by $3$ other edge-disjoint paths, which are colored blue in the graph on the left.
		In the graph on the right, we cut each of the 5 paths in the middle, reorient the directions of all edges to be from the cutting points to both end points.}

		\label{fig:flow-on-tree}
	\end{center}
\end{figure}

\paragraph{Decomposition of a nice tree into a path covering.} Recall that all edge weights from $T$ are $\ln D$, which implies that the minimum cut between $A$ and $A^c$ on $T$ is an integer multiple of $\ln D$. Suppose it is $k \cdot \ln D$. Then by the max-flow min-cut theorem, we can find $k$ edge-disjoint paths connecting leaves from $A$ to $A^c$. Let them be $P_1,P_2,\dotsc,P_k$. We then remove all these paths from $T$. We claim that the remaining edges of $T$ can be decomposed into $n/2 - k$ many edge-disjoint paths that either connect leaves from $A$ back to $A$ or leaves from $A^c$ back to $A^c$. 

To see this, we pick an arbitrary remaining edge $e$ from $T$. Note that since all intermediate nodes have even degrees and removing some paths does not change this condition, we can extend $e$ into a path connecting a leaf to another leaf. Since $P_1,\dotsc,P_k$ is already a max flow from $A$ to $A^c$, the path we just constructed cannot be between $A$ and $A^c$ (otherwise, the max flow between $A$ and $A^c$ would be greater than $k \cdot \ln D$). Therefore, the path we just constructed must connect leaves within $A$ or $A^c$. We remove this path from $T$ and continue (note that all intermediate nodes still have even degrees). Since each path connects two distinct leaves of $T$ and there are $n$ leaves in $T$, the process must stop when we construct exactly $n/2 - k$ paths, and these paths cover all edges from $T$ (otherwise, the process would continue, but there are no more leaves to connect between,  contradiction). We call a collection of edge-disjoint leaf-to-leaf paths $P_1,\dotsc,P_{n/2}$ that covers all edges from $T$ a \emph{path covering} of $T$. 

\paragraph*{A unitary mapping from cuts to the boundary state.} Recall that $k \cdot \ln D$ is the minimum cut of $T$ between $A$ and $A^c$. Since $P_1,\dotsc,P_k$ corresponds to a max flow, we can pick edges $e_i \in P_i$ such that $e_1,\dotsc,e_k$ form a minimum cut between $A$ and $A^c$. We then pick some arbitrary edges $e_{k+1},\dotsc,e_{n/2}$ such that $e_{i}$ is on $P_i$ for every $i \in \{k+1,\dotsc,n/2\}$.

We claim that $P_1,\dotsc,P_{n/2}$ together with $e_1,\dotsc,e_{n/2}$ create a unitary mapping from $n$ qudits on the edges $e_1,\dotsc,e_{n/2}$ to the $n$ qudits on the boundary.

In more detail, for every $i$, we (1) cut $P_i$ at $e_i$ and call the two endpoints $\ell_i$ and $r_i$ such that after removing the minimum cut $\{e_j\}_{j \in [k]}$, $\ell_i$ is on the side of $A$ and $r_i$ is on the side of $A^c$ for every $i \in [k]$, and (2) create two ``flows'' from $\ell_i$ and $r_i$ to the two ends of $P_i$, respectively. Note that all intermediate nodes have the same in-flows and out-flows. Hence, an intermediate node can be viewed as a unitary mapping from the incoming edges to the outgoing edges (recall that all intermediate nodes are perfect tensors). The cutting points $\ell_i$ and $r_i$ are the sources of the flow, and the leaves are the sinks of the flow. Therefore, the tensor network after the cut can be interpreted as a unitary mapping from $\ell_i$ and $r_i$ to the leaves. See~\cref{fig:flow-on-tree} for an illustration.

Now, we are ready to prove~\cref{lm:RT-general}, which is restated below in more detail for convenience.

\begin{reminder}{\cref{lm:RT-general}}
	Let $e$ be an edge in $T$, $f \colon [D] \to \{0,1\}$ and $S \subseteq [D]$. For every $A \subseteq [n]$, it holds that
	\[
	S_{A}\left(\State(T,e;f,S)\right) = \mincut_{A,[n] \setminus A}(T_{e,S}),
	\]
	where $ \mincut_{A,[n] \setminus A}(T_{e,S})$ denotes the minimum cut between $A$ and $[n] \setminus A$ in the (weighted) graph $T_{e,S}$.
\end{reminder}

\begin{proofof}{\cref{lm:RT-general}}
	Let $e$ be an edge and $A \subseteq [n]$. In the following, we consider two cases; we will construct the paths $P_1,\dotsc,P_{n/2}$ and edges $e_1,\dotsc,e_{n/2}$ depending on which case we are in.
	
	\paragraph*{Case I: $e$ belongs to some min-cut.} The first case is that there exists a min-cut between $A$ and $A^c$ on $T$ such that $e$ is part of it. Let $e_1,e_2,\dotsc,e_k$ be a min-cut such that $e_1 = e$. In this case, we can find $k$ paths $P_1,\dotsc,P_k$ between $A$ and $A^c$ such that $e_i$ is an edge on $P_i$ for every $i \in [k]$ (Indeed, any max-flow from $A$ to $A^c$ must be saturated on $e_1,\dotsc,e_k$, meaning that the corresponding paths contain $e_1,\dotsc,e_k$, and each $e_i$ is on exactly one path. Note that if there is a path crossing more than one $e_i$'s, then the total flow would be less than $k$). We also find paths $P_{k+1},\dotsc,P_{n/2}$ that connect leaves within $A$ or $A^c$ and select edges $e_i$ from $P_i$ for every $i \in \{k+1,\dotsc,n/2\}$.

	\paragraph*{Case II: $e$ does not belong to any min-cut.} The second case is that $e$ does not belong to any min-cut between $A$ and $A^c$ on $T$. In particular, this means if we remove $e$ from $T$, the min-cut between $A$ and $A^c$ is still $k \cdot \ln D$. This, in turn, implies that we can find $k$ paths $P_1,\dotsc,P_k$ and min-cut $e_1,\dotsc,e_k$ between $A$ and $A^c$ such that $e_i$ is on $P_i$ for every $i \in [k]$, and $e$ is not contained in any of the paths. We then find paths $P_{k+1},\dotsc,P_{n/2}$ that connect leaves within $A$ or $A^c$. Without loss of generality, we assume that $e$ is on path $P_{k+1}$ and let $e_{k+1} = e$. We also select edges $e_{i}$ from $P_i$ for every $i \in \{k+2,\dotsc,n/2\}$.
	
	\newcommand{\istar}{i^{\star}}
	
	\paragraph*{Calculating the entropy.} Now we are ready to calculate the entropy $S_A(\State(T,e;f,S))$. For every $i \in [n/2]$, we set $S_i = S$ if $e_i = e$ and $S_i = [D]$ otherwise. Note that in the first case above, $S_1 = S$, and in the second case above, $S_{k+1} = S$. For notational convenience, we set $\istar = 1$ in the first case, and $\istar = k+1$ in the second case.
	
	Let $\ell_i$ and $r_i$ be the two cut ends of $e_i$ for every $i \in [n/2]$, and the max flow from $A$ to $A^c$ flow from $\ell_i$ to $r_i$ for every $i \in [k]$. Let $W_A$ and $W_{A^c}$ be the set of $i \in [n/2] \setminus [k]$ such that $P_i$ connects within $A$ or $A^c$, respectively.
	
	We have
	\[
	\State(T,e;f,S) = (U_A \otimes U_{A^c}) \left( \bigotimes_{i \in [n/2] \setminus \{\istar\} } \frac{1}{\sqrt{|S_i|}} \sum_{j \in S_i} \spz{j}_{\ell_i}\spz{j}_{r_i} \otimes \frac{1}{\sqrt{|S_{\istar}|}} \sum_{j \in S_{\istar}} (-1)^{f(j)} \spz{j}_{\ell_{\istar}}\spz{j}_{r_{\istar}} \right).
	\]
	
	Where $U_A$ is the unitary mapping from $\ell_1,\dotsc,\ell_k$ and $\ell_i,r_i$ for every $i \in W_A$ to indices in $A$, and $U_{A^c}$ is the unitary mapping from $r_1,\dotsc,r_k$ and $\ell_i,r_i$ for every $i \in W_{A^c}$ to indices in $A^c$.
	
	In particular, the above means the entropy of $\State(T,e;f,S)$ across $A$ and $A^c$ equals the entropy of 
	\[
	\bigotimes_{i \in [n/2] \setminus \{\istar\} } \frac{1}{\sqrt{|S_i|}} \sum_{j \in S_i} \spz{j}_{\ell_i}\spz{j}_{r_i} \otimes \frac{1}{\sqrt{|S_{\istar}|}} \sum_{j \in S_{\istar}} (-1)^{f(j)} \spz{j}_{\ell_{\istar}}\spz{j}_{r_{\istar}}
	\]
	across $\ell_1,\dotsc,\ell_k$ and $\ell_i,r_i$ for every $i \in W_A$ and $r_1,\dotsc,r_k$ and $\ell_i,r_i$ for every $i \in W_{A^c}$. This entropy can be directly calculated as $\sum_{i \in [k]} \ln |S_i|$, which is exactly the weight of sums of $e_1,\dotsc,e_k$, which is in turn the min-cut between $A$ and $A^c$ in $T$.
\end{proofof}

\subsubsection{Computational Indistinguishability}\label{sec:pe_link-tree-ind}

Next, we show that $\caD_{T}$ and $\caD_{T,e}$ are computationally indistinguishable.

\begin{lemma}
	$\caD_{T}$ and $\caD_{T,e}$ are computationally indistinguishable.
\end{lemma}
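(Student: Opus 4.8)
The plan is a standard one-copy-to-one-copy reduction to \cref{cor:pseudo-ent}: I will show that any efficient algorithm distinguishing $\caD_T$ from $\caD_{T,e}$ yields an efficient algorithm distinguishing $\caD_{\sf full}$ from $\caD_{\sf subset}$, which \cref{cor:pseudo-ent} forbids. The one ingredient that needs checking is that the map from a link state to the corresponding holographic state is efficient and oblivious to the (pseudo)random data. Concretely, writing $\psi_{f,S} = \frac{1}{\sqrt{|S|}}\sum_{i\in S}(-1)^{f(i)}\ket{i}\ket{i}$ for the $2$-qudit state placed on edge $e$, I claim there is a single $\poly(n,\log D)$-size quantum circuit $\mathcal{C}$, \emph{not depending on $f$ or $S$}, with $\mathcal{C}(\psi_{f,S}) = \State(T,e;f,S)$. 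To build $\mathcal{C}$, fix once and for all a path covering $P_1,\dots,P_{n/2}$ of $T$ in which $e$ lies on some path $P_{i^\star}$, together with cut edges $e_i \in P_i$ with $e_{i^\star}=e$, exactly as in the proof of \cref{lm:RT-general}. As shown there, cutting each $P_i$ at $e_i$ exhibits the fully contracted tensor network as a composition $U$ of the perfect-tensor unitaries at the intermediate nodes, applied to the link states sitting on $e_1,\dots,e_{n/2}$; by \cref{lm:perfect-tensor} each of these perfect-tensor unitaries has a $\poly(n,\log D)$-size circuit, so $U$ — a product of polynomially many of them — is a $\poly(n,\log D)$-size circuit. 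The circuit $\mathcal{C}$ then: (i) treats its input $\psi_{f,S}$ as the link state on the two halves of $e=e_{i^\star}$; (ii) freshly prepares the maximally entangled link state $\frac{1}{\sqrt{D}}\sum_j\ket{j}\ket{j}$ on each of the other $e_i$ (poly many, each preparable in size $\poly(\log D)$); and (iii) applies $U$. By the identity in the proof of \cref{lm:RT-general}, its output is $\State(T,e;f,S)$, and $\mathcal{C}$ is manifestly independent of $f,S$ since all of that data is carried by the input state $\psi_{f,S}$.

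Given $\mathcal{C}$, the reduction is immediate. Suppose for contradiction that some QPT algorithm $A$ and polynomial $m\le\poly(\lambda)$ distinguish $\caD_T^{\otimes m}$ from $\caD_{T,e}^{\otimes m}$ with non-negligible advantage. Define $A'$ which, on input $m$ copies of a $2$-qudit state, applies $\mathcal{C}$ (with fresh ancillas each time) to each copy and then runs $A$ on the resulting $m$ states. If the input is $\psi^{\otimes m}$ with $\psi\sim\caD_{\sf full}$ (so $f\sim\bF$, $S=[D]$), then $A$ receives $m$ copies of $\State(T,e;f,[D])$, i.e.\ a sample from $\caD_T$; if $\psi\sim\caD_{\sf subset}$ (so $f\sim\bF$, $p\sim\bP$, $S=\{p(i):i\in[\sqrt D]\}$), then $A$ receives $m$ copies of $\State(T,e;f,S)$, i.e.\ a sample from $\caD_{T,e}$ — here using that $\caD_{T,e}$ is defined with subset size exactly $k=\sqrt D$, which satisfies $k\le D$ and $\log k=\frac12\log D=\omega(\log n)$, so \cref{cor:pseudo-ent} applies as stated. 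Hence $A'$ is a QPT distinguisher for $\caD_{\sf full}$ versus $\caD_{\sf subset}$ given $\poly(n)$ many copies with non-negligible advantage, contradicting \cref{cor:pseudo-ent}. Therefore $\caD_T$ and $\caD_{T,e}$ are computationally indistinguishable.

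The only real obstacle is the first step — verifying that $\State(T,e;f,S)$ can be produced from one copy of $\psi_{f,S}$ by a polynomial-size circuit that is oblivious to $(f,S)$ — but this is essentially a repackaging of the unitary-from-flow construction already carried out in the proof of \cref{lm:RT-general} (one just designates $e$ as the special edge and reads that argument as a circuit identity rather than an entropy identity). The remainder — the clean hybrid-free reduction and the matching of the subset-size parameter $k=\sqrt D$ — is routine.
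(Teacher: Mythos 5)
Your proposal is correct and follows essentially the same route as the paper's own proof: both hinge on the unitary decomposition of $\State(T,e;f,S)$ established in the proof of \cref{lm:RT-general}, which isolates the link state on $e$ as one tensor factor (with the maximally entangled states on the other cut edges fixed and $(f,S)$-independent), and then reduce multi-copy distinguishability of $\caD_T$ vs.\ $\caD_{T,e}$ to that of $\caD_{\sf full}$ vs.\ $\caD_{\sf subset}$ from \cref{cor:pseudo-ent} with $k=\sqrt D$. The paper's proof simply makes the specific clean choice of taking $A,A^c$ to be the partition induced by deleting $e$ (so $e$ is automatically the unique min-cut edge and one is in Case~I of \cref{lm:RT-general}), and phrases the reduction in terms of conjugating $\sigma_{\sf large}/\sigma_{\sf small}$ by $(U_A\otimes U_{A^c})^{\otimes t}$ rather than in terms of a state-preparation circuit $\mathcal{C}$, but these are two expositions of the same argument.
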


\begin{proof}
	Let $A$ and $A^c$ be the partition of boundary vertices $[n]$ when removing the edge $e$ from $T$. Clearly, $e$ is the min-cut between $A$ and $A^c$ in $T$.
	
	Following the proof of~\cref{lm:RT-general} applied to the cut $A$ and $A^c$ and edge $e$ (here, we must be in the first case since $e$ is the min-cut), we have
	\begin{equation}\label{eq:State-decomp}
		\State(T,e;f,S) = (U_A \otimes U_{A^c}) \left( \bigotimes_{i \in [n/2] \setminus \{1\} } \frac{1}{\sqrt{|S_i|}} \sum_{j \in S_i} \spz{j}_{\ell_i}\spz{j}_{r_i} \otimes \frac{1}{\sqrt{|S_{1}|}} \sum_{j \in S_{1}} (-1)^{f(j)} \spz{j}_{\ell_{1}}\spz{j}_{r_{1}} \right),
	\end{equation}
	where all the $S_{i}$, $\ell_i$, $r_i$ are defined as in~\cref{lm:RT-general}.
	
	Now, let $t$ be an arbitrary polynomial in $n$, and $\caD_{\sf full}$ and $\caD_{\sf subset}$ be the two distributions from~\cref{cor:pseudo-ent} with $k = \sqrt{D}$.
Let
\[
\sigma_{\sf large} = \Ex_{\spz{\phi} \sim \caD_{\sf full}} \left[ \optk{\phi}{t} \right]
\]
and
\[
\sigma_{\sf small} = \Ex_{\spz{\phi} \sim \caD_{\sf subset}} \left[ \optk{\phi}{t} \right].
\]	
\cref{cor:pseudo-ent} implies that $\sigma_{\sf large}$ and $\sigma_{\sf small}$ are computationally indistinguishable against polynomial-time quantum adversaries.

By our decomposition of $\State(T,e;f,S)$ from~\cref{eq:State-decomp}, it follows that there exists a fixed (and polynomial-time computable given $T$ and $e$) state $\Psi$ such that
\[
\Ex_{\spz{\phi} \sim \caD_{T}} \left[ \optk{\phi}{t} \right] = (U_A \otimes U_{A^c})^{\otimes t}  \left(\Psi \otimes \sigma_{\sf large} \right) (U_A \otimes U_{A^c})^{\dagger t}
\]
and
\[
\Ex_{\spz{\phi} \sim \caD_{T,e}} \left[ \optk{\phi}{t} \right] = (U_A \otimes U_{A^c})^{\otimes t}  \left(\Psi \otimes \sigma_{\sf small} \right) (U_A \otimes U_{A^c})^{\dagger t}.
\]

Hence, these two mixed states are also computationally indistinguishable against polynomial-time quantum adversaries.
\end{proof}

\subsubsection{Extension to public-key pseudoentangled holographic states} \label{sec:pe_link-tree-public-key}

Now we briefly discuss how to extend our construction to the public-key version of pseudoentangled holographic states~\cite{bouland2023public}. We first define public-key pseudoentangled holographic states, following~\cite{bouland2023public}.

\newcommand{\sfKPhi}{\sfK^{\Phi}}
\newcommand{\sfKPsi}{\sfK^{\Psi}}
\newcommand{\sfGen}{\mathsf{Gen}}

\begin{definition}[Public-key pseudoentangled holographic states with entropy gap]\label{defi:pk-pe-states}
	Let $\lambda$ be the security parameter. Let $\H = \{ \H_{\lambda} \}_{\lambda \in \posN}$, $\sfKPhi = \{ \sfKPhi_{\lambda} \}_{\lambda \in \posN}$ and $\sfKPsi = \{ \sfKPsi_{\lambda} \}_{\lambda \in \posN}$ be a family of Hilbert spaces and two families of key spaces. Let $\{G_k\}_{k \in \sfKPhi \cup \sfKPsi}$ be a family of keyed weighted graphs indexed by $\lambda$. Two families of quantum states $\{\spz{\Phi}_{k} \in \rmS(\H)\}_{k \in \sfKPhi}$ and $\{\spz{\Psi}_{k} \in \rmS(\H)\}_{k \in \sfKPsi}$ (parameterized by $\lambda$) form a public-key pseudoentangled holographic state ensemble (PES) with exact (resp.~approximate) entropy structure and gap $A$ vs $B$ w.r.t. to cut $S$, if the following three conditions hold:
	
	\begin{enumerate}
		\item There is a polynomial-time quantum algorithm $\sfGen$ that given key $k \in \sfKPhi \cup \sfKPsi$, outputs a quantum state $\spz{\psi_k} \in \rmS(\H)$ that has entropy structure $G_k$. Suppose that $\spz{\psi_k}$ has $n$ $D$-dimensional qudits.
		
		\item For any polynomial-time quantum algorithm $A$, it holds that
		\[
		\left| \Pr_{k \leftarrow \sfKPsi_{\lambda}}[ A(k) = 1] - \Pr_{k \leftarrow \sfKPhi_{\lambda}}[ A(k) = 1] \right| \le \negl(\lambda)\mathdot
		\]
		
		\item The following statements are true:
		\begin{itemize}
			\item $\Pr_{k \leftarrow \sfKPsi_{\lambda}}[ \mincut_{S, [n] \setminus S}(G_k) \le A ] \ge 1 - \negl(\lambda)$.
			\item $\Pr_{k \leftarrow \sfKPhi_{\lambda}}[ \mincut_{S, [n] \setminus S}(G_k) \ge B ] \ge 1 - \negl(\lambda)$.
		\end{itemize}
	\end{enumerate}
	
\end{definition}

We will prove the following.
\begin{theorem}[Public-key pseudoentangled holographic tree states]\label{theo:PSE-pk-tree}
	Let $\eps \in (0,1)$. Let $T$ be a nice tree with $n$ leaves and edge weight $\ln D \ge n^{\Omega(1)}$. Assuming the standard LWE assumption holds,\footnote{see~\cite[Assumption~2.18]{bouland2023public} for a formal definition} we have:
	\begin{itemize}
		\item There are two ensembles of quantum states $\caD_{\sf low}$ and $\caD_{\sf high}$ that constitute a pseudoentangled holographic state ensemble with exact entropy structure and gap $n^{\eps}$ vs. $\Omega(n)$ w.r.t. to some cut $S$.
	\end{itemize}
\end{theorem}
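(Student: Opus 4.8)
The plan is to re-run the tree-tensor-network construction of \cref{sec:pe_link-tree} essentially verbatim, replacing only the private-key pseudoentangled link state of \cref{cor:pseudo-ent} (built from a PRF and a PRP, hence from one-way functions) by the LWE-based \emph{public-key} pseudoentangled state ensemble of~\cite{bouland2023public}. Concretely, I would fix a prime power $D$ with $\ln D = \Theta(n)$ -- which meets both $D > n$ (needed for the perfect tensors of \cref{lm:perfect-tensor}) and the hypothesis $\ln D \ge n^{\Omega(1)}$ -- and instantiate the ensemble of~\cite{bouland2023public} with security parameter $\Theta(n)$ so that each of its two registers has dimension $D$. Write $\sfKPhi$ (resp.\ $\sfKPsi$) for its ``high'' (resp.\ ``low'') public-key space, and let the link state obtained from a key $k$ be $\chi_k = |\Sigma_k|^{-1/2}\sum_{i \in \Sigma_k} c_i\spz{i}\spz{i}$, where $|c_i| = 1$, the subset $\Sigma_k \subseteq [D]$ and the phases $c_i$ are a superposition-sampleable function of $k$, and (with overwhelming probability over $k$) $|\Sigma_k| \ge 2^{\Omega(n)}$ for $k \in \sfKPhi$ while $|\Sigma_k| \le 2^{n^{\eps}}$ for $k \in \sfKPsi$. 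Fixing any edge $e$ of $T$, the generator $\sfGen(k)$ prepares $\chi_k$ on the two halves of the subdivided edge $e$ and contracts $\TN_{T}$; this runs in $\poly(n)$ time by \cref{lm:perfect-tensor} and the efficiency of~\cite{bouland2023public}. I would then take $\caD_{\sf high} = \{\sfGen(k)\}_{k \in \sfKPhi}$ and $\caD_{\sf low} = \{\sfGen(k)\}_{k \in \sfKPsi}$, and let $S$ be the set of leaves of $T$ on one side of $e$.

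For the entropy structure (conditions~(i) and~(iii) of \cref{defi:pk-pe-states}) the key point is that the proof of \cref{lm:RT-general} only uses that the state inserted on $e$ has the form $|\Sigma|^{-1/2}\sum_{i\in\Sigma} c_i\spz{i}\spz{i}$ with $|c_i| = 1$: the phases travel through the path-covering decomposition but never enter the final entropy, which equals the weight of the min-cut in the tree $T_{e,\Sigma}$ (that is, $T$ with the weight of $e$ changed to $\ln|\Sigma|$). Hence for every key $k$, $\sfGen(k)$ has \emph{exact} holographic entropy structure $G_k := T_{e,\Sigma_k}$, which is condition~(i). Since all edges of $T$ have equal weight, cutting $e$ alone is an optimal cut for $S$, so $\mincut_{S,[n]\setminus S}(G_k) = \ln|\Sigma_k|$ exactly; the entropy guarantee of~\cite{bouland2023public} then gives $\mincut_S(G_k) \ge \Omega(n)$ with probability $1-\negl(n)$ over $k \leftarrow \sfKPhi$ and $\mincut_S(G_k) \le n^{\eps}$ with probability $1-\negl(n)$ over $k \leftarrow \sfKPsi$, which is condition~(iii) with $A = n^{\eps}$ and $B = \Omega(n)$. (For a tree whose edge weight is only $n^{\Omega(1)}$ rather than $\Theta(n)$, the same argument gives gap $(\ln D)^{\eps}$ vs.\ $\Omega(\ln D)$ on the distinguished cut.)

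Indistinguishability (condition~(ii)) is then essentially free: because $\sfKPhi$ and $\sfKPsi$ are exactly the public-key spaces of~\cite{bouland2023public}, the statement ``no $\poly(n)$-time algorithm given $k$ can tell whether $k \leftarrow \sfKPsi$ or $k \leftarrow \sfKPhi$'' is literally the LWE-security of~\cite{bouland2023public}. Written as a reduction in the style of \cref{sec:pe_link-tree-ind} but with the key open: the map $\chi_k \mapsto \sfGen(k)$ is a fixed, \emph{key-independent}, $\poly(n)$-computable isometry -- tensor in the fixed EPR-ancilla state $\Psi$ of \cref{sec:pe_link-tree-ind} and apply $U_A \otimes U_{A^c}$ -- so any $\poly(n)$-time distinguisher for $\caD_{\sf high}$ vs.\ $\caD_{\sf low}$ (which, given $k$, can itself prepare $\poly(n)$ copies of the holographic state via $\sfGen$) would yield a $\poly(n)$-time distinguisher for the two~\cite{bouland2023public} ensembles given $k$.

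The one step I expect to require genuine care -- as opposed to bookkeeping -- is confirming that the link state of~\cite{bouland2023public} can be taken in the ``flat-spectrum'' form the \emph{exact} RT argument demands: its reduced state on each of the two registers must be \emph{exactly} maximally mixed on a $|\Sigma_k|$-dimensional subspace, up to a fixed publicly known local change of basis on the two legs (which can be absorbed into the two adjacent perfect tensors, since perfect tensors remain perfect under composition with local unitaries). If~\cite{bouland2023public} supplies subset- or subspace-type states this holds on the nose; if it only supplies states with a near-flat Schmidt spectrum, one lands instead in the \emph{approximate}-entropy-structure case of \cref{defi:pk-pe-states}, which still suffices for the hardness-of-geometry-reconstruction consequence. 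The remaining items -- primality of $D$, the degree bound, $\poly(n)$ circuit sizes, and matching~\cite{bouland2023public}'s gap parameters to the target $n^{\eps}$ vs.\ $\Omega(n)$ -- are routine and parallel \cref{sec:pe_link-tree}.
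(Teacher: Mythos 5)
Your proposal matches the paper's proof essentially verbatim: both swap the private-key PRF/PRP link state from \cref{cor:pseudo-ent} for the LWE-based public-key ensemble of \cite{bouland2023public} (the paper's \cref{lm:bouland-pk}), insert it on edge $e$, invoke \cref{lm:RT-general} for the exact entropy structure, and read off conditions (ii) and (iii) of \cref{defi:pk-pe-states} directly from \cref{lm:bouland-pk}. Two of your side remarks are actually improvements on the paper's presentation: the paper's proof sets $D = 2^{\ceil{n^\gamma}}$ with security parameter $\lambda = n^\gamma$ for a constant $\gamma < 1$, which literally yields gap $\lambda^\eps = n^{\gamma\eps}$ vs.\ $\Omega(\lambda) = \Omega(n^\gamma)$ rather than the stated $n^\eps$ vs.\ $\Omega(n)$ -- your choice $\ln D = \Theta(n)$ is what delivers the advertised gap; and the paper's appeal to \cref{lm:RT-general} -- which is proved only for link states of the diagonal flat-spectrum form $|\Sigma|^{-1/2}\sum_{i\in\Sigma}c_i\spz{i}\spz{i}$ -- silently assumes the \cite{bouland2023public} states take this form (the paper defines $\State(T,e;\spz{\psi_k})$ for arbitrary $\spz{\psi_k}$ and then cites the lemma without checking the hypothesis), which you correctly flag as the one non-bookkeeping step needing verification to sustain the ``exact'' entropy-structure claim.
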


To modify our previous (private-key) construction to be public-key, we will make use of the following public-key pseudo-entangled states by~\cite{bouland2023public}.

\begin{lemma}\label{lm:bouland-pk}
	Assuming the standard LWE assumption holds, for any $\eps \in (0,1)$, there are two families of key spaces $\sfKPhi = \{ \sfKPhi_{\lambda} \}_{\lambda \in \posN}$ and $\sfKPsi = \{ \sfKPsi_{\lambda} \}_{\lambda \in \posN}$ such that the following holds:
	
	\begin{itemize}
		\item There is a polynomial-time quantum algorithm $\sfGen$ that given a key $k \in \sfKPhi_\lambda \cup \sfKPsi_\lambda$, outputs a $2n$-qubit quantum state $\spz{\psi_k}$.
		
		\item For any polynomial-time quantum algorithm $A$, it holds that
		\[
		\left| \Pr_{k \leftarrow \sfKPsi_{\lambda}}[ A(k) = 1] - \Pr_{k \leftarrow \sfKPhi_{\lambda}}[ A(k)] \right| \le \negl(\lambda)\mathdot
		\]
		
		\item The followings are true:
		\begin{itemize}
			\item $\Pr_{k \leftarrow \sfKPsi_{\lambda}}[ S(\psi_k)_{[\lambda]} \le \lambda^\eps ] \ge 1 - \negl(\lambda)$.
			\item $\Pr_{k \leftarrow \sfKPhi_{\lambda}}[ S(\psi_k)_{[\lambda]} \ge \Omega(\lambda) ] \ge 1 - \negl(\lambda)$.
		\end{itemize}
	
		Here, $S(\psi_k)_{[\lambda]}$ denotes the entropy of $\psi_k$ across the bipartition $[\lambda]$ and $[2\lambda] \setminus [\lambda]$.
		
	\end{itemize}
\end{lemma}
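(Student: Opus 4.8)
The statement is the main construction of \cite{bouland2023public}, so I would prove it by recalling that construction and verifying it meets the three bullets with the parameters $\lambda$, $\eps$, and $n = \lambda$ above. I would organize the recollection around an LWE-based \emph{lossy function family}, which is (I believe) the essential mechanism: two efficient key-sampling procedures that output a classical description $s$ of an efficiently evaluable function $f_s \colon \{0,1\}^{\lambda} \to \{0,1\}^{\lambda}$, with the guarantees that (a) in ``injective mode'' $f_s$ is injective (indeed a permutation), (b) in ``lossy mode'' $\lvert \img(f_s) \rvert \le 2^{r(\lambda)}$ for a residual parameter $r$, and (c) the two distributions of descriptions are computationally indistinguishable for $\poly(\lambda)$-time quantum adversaries. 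Under the standard LWE assumption such a family exists with $r(\lambda) \le \lambda^{\eps}$; I would cite the relevant LWE-based instantiation for this rather than reprove it.

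Given the family, set $\sfKPhi_{\lambda}$ to be the distribution of injective-mode descriptions and $\sfKPsi_{\lambda}$ the distribution of lossy-mode descriptions, and let $\sfGen$ output, on description $s$, the $2\lambda$-qubit state
\[
\spz{\psi_s} \;=\; \frac{1}{\sqrt{2^{\lambda}}} \sum_{x \in \{0,1\}^{\lambda}} \spz{x}\,\spz{f_s(x)},
\]
prepared by Hadamarding the first register and coherently evaluating $f_s$ into the second; this is polynomial time, giving the first bullet. The second bullet is immediate: a distinguisher for $\sfKPsi$ vs.\ $\sfKPhi$ is by definition a distinguisher for lossy vs.\ injective descriptions, whose advantage is $\negl(\lambda)$ under LWE -- and note this is stronger than state indistinguishability, since from $s$ one can prepare arbitrarily many copies of $\spz{\psi_s}$. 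For the third bullet, one computes that $\ptr{2}{\proj{\psi_s}}$ is diagonal in the computational basis with eigenvalues $\bigl\{ \lvert f_s^{-1}(y) \rvert / 2^{\lambda} : y \in \img(f_s) \bigr\}$, so $S(\psi_s)_{[\lambda]}$ equals the Shannon entropy of the preimage-size distribution of $f_s$. In injective mode this distribution is uniform over $2^{\lambda}$ outcomes, hence $S(\psi_s)_{[\lambda]} = \lambda = \Omega(\lambda)$; in lossy mode it is supported on $\img(f_s)$, hence $S(\psi_s)_{[\lambda]} \le \log \lvert \img(f_s) \rvert \le r(\lambda) \le \lambda^{\eps}$. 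Both conclusions hold for every sampled description (or with probability $1 - \negl(\lambda)$ if injective mode is only almost-surely injective), giving the claimed high-probability bounds.

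The genuine work -- which I would import from \cite{bouland2023public} -- is the LWE instantiation achieving simultaneously (i) residual $r(\lambda) \le \lambda^{\eps}$, (ii) injective-mode image of size $2^{\Omega(\lambda)}$, and (iii) $\poly(\lambda)$-time indistinguishability of the two modes under the \emph{standard} (polynomial-hardness) LWE assumption. The tension is that the structure hidden by the lossy mode (roughly, a low-rank LWE secret) must have dimension small enough to keep the residual below $\lambda^{\eps}$, yet large enough that LWE in that dimension resists $\poly(\lambda)$-time attacks; this is handled by taking the LWE dimension to be $\lambda^{\delta}$ for some $\delta < \eps$ with a suitably scaled modulus, so that a $\poly(\lambda)$-time adversary is still polynomial in the LWE dimension while negligible-in-$\lambda^{\delta}$ advantage remains negligible in $\lambda$. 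Also routine but fiddly: instantiating the lossy function with the right Boolean domain and making the injective mode (essentially) a permutation of $\{0,1\}^{\lambda}$, possibly at the cost of a mild change in the qubit count absorbed into the polynomial relation between $n$ and $\lambda$. I expect this parameter balancing to be the only nontrivial step; everything else is the bookkeeping above.
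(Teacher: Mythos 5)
Your proposal is correct and matches the paper's treatment: the paper does not prove this lemma at all but simply imports it as the main construction of~\cite{bouland2023public}, which is exactly the lossy-function-based scheme (states $\frac{1}{\sqrt{2^\lambda}}\sum_x \spz{x}\spz{f_s(x)}$ with injective vs.\ lossy keys) that you sketch, and you correctly defer the only nontrivial ingredient --- the LWE instantiation with residual $\le \lambda^{\eps}$ and polynomial-time mode indistinguishability --- to that same reference. Your entropy computation and the observation that key indistinguishability is stronger than multi-copy state indistinguishability are accurate, so no gap.
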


Now, we are ready to specify the construction of our public-key pseudoentangled holographic states. Let $\gamma \in (0,1)$ be a constant. We let $D = 2^{\ceil{n^\gamma}}$.

Recall that we used $\TN_{T,e;f,S}$ to denote the tensor network obtained by putting the tensor $\sum_{i \in S} (-1)^{f(i)} \cdot \spz{i}\otimes \spz{i}$ on the edge $e$ in $\TN_{T}$, and $\State(T,e;f,S)$ to denote the corresponding normalized quantum state. Similarly, we use $\TN_{T,e;\spz{\psi}}$ to denote the tensor network obtained by putting the $2 \log D$-qubit state $\spz{\psi}$ (interpreted as a $2$-leg tensor of local dimension $D$) on the edge $e$ in $\TN_{T}$, and $\State(T,e; \spz{\psi} )$ to denote the corresponding normalized quantum state.

Letting $\lambda = n^\gamma$. Now, we are ready to define our ensembles of public-key pseudoentangled holographic states:

\begin{enumerate}
	\item[$(\caD_{\sf low})$] Draw $k \leftarrow \sfKPsi_{\lambda}$, run $\mathsf{Gen}(k)$ to obtain $\spz{\psi_k}$, and output $\State(T,e; \spz{\psi_k} )$.
	
	\item[$(\caD_{\sf high})$] Draw $k \leftarrow \sfKPhi_{\lambda}$, run $\mathsf{Gen}(k)$ to obtain $\spz{\psi_k}$, and output $\State(T,e; \spz{\psi_k} )$.
\end{enumerate}

From~\cref{lm:RT-general}, the states from $\caD_{\sf low}$ and $\caD_{\sf high}$ has entropy structure specified by a corresponding tree, and the algorithm $\mathsf{Gen}$ from Condition~(i) of~\cref{defi:pk-pe-states} can be constructed similar following the proof of~\cref{lm:RT-general}. Condition~(ii) and~(iii) of~\cref{defi:pk-pe-states} follow all straightforwardly from~\cref{lm:bouland-pk}, which proves~\cref{theo:PSE-pk-tree}.

\subsection{Pseudoentangled holographic states via Random Stabilizer Tensor Networks} \label{sec:pe_link-planar}

Next, we describe another construction of pseudoentangled holographic states via random stabilizer tensor networks. This construction works on any planar graph and can also be made public-key as the previous one. Formally, for a ``nice'' family of bulk geometry graph (which will be defined formally in~\cref{sec:pe_link-planar-cond}), we prove:

\begin{theorem}[Pseudoentangled holographic states]\label{theo:PSE-general-apx}
	Let $G = \{G_n\}_{n \in \N}$ be a family of nice bulk geometry graphs, where $G_n$ has $n$ boundary nodes and has edge weight $\ln q_n \ge \omega(\ln n)$ for some prime power $q_n$. Let $A \subseteq [n]$ be a continuous segment on the boundary of $G_n$ and let $\gamma = \{e_1,\dotsc,e_t\}$ be a minimum cut between $A$ and $A^c = [n] \setminus A$ on $G$.
	
	Let $G_{n,\gamma}$ be the graph that is the same as $G_n$ except for the weights of edges $e_1,\dotsc,e_t$ are reduced to $\ln q / 2$ from $\ln q$. Assuming the existence of quantum-secure one-way functions, there are two ensembles of quantum states $\caG_{G_n,\gamma}$ and $\caH_{G_n,\gamma}$ that constitute a pseudoentangled holographic state ensemble with exact entropy structure $G_n$ vs. $G_{n,\gamma}$.
\end{theorem}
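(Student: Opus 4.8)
The plan is to mirror the tree construction of \cref{sec:pe_link-tree}, with the tree replaced by the bulk geometry graph $G_n$, the tensors replaced by random stabilizer tensors, and the single modified edge replaced by the whole min-cut $\gamma=\{e_1,\dots,e_t\}$. Concretely, I would build the HQECC for $G_n$ by placing at each bulk node $u$ a stabilizer tensor chosen uniformly at random on $\deg(u)$ legs of local dimension $q_n$; by \cref{thm Random stabilizer tensors are perfect} and a union bound over the $\poly(n)$ bulk nodes (using $\log q_n=\omega(\log n)$, so each failure probability is negligible), with overwhelming probability every such tensor is perfect, and I would fix one realization of this event so that the contraction map becomes a fixed efficiently implementable isometry from the edge qudits to the boundary qudits. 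For $\caG_{G_n,\gamma}$ I then insert on each cut edge $e_i$ an independent sample of $\caD_{\sf full}$ from \cref{cor:pseudo-ent} (with $D=q_n$), viewed as a rank-$q_n$ two-leg tensor; for $\caH_{G_n,\gamma}$ I insert an independent sample of $\caD_{\sf subset}$ (with $k=\sqrt{q_n}$, so Schmidt rank $\sqrt{q_n}$ and $\log k=\omega(\log n)$). Both ensembles are sampled by a $\poly(n)$-time circuit --- random Clifford layers for the bulk tensors plus evaluations of the pseudorandom function and permutation --- so Condition~(i) of the PES definition holds.

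Next I would establish computational indistinguishability (Condition~(iii)) by a hybrid argument over the $t\le\poly(n)$ edges of $\gamma$: the $j$-th hybrid uses $\caD_{\sf subset}$ on $e_1,\dots,e_j$ and $\caD_{\sf full}$ on $e_{j+1},\dots,e_t$. Since the map taking the $t$ link states (together with the fixed perfect bulk tensors) to the $n$-qudit boundary state is a single efficient isometry, any polynomial-time algorithm distinguishing two consecutive hybrids with $m=\poly(n)$ copies yields a polynomial-time distinguisher for $\caD_{\sf full}$ versus $\caD_{\sf subset}$ with $m$ copies: the reduction reconstructs $G_n$ and the isometry, freshly samples the link states on all edges except $e_j$, tensors in its $m$ copies of the unknown state on $e_j$, applies the isometry $m$ times, and runs the hybrid distinguisher. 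This contradicts \cref{cor:pseudo-ent}, and summing the $t$ negligible terms keeps the total advantage negligible.

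The heart of the proof is Condition~(ii): that $\caG_{G_n,\gamma}$ has exact entropy structure $G_n$ and $\caH_{G_n,\gamma}$ has exact entropy structure $G_{n,\gamma}$. It suffices to prove the analogue of \cref{lm:RT-general}: for every boundary region $B\subseteq[n]$ and every choice of supports $(S_1,\dots,S_t)$, the state obtained by placing $\sum_{j\in S_i}(-1)^{f_i(j)}\spz{j}\spz{j}$ on $e_i$ has entanglement entropy across $(B,B^c)$ equal to $\mincut_{B,B^c}(G_{n,S})$, where $G_{n,S}$ is $G_n$ with $e_i$ reweighted to $\ln|S_i|$. Following the proof sketch, I would first, without changing the state, place a maximally entangled rank-$q_n$ link tensor $\sum_j\spz{j}\spz{j}$ on every edge outside $\gamma$, so that the global state becomes a network of perfect tensors joined along every edge by a bipartite link state of Schmidt rank $q_n$ (off $\gamma$) or $|S_i|$ (on $e_i$). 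I would then run the max-flow/min-cut greedy argument of~\cite{Pastawski_2015}: fix an edge-set min-cut $\gamma_B$ between $B$ and $B^c$ in $G_{n,S}$ that is nested with $\gamma$ (possible because $A$ is a continuous boundary segment and $\gamma$ its min-cut, so by submodularity of $\Weight_{G_n}$ one can choose representative min-cuts that do not cross $\gamma$; this produces exactly the two cases ``$e_i\in\gamma_B$'' and ``$e_i$ lies strictly on one side of $\gamma_B$'' as in \cref{lm:RT-general}), then using the niceness hypothesis on $G_n$ (\cref{sec:pe_link-planar-cond}) greedily absorb the bulk perfect tensors on the $B$-side into a unitary from the link qudits cut by $\gamma_B$ (together with the paired link qudits of the edges lying entirely on the $B$-side) onto the boundary qudits of $B$, and symmetrically on the complement. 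Then $S_B$ equals the entropy of the tensor product of link states across $\gamma_B$, i.e.\ $\sum_{e_i\in\gamma_B}\ln|S_i|$ plus $\ln q_n$ times the number of non-$\gamma$ edges in $\gamma_B$, which is precisely the weight of the cut $\gamma_B$ in $G_{n,S}$, namely $\mincut_{B,B^c}(G_{n,S})$.

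The main obstacle is this last greedy-isometry argument on a non-tree graph: on the tree we could literally decompose the edge set into edge-disjoint leaf-to-leaf paths and read off the unitary, whereas on a general (planar) bulk graph one must invoke the Pastawski et al.\ bidirectional-greedy procedure, whose termination with the correct isometry is exactly what the ``nice bulk geometry graph'' condition of \cref{sec:pe_link-planar-cond} is engineered to guarantee --- at each step some not-yet-absorbed bulk node must have strictly more than half of its legs already on the isometry's domain, so its perfect tensor can be pushed through without obstruction. Ensuring this runs simultaneously for an \emph{arbitrary} region $B$ (not only the distinguished segment $A$) and for link states of two different ranks along $\gamma$, while keeping $\gamma_B$ nested with $\gamma$, is the delicate part. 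The public-key variant then follows verbatim as in \cref{sec:pe_link-tree-public-key} by substituting the LWE-based pseudoentangled link states of~\cite{bouland2023public} for the $\caD_{\sf full}/\caD_{\sf subset}$ pair.
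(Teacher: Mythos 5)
Your route is genuinely different from the paper's, and the difference is material. The paper does \emph{not} establish exact RT for the planar construction: the organizing text in \cref{sec:pe_link} explicitly says the random-stabilizer planar construction ``only satisfies RT formula approximately,'' and \cref{sec:pe_link-planar-rt} is titled ``Approximate RT entanglement scaling.'' Concretely, the paper's entropy argument is a second-moment / replica computation in the style of Hayden et al.: it bounds $\Ex_{\rho}[e^{-S_2(\rho_A)}]$ by a partition function over cuts (\cref{lm:part-G}, \cref{lm:part-H}), controls that partition function via the cut-counting lemmas of the appendix (\cref{lm:bound-partition}), and concludes $S_A(\rho) \ge \mincut \cdot (1 - o(1)) - \ln\tau^{-1}$ with high probability from Markov's inequality (\cref{lm:exp-markov}). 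The word ``exact'' in the theorem statement appears to be a slip; the theorem being proved is the approximate one. Your proposal instead tries to port the exact greedy/path-covering argument from the tree case (\cref{lm:RT-general}) to planar graphs, which is aiming for something strictly stronger than what the paper delivers.

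There are genuine gaps in your attempt at exactness, and they sit exactly where you flag ``the delicate part.'' First, the tree argument decomposes the edge set into edge-disjoint leaf-to-leaf paths, which is what lets you peel the network apart into $(U_A \otimes U_{A^c})$ applied to a tensor product of link states --- this literal path covering does not exist on a planar graph with cycles. The Pastawski--Hayden--Preskill greedy procedure gives you a unitary from the minimal cut to one boundary region, but it is tailored to \emph{connected} boundary regions; the holographic-entropy-structure definition in this paper requires the RT equality for every $A \subseteq [n]$, including disconnected ones, and the greedy algorithm does not obviously produce the requisite bipartite factorization in that case. Second, your ``nested min-cut'' claim (uncross $\gamma_B$ with $\gamma$ by submodularity) needs to hold in the reweighted graph $G_{n,\gamma}$ rather than in $G_n$, and you must rule out the possibility that the rank-$\sqrt{q_n}$ link states on $\gamma$ create degenerate or crossing minimal cuts that the uncrossing argument cannot separate. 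Third, even when $\gamma_B$ is nested with $\gamma$, you still need the bulk tensor network restricted to each side of $\gamma_B$ to be an isometry from the $\gamma_B$-legs (plus paired interior links) to the boundary legs --- the negative-curvature condition guarantees this for the original max-rank network but your argument must check it survives for every $B$ simultaneously. In short: your reduction for indistinguishability (the $t$-step hybrid through the fixed isometry $A_T$) is sound and essentially matches the paper's \cref{lm:pe_link-planar-ind}, but the entropy half of the proof as written does not go through, and the paper's own retreat to the second-moment bound is precisely because the exact factorization breaks on a non-tree graph.
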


In the rest of the section, in~\cref{sec:pe_link-planar-cond}, we formally define a family of nice bulk geometry graphs. Then, in~\cref{sec:pe_link-planar-states}, we formally define the ensemble of quantum states $\caG_{G_n,\gamma}$ and $\caH_{G_n,\gamma}$ and prove that they are computationally indistinguishable. In~\cref{sec:pe_link-planar-rt}, we show these two ensembles satisfy the RT formula approximately. Finally, in~\cref{sec:pe_link-planar-public-key}, we discuss its generalization to the public-key version.

\subsubsection{Conditions on the graph families}\label{sec:pe_link-planar-cond}

\paragraph*{Bulk Geometry Graphs.} We say a weighted planar graph $G$ is a \emph{bulk geometry graph} (here, we assume $G$ also comes with a known planar drawing) if the following holds:

\begin{enumerate}
	\item Every vertex either has even degree or degree exactly one and all edges have the same weight. We call the degree-one nodes the \emph{boundary} nodes, and other nodes the \emph{bulk} nodes. Suppose there are $n$ boundary nodes and $n_{\bulk}$ bulk nodes.
	
	\item The graph $G$ is planar, and there is a planar embedding in which one can draw a circle connecting all the boundary nodes such that all bulk nodes are strictly inside the boundary circle. We then number all the boundary vertices on the boundary circle following their ordering on the cycle (we start with an arbitrary boundary vertex and number it as the boundary vertex $1$, and continue counter-clockwise through the cycle).
\end{enumerate}

We also consider the dual graph $G^*$ of $G$, whose vertices are the faces of $G$ inside the boundary circle (i.e., the surrounding area outside of the boundary circle is not a vertex in $G^*$). Two vertices of $G^*$ are connected if their corresponding faces in $G$ share a common edge; see~\autoref{fig:bulk-geometry-graph} for an example of a bulk geometry graph $G$ and its dual $G^*$.

\begin{figure}
	\begin{center}
		\resizebox{0.45\linewidth}{!}{%
			\begin{tikzpicture}[scale=1.0, line1/.style={},
				line2/.style={},
				line3/.style={},
				]
				\foreach \i in {1,12,11,5,6,7}{
					\pgfmathtruncatemacro{\tmp}{\i*30}
					\node[circle,draw,fill=white,line width=1.5pt] (leaf\i) at (\tmp:3.3cm) {\i};
				}
				
				\foreach \i in {2,3,4,8,9,10}{
					\pgfmathtruncatemacro{\tmp}{\i*30}
					\node[circle,draw,fill=white,line width=1.5pt] (leaf\i) at (\tmp:3.3cm) {\i};
				}
				
				\foreach \x in {1,...,3}{
					\foreach \y in {1,...,3}{
						\pgfmathtruncatemacro{\tmp}{(\x-1)*3+\y}
						\node[rectangle,draw,fill=white, line width=1.5pt] (center-\x-\y) at (1.5*\y-3, 3-1.5*\x)  {$C_{\tmp}$};
					}
				}
				\newcommand{\drawth}[4]{
					\path (barycentric cs:#1=1,#2=1,#3=1) coordinate (centroid);
					\draw[line width=2] (#1) -- (#2);
					\draw[line width=2] (#1) -- (#3);
					\node (face-#4) at (centroid) {$f_{#4}$};
				}			
				
				\newcommand{\drawf}[5]{
					\path (barycentric cs:#1=1,#2=1,#3=1,#4=1) coordinate (centroid);
					\draw[line width=2] (#1) -- (#3);
					\draw[line width=2] (#2) -- (#4);
					\node (face-#5) at (centroid) {$f_{#5}$};
				}
				
				\foreach \x in {1,...,3}{
					\foreach \y in {1,...,3}{
						\pgfmathtruncatemacro{\a}{\x+1}
						\pgfmathtruncatemacro{\b}{\y+1}
						\pgfmathtruncatemacro{\tmp}{13+(\x-1)*2+(\y-1)}
						\ifnum \x < 3
						\draw[line width=2] (center-\x-\y) -- (center-\a-\y);
						\ifnum \y < 3
						\drawf{center-\x-\y}{center-\a-\y}{center-\x-\b}{center-\a-\b}{\tmp}
						\fi
						\fi
						\ifnum \y < 3
						\draw[line width=2] (center-\x-\y) -- (center-\x-\b);
						\fi
					}
					12			}
				
				\drawth{center-1-1}{leaf4}{leaf5}{1};
				\drawf{center-1-1}{center-1-2}{leaf4}{leaf3}{2};
				\drawf{center-1-2}{center-1-3}{leaf3}{leaf2}{3};
				\drawth{center-1-3}{leaf2}{leaf1}{4};
				\drawf{center-1-3}{center-2-3}{leaf1}{leaf12}{5};
				\drawf{center-2-3}{center-3-3}{leaf12}{leaf11}{6};
				\drawth{center-3-3}{leaf11}{leaf10}{7};
				\drawf{center-3-3}{center-3-2}{leaf10}{leaf9}{8};
				\drawf{center-3-2}{center-3-1}{leaf9}{leaf8}{9};
				\drawth{center-3-1}{leaf7}{leaf8}{10};
				\drawf{center-3-1}{center-2-1}{leaf7}{leaf6}{11};
				\drawf{center-2-1}{center-1-1}{leaf6}{leaf5}{12};
				
				\begin{scope}[on background layer]
					\draw[dashed, line width=2] (0,0) circle [radius=3.3cm];
				\end{scope}
				
			\end{tikzpicture}
		}
		\qquad
		\resizebox{0.45\linewidth}{!}{%
			\begin{tikzpicture}[scale=1.0, dot/.style={circle,fill=#1,inner sep=3pt,outer sep=0pt}
				]
				\foreach \i in {1,12,11,5,6,7}{
					\pgfmathtruncatemacro{\tmp}{\i*30}
					\node[inner sep=0pt, outer sep=0pt] (leaf\i) at (\tmp:3.3cm) {};
				}
				
				\foreach \i in {2,3,4,8,9,10}{
					\pgfmathtruncatemacro{\tmp}{\i*30}
					\node[inner sep=0pt, outer sep=0pt] (leaf\i) at (\tmp:3.3cm) {};
				}
				
				\foreach \x in {1,...,3}{
					\foreach \y in {1,...,3}{
						\pgfmathtruncatemacro{\tmp}{(\x-1)*3+\y}
						\ifnum\tmp=1
						\node[inner sep=0pt, outer sep=0pt] (center-\x-\y) at (1.5*\y-3, 3-1.5*\x)  {};
						\else
						\ifnum\tmp=5
						\node[inner sep=0pt, outer sep=0pt] (center-\x-\y) at (1.5*\y-3, 3-1.5*\x)  {};
						\else
						\ifnum\tmp=6
						\node[inner sep=0pt, outer sep=0pt] (center-\x-\y) at (1.5*\y-3, 3-1.5*\x)  {};
						\else
						\node[inner sep=0pt, outer sep=0pt] (center-\x-\y) at (1.5*\y-3, 3-1.5*\x)  {};
						\fi\fi\fi
						
					}
				}
				\newcommand{\drawth}[4]{
					\path (barycentric cs:#1=1,#2=1,#3=1) coordinate (centroid);
					\draw[line width=2] (#1) -- (#2);
					\draw[line width=2] (#1) -- (#3);
					\node[dot] (face-#4) at (centroid) {};
				}			
				
				\newcommand{\drawf}[5]{
					\path (barycentric cs:#1=1,#2=1,#3=1,#4=1) coordinate (centroid);
					\draw[line width=2] (#1) -- (#3);
					\draw[line width=2] (#2) -- (#4);
					\node[dot] (face-#5) at (centroid) {};
				}
				
				\newcommand{\drawReflectionLine}[3]{
					\coordinate (reflected) at ($2*(#1)!0.5!(#2)-(#3)$);
					
					
					\draw [dashed, violet, line width=2] (reflected) -- (#3);
				}
				
				\foreach \x in {1,...,3}{
					\foreach \y in {1,...,3}{
						\pgfmathtruncatemacro{\a}{\x+1}
						\pgfmathtruncatemacro{\b}{\y+1}
						\pgfmathtruncatemacro{\tmp}{13+(\x-1)*2+(\y-1)}
						\ifnum \x < 3
						\draw[line width=2] (center-\x-\y) -- (center-\a-\y);
						\ifnum \y < 3
						\drawf{center-\x-\y}{center-\a-\y}{center-\x-\b}{center-\a-\b}{\tmp}
						\fi
						\fi
						\ifnum \y < 3
						\draw[line width=2] (center-\x-\y) -- (center-\x-\b);
						\fi
					}
				}
				
				\drawth{center-1-1}{leaf4}{leaf5}{1};
				\drawf{center-1-1}{center-1-2}{leaf4}{leaf3}{2};
				\drawf{center-1-2}{center-1-3}{leaf3}{leaf2}{3};
				\drawth{center-1-3}{leaf2}{leaf1}{4};
				\drawf{center-1-3}{center-2-3}{leaf1}{leaf12}{5};
				\drawf{center-2-3}{center-3-3}{leaf12}{leaf11}{6};
				\drawth{center-3-3}{leaf11}{leaf10}{7};
				\drawf{center-3-3}{center-3-2}{leaf10}{leaf9}{8};
				\drawf{center-3-2}{center-3-1}{leaf9}{leaf8}{9};
				\drawth{center-3-1}{leaf7}{leaf8}{10};
				\drawf{center-3-1}{center-2-1}{leaf7}{leaf6}{11};
				\drawf{center-2-1}{center-1-1}{leaf6}{leaf5}{12};
				
				\draw [color=red, line width=2] (face-1) -- (face-2) -- (face-3) -- (face-4) -- (face-5) -- (face-6) -- (face-7) -- (face-8) -- (face-9) -- (face-10) -- (face-11) -- (face-12) -- (face-13) -- (face-14) -- (face-5);
				\draw [color=red, line width=2] (face-1) -- (face-12);
				\draw [color=red, line width=2] (face-2) -- (face-13) -- (face-15) -- (face-9);
				\draw [color=red, line width=2] (face-3) -- (face-14) -- (face-16) -- (face-8);
				\draw [color=red, line width=2] (face-11) -- (face-15) -- (face-16) -- (face-6);
				
				\begin{scope}[on background layer]
					\draw[line width=2] (0,0) circle [radius=3.3cm];
				\end{scope}
				
			\end{tikzpicture}
		}
		\caption{A bulk geometry graph $G$ with $12$ boundary nodes and $9$ bulk nodes (left) and its dual $G^*$ (right)}
		\label{fig:bulk-geometry-graph}
	\end{center}
\end{figure}

For a weighted graph $G$, we use $\WT{G}$ to denote the unweighted version of $G$ (in which all edges have unit weights). We say a bulk geometry graph $G$ is \emph{nice} if the following conditions hold:

\begin{figure}
	\begin{center}
		\begin{tikzpicture}[scale=1.0, line1/.style={},
			line2/.style={},
			line3/.style={},
			]
			
			\node[rectangle,draw,fill=white] (center1) at (0,1.5)  {$C_1$};
			\node[rectangle,draw,fill=white] (center2) at (-1.5,0) {$C_2$};
			\node[rectangle,draw,fill=white] (center3) at (0,-1.5) {$C_3$};
			\node[rectangle,draw,fill=white] (center4) at (1.5,0)  {$C_4$};
			
			
			\foreach \i in {1,...,8}{
				\pgfmathtruncatemacro{\tmp}{\i*45}
				\node[circle,draw,fill=white] (leaf\i) at (\tmp:3cm) {\i};
			}
			
			\draw[line width=2, blue, line1] (center2) --  (center1);
			\draw[line width=2, blue, line3] (center2) -- (center3);
			\draw[line width=2, blue, line3] (center1) -- (center4);
			\draw[line width=2, blue, line3] (center3) -- (center4);
			
			\draw[line width=2, blue, line1] (center1) -- (leaf2);
			\draw[line width=2, blue, line1] (center1) -- (leaf3);
			\draw[line width=2, blue, line1](center3) -- (leaf7);
			
			\draw[line width=2, blue, line3] (center2) -- (leaf4);
			\draw[line width=2, blue, line1] (center2) -- (leaf5);
			\draw[line width=2, blue, line3] (center3) -- (leaf6);	
			
			\draw[line width=2, blue, line3] (center4) -- (leaf8);
			\draw[line width=2, blue, line1] (center4) -- (leaf1);
			
			\begin{scope}[on background layer]
				\draw[dashed, line width=2] (0,0) circle [radius=3cm];
			\end{scope}

			\newcommand{\drawth}[4]{
				\path (barycentric cs:#1=1,#2=1,#3=1) coordinate (centroid);
				\node at (centroid) {#4};
			}			
			
			\newcommand{\drawf}[5]{
				\path (barycentric cs:#1=1,#2=1,#3=1,#4=1) coordinate (centroid);
				\node at (centroid) {#5};
			}
			
			\drawth{center4}{leaf1}{leaf8}{0};
			\drawf{center4}{center3}{leaf7}{leaf8}{1};
			\drawf{center4}{center1}{leaf2}{leaf1}{1};
			\drawf{center1}{center2}{center3}{center4}{2};	
			\drawth{center1}{leaf2}{leaf3}{2};
			\drawf{center1}{center2}{leaf3}{leaf4}{3};	
			\drawth{center2}{leaf4}{leaf5}{4};
			\drawf{center2}{center3}{leaf5}{leaf6}{3};	
			\drawth{center3}{leaf6}{leaf7}{2};
		\end{tikzpicture}
		\caption{A drawing of the graph $G$ with $8$ boundary nodes and $4$ bulk nodes. We also visualize the distance function on $G^*$ from starting from the region enclosed by center node $C_4$ and boundary nodes $1$ and $8$.}
		\label{fig:conditions-on-graph}
	\end{center}
\end{figure}

\begin{itemize}
	\item \textbf{Planar.} The graph $G$ is planar, and there is a planar embedding in which one can draw a circle connecting all the boundary nodes in a certain order such that all bulk nodes are strictly inside the boundary circle. 
	
	\item \textbf{The ``negative curvature'' condition from~\cite[Appendix~B]{Pastawski_2015}}: For the dual graph $G^*$ of $G$, for every face $f$ touching the boundary circle of $G$, consider the minimum distance function $d_f$ from $f$ to every other face in $G^*$,.
    Then $d_f$ has no local maximum that is not touching the boundary circle of $G$; see~\cref{fig:conditions-on-graph} for an illustration.
	
\end{itemize}

\paragraph*{Notation.}
Let $G = \{G_n\}_{n \in \N}$ be a family of nice bulk geometry graphs, where $G_n$ has $n$ boundary nodes. We will assume the corresponding planar embedding of $G_n$ is given, and its boundary nodes are labeled from $1$ to $n$ counter-clockwise following the boundary circle. 

Let the edge weight of $G_n$ be $\ln q_n$ for some $q_n \in \N$. We assume that $q_n$ is a prime power and $\ln q_n \ge \omega(\ln n)$. For simplicity, we also assume that the number of bulk vertices in $G_n$ is always at most $n^\beta$, where $\beta > 0$ is an absolute constant.

\subsubsection{Construction of pseudoentangled holographic states}\label{sec:pe_link-planar-states}

We define $\caT_{G_n}$ as the distribution of tensor networks obtained by replacing each bulk node $u$ of $G_n$ by an independent uniformly random stabilizer state with $d_u$ legs and $q_n$ bond dimensions (where $d_u$ is the degree of $u$ in $G_n$). For brevity, below we use $q$ to denote $q_n$.

\newcommand{\caTg}{\caT^{\sf good}}

By~\cref{thm Random stabilizer tensors are perfect} and a union bound over all bulk nodes $u$ in $G_n$, it holds that with probability $1 - n^{-\omega(1)}$ over $T \sim \caT_{G_n}$, all tensors in $T$ are perfect. We call such a tensor $T$ \emph{good} and we let $\caTg_{G_n}$ be the distribution of $T \sim \caT_{G_n}$ conditioning on $T$ being good.

Fix a good $T$, and let $A \subseteq [n]$ be a contiguous segment of the boundary of $G_n$. Let $t \cdot \ln q$ be the minimum cut between $A$ and $A^c = [n] \setminus A$ over $G_n$ (i.e., $t \cdot \ln q = \mincut_{A,[n] \setminus A}(G_n)$).

\paragraph*{A unitary mapping from cuts to the boundary.} Let $\gamma$ be the set of the $t$ edges on a minimum cut between $A$ and $A^c = [n] \setminus A$ over $G_n$. Let these edges be $e_1 = (\ell_1,r_1),\dotsc, e_t = (\ell_t,r_t)$, where the $\ell_i$'s are on the $A$ side and the $r_i$'s are on the $A^c$ side once $\gamma$ is removed from $G_n$.

By~\cite[Appendix~B]{Pastawski_2015}, the first three conditions on $G_n$, and the assumption that all tensors in $T$ are perfect, we can partition $A$ into two parts $A_0,A_1$, $A^c$ into two parts $A^c_0,A^c_1$, and then construct a unitary $P$ from $\gamma$ and $A_0$ to $A_1$, and a unitary $Q$ from $\gamma$ and $A^c_0$ to $A^c_1$, respectively.

Let $\partial_{A_0}$ be the set of edges from $A_0$ to the bulk (since each boundary node has degree exactly $1$, it corresponds to exactly one edge connecting to the bulk). For $e \in \partial_{A_0}$, we cut it in the middle to get $(\ell_{e},r_e)$. Similarly, we do this for every edge $e \in \partial_{A^c_0}$. Now, by observing the resulting tensor network, we have two unitaries $U_A$ and $U_{A^c}$ such that
\[
\State(T) = (U_A \otimes U_{A^c}) \left( \bigotimes_{i \in \left([t] \cup \partial_{A_0} \cup \partial_{A^c_0}\right) } \frac{1}{\sqrt{q}} \sum_{j \in [q]} \spz{j}_{\ell_i}\spz{j}_{r_i}  \right)
\]
where $U_A$ maps $\{\ell_i\}_{i \in \left([t] \cup \partial_{A_0}\right)}$ to $A_1$, and $U_{A^c}$ maps $\{r_i\}_{i \in \left([t] \cup \partial_{A^c_0}\right)}$ to $A^c_1$.

In particular, we consider the following mapping
\[
A_T : \spz{\psi} \mapsto (U_A \otimes U_{A^c}) \left( \spz{\psi} \otimes \bigotimes_{i \in \left( \partial_{A_0} \sqcup \partial_{A^c_0} \right) } \frac{1}{\sqrt{q}} \sum_{j \in [q]} \spz{j}_{\ell_i}\spz{j}_{r_i}  \right),
\]
which replaced the first $t$ EPR pairs \[q^{-t/2} \cdot \bigotimes_{i \in [t]} \sum_{j \in [q]} \spz{j}_{\ell_i}\spz{j}_{r_i}\] by an input state $\spz{\psi}$.

\paragraph*{The pseudoentangled holographic states.} Let $\bF\colon [q] \to \{0,1\}$ be a quantum-secure pseudorandom function and $\bP \colon [q] \to [q]$ be a quantum-secure pseudorandom permutation. Now, we are ready to describe our families of pseudoentangled holographic states. 

Let $\caD_{\sf full}$ and $\caD_{\sf subset}$ be the two distributions from~\cref{cor:pseudo-ent} with $k = \sqrt{q}$. We first define two distributions $\caG_{T,\gamma}$ and $\caH_{T,\gamma}$ over holographic states, as follows:
\[
\caG_{T,\gamma} \colon A_T\left( \bigotimes_{i \in [t]} \spz{\phi_i} \right), \text{where $\spz{\phi_1},\dotsc,\spz{\phi_t} \sim \caD_{\sf full}$}
\]
and
\[
\caH_{T,\gamma} \colon A_T\left( \bigotimes_{i \in [t]} \spz{\phi_i} \right), \text{where $\spz{\phi_1},\dotsc,\spz{\phi_t} \sim \caD_{\sf subset}$}.
\]

\newcommand{\caTG}{\mathcal{TG}}
\newcommand{\caTH}{\mathcal{TH}}

In terms of tensor network, $\caG_{T,\gamma}$ corresponds to a distribution over tensor networks, denoted by $\caTG_{T,\gamma}$, that is obtained by, for each $\mu \in [t]$, replacing the edge $e_\mu$ in $T$ by the (random) tensor $ \frac{1}{\sqrt{q}} \sum_{i \in [q]} (-1)^{f_\mu(i)} \spz{i}\spz{i}$ (where $f_\mu \sim \bF$).

Similarly, $\caH_{T,\gamma}$ corresponds to a distribution over tensor networks, denoted by $\caTH_{T,\gamma}$, that is obtained by, for each $\mu \in [t]$, replacing the edge $e_\mu$ in $T$ by the (random) tensor $ \frac{1}{\sqrt{|S|}} \sum_{i \in S} (-1)^{f_\mu(i)} \spz{i}\spz{i}$ (where $S = \{ p_\mu(i) : i \in [\sqrt{q}] \}$, $p_\mu \sim \bP$, and $f_\mu \sim \bF$).

We are finally ready to define the two distributions over holographic states from~\cref{theo:PSE-general-apx}, $\caG_{G_n, \gamma}$ and $\caH_{G_n, \gamma}$, as mixed distributions of $\caG_{T,\gamma}$ and $\caH_{T,\gamma}$ over good $T \sim \caTg_{G_n}$, respectively.

\paragraph*{Computational indistinguishability.} Now we have to establish that the families $\caG_{G_n, \gamma}$ and $\caH_{G_n, \gamma}$ are computationally indistinguishable. 

\begin{lemma}\label{lm:pe_link-planar-ind}
	$\caG_{G_n, \gamma}$ and $\caH_{G_n, \gamma}$ are computationally indistinguishable.
\end{lemma}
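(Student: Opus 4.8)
The plan is to prove the statement first for an arbitrary fixed \emph{good} tensor network $T$ (all of whose bulk tensors are perfect stabilizer tensors), with a distinguishing advantage that is negligible \emph{uniformly} in $T$, and then average over $T$; this suffices because, by definition, $\caG_{G_n,\gamma}$ and $\caH_{G_n,\gamma}$ are the mixtures of $\caG_{T,\gamma}$ and $\caH_{T,\gamma}$ over $T\sim\caTg_{G_n}$. Fix a good $T$ and a polynomial $m\le\poly(n)$, and set $\sigma_{\sf full}=\Ex_{|\phi\rangle\sim\caD_{\sf full}}[\optk{\phi}{m}]$ and $\sigma_{\sf subset}=\Ex_{|\phi\rangle\sim\caD_{\sf subset}}[\optk{\phi}{m}]$ (with subset size $\sqrt q$), which are computationally indistinguishable by \cref{cor:pseudo-ent}. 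Using the cut-to-boundary decomposition already established in this subsection --- $\State(T)=A_T\big(\bigotimes_{i\in[t]}\text{EPR}_i\big)$, with $A_T$ acting via the explicit unitaries $U_A\otimes U_{A^c}$ after inserting the link states on the cut edges $e_1,\dots,e_t$ and fixed maximally entangled pairs on $\partial_{A_0}\sqcup\partial_{A^c_0}$ --- one checks that the $m$-copy mixed state of $\caG_{T,\gamma}$ equals
\[
(U_A\otimes U_{A^c})^{\otimes m}\big(\Psi_T\otimes\sigma_{\sf full}^{\otimes t}\big)(U_A\otimes U_{A^c})^{\dagger m},
\]
where $\Psi_T$ is a fixed $m$-copy state depending only on $T$; the $m$-copy mixed state of $\caH_{T,\gamma}$ is the same with $\sigma_{\sf full}^{\otimes t}$ replaced by $\sigma_{\sf subset}^{\otimes t}$.

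Next I would interpolate by a hybrid argument over the $t$ cut edges: let $\rho^{(j)}$ be the analogous state with $\sigma_{\sf subset}$ on edges $e_1,\dots,e_j$ and $\sigma_{\sf full}$ on $e_{j+1},\dots,e_t$, so $\rho^{(0)}$ and $\rho^{(t)}$ are the $m$-copy states of $\caG_{T,\gamma}$ and $\caH_{T,\gamma}$. Any polynomial-time $\cA$ distinguishing $\rho^{(j-1)}$ from $\rho^{(j)}$ yields a polynomial-time distinguisher of $\sigma_{\sf full}$ from $\sigma_{\sf subset}$: given $m$ copies of the unknown state, sample (from the one-way function) the link states drawn from $\caD_{\sf subset}$ for $e_1,\dots,e_{j-1}$ and from $\caD_{\sf full}$ for $e_{j+1},\dots,e_t$ ($m$ copies each, one PRF/PRP draw per edge), adjoin the fixed $\Psi_T$, apply $(U_A\otimes U_{A^c})^{\otimes m}$, and run $\cA$. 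This reduction runs in time $\poly(n)$ because (i) $\gamma$ has $t\le n$ edges, since the cut consisting of the degree-one boundary vertices of $A$ already has weight $|A|\ln q\le n\ln q$, so the hybrid has at most $n$ steps; (ii) $\caD_{\sf full}$ and $\caD_{\sf subset}$ are efficiently sampleable from the one-way function; and (iii) for a good $T$ the bulk tensors are stabilizer (equivalently Clifford) tensors, so the routing construction of~\cite[Appendix~B]{Pastawski_2015} produces $U_A,U_{A^c}$ as a $\poly(n,\log q)$-size Clifford circuit that can be written down from the description of $T$. Summing over the at most $n$ hybrids, for every good $T$ and every polynomial-time $\cA$,
\[
\Big|\,\prs{|\phi\rangle\sim\caG_{T,\gamma}}{\cA(\optk{\phi}{m})=1}-\prs{|\phi\rangle\sim\caH_{T,\gamma}}{\cA(\optk{\phi}{m})=1}\,\Big|\;\le\;n\cdot\negl(n)=\negl(n).
\]

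The final point is that this advantage is uniform over good $T$: for each $n$ the reduction's circuit has size $\poly(n)$ independently of which good $T$ is fixed (the description of $T$ has size $\poly(n)$ since $G_n$ has at most $n^\beta$ bulk nodes), so a single negligible function from \cref{cor:pseudo-ent} dominates it for all good $T$ simultaneously. Averaging the displayed bound over $T\sim\caTg_{G_n}$ then gives the same negligible advantage for $\caG_{G_n,\gamma}$ versus $\caH_{G_n,\gamma}$, which establishes \cref{lm:pe_link-planar-ind}.

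I expect the only non-routine step to be (iii): verifying that the cut-to-boundary construction of~\cite[Appendix~B]{Pastawski_2015}, when applied to explicit stabilizer tensors, yields a polynomial-size circuit for $U_A\otimes U_{A^c}$ rather than merely a unitary that exists, since this is exactly what keeps the hybrid reduction efficient (and one must also track that the path/flow routing used there is itself computable in polynomial time from the planar embedding). The hybrid argument and the averaging over $T$ are standard given \cref{cor:pseudo-ent}.
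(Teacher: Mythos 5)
Your proof is correct and takes essentially the same approach as the paper's, which reduces to the indistinguishability of $\caD_{\sf full}^{\otimes t}$ vs.~$\caD_{\sf subset}^{\otimes t}$ (via a hybrid argument) and then applies the polynomial-time computable $(A_T)^{\otimes m}$; the only organizational difference is that you run the hybrid after applying $A_T$ rather than before, and you explicitly spell out details the paper leaves implicit (that $t\le n$, that the advantage is uniform over good $T$, and that the routing construction from \cite[Appendix~B]{Pastawski_2015} yields $U_A, U_{A^c}$ as poly-size circuits).
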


\begin{proof}
It suffices to show that for any good $T \sim \caTg_{G_n}$, $\caG_{T,\gamma}$ and $\caH_{T,\gamma}$ are computationally indistinguishable against polynomial-time quantum adversaries given a polynomial number of copies.

We consider the following two distributions
\[
\caD_{\sf full}^{\otimes t} \colon \bigotimes_{i \in [t]} \spz{\phi_i}, \text{where $\spz{\phi_1},\dotsc,\spz{\phi_t} \sim \caD_{\sf full}$}
\]
and
\[
\caD_{\sf subset}^{\otimes t} \colon \bigotimes_{i \in [t]} \spz{\phi_i}, \text{where $\spz{\phi_1},\dotsc,\spz{\phi_t} \sim \caD_{\sf subset}$}.
\]

By~\cref{cor:pseudo-ent} and a standard hybrid argument, we know that these two distributions are computationally indistinguishable by polynomial-time quantum algorithms given a polynomial number of samples. 

Let $m \le \poly(n)$. By the definitions of $\caG_{T,\gamma}$ and $\caH_{T,\gamma}$, we know that
\[
\Ex_{\spz{\phi} \sim \caG_{T,\gamma}} \left[ \optk{\phi}{m} \right] = (A_T)^{\otimes m}\left( \Ex_{\spz{\phi} \sim \caD_{\sf full}^{\otimes t}} \left[ \optk{\phi}{m} \right] \right)
\]
and
\[
\Ex_{\spz{\phi} \sim \caH_{T,\gamma}} \left[ \optk{\phi}{m} \right] = (A_T)^{\otimes m}\left( \Ex_{\spz{\phi} \sim \caD_{\sf subset}^{\otimes t}} \left[ \optk{\phi}{m} \right] \right).
\]

From the discussions above, we know that
\[ 
\Ex_{\spz{\phi} \sim \caD_{\sf full}^{\otimes t}} \left[ \optk{\phi}{m} \right]
\]
and
\[
\Ex_{\spz{\phi} \sim \caD_{\sf subset}^{\otimes t}} \left[ \optk{\phi}{m} \right]
\]
are indistinguishable against polynomial-time quantum adversaries. Since $(A_T)^{\otimes m}$ is polynomial-time computable, it follows that $\Ex_{\spz{\phi} \sim \caG_{T,\gamma}} \left[ \optk{\phi}{m} \right]$ and $\Ex_{\spz{\phi} \sim \caH_{T,\gamma}} \left[ \optk{\phi}{m} \right]$ are also indistinguishable against polynomial-time quantum adversaries.
\end{proof}

\subsubsection{Approximate RT entanglement scaling}\label{sec:pe_link-planar-rt}

We also need to establish the approximate RT-formula for $\caG_{G_n, \gamma}$ and $\caH_{G_n, \gamma}$. The following lemma will also be useful.
\begin{lemma}\label{lm:exp-markov}
	Let $\caD$ be a distribution over quantum states. Fix $A \subseteq [n]$, $Z \in \R$, and assume that
	\[
	\Ex_{\rho \sim \caD} \left[e^{-S_A(\rho)}\right] \le Z\mathdot
	\]
	Then for all $\tau \in (0,1)$, with probability $1-\tau$ over $\rho \sim \caD$, we have
	\[
	S_A(\rho) \ge - \ln Z - \ln\tau^{-1}\mathdot
	\]
\end{lemma}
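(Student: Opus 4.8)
The plan is to apply Markov's inequality to the nonnegative random variable $X \deq e^{-S_A(\rho)}$ induced by $\rho \sim \caD$. By hypothesis $\Ex_{\rho \sim \caD}[X] \le Z$, and $X \ge 0$ since $S_A(\rho)$ is a (real) entropy. Markov's inequality then gives, for any threshold $s > 0$,
\[
\Pr_{\rho \sim \caD}\!\left[ e^{-S_A(\rho)} \ge s \right] \le \frac{\Ex_{\rho \sim \caD}[e^{-S_A(\rho)}]}{s} \le \frac{Z}{s}\mathdot
\]

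Next I would instantiate $s = Z/\tau$, so that the right-hand side becomes $\tau$. Hence with probability at least $1-\tau$ over $\rho \sim \caD$ we have $e^{-S_A(\rho)} < Z/\tau$. Taking logarithms of both sides (note $\ln$ is increasing) yields $-S_A(\rho) < \ln Z - \ln \tau = \ln Z + \ln \tau^{-1}$, i.e.\ $S_A(\rho) > -\ln Z - \ln \tau^{-1}$, which is the claimed bound.

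There is essentially no obstacle here: the only minor points to be careful about are that $e^{-S_A(\rho)} \ge 0$ so that Markov's inequality applies, and that if $Z \le 0$ (or more precisely $Z < 0$, which cannot happen since $e^{-S_A(\rho)} > 0$ forces $\E[\,\cdot\,] > 0$, so really $Z$ is implicitly positive) the statement is either vacuous or the bound $-\ln Z$ is understood appropriately; in all cases of interest $Z \in (0,\infty)$ and the argument above goes through verbatim. One could also remark that the bound is tight up to the logarithmic slack, but that is not needed for the applications in the sequel.
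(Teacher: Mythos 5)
Your proof is correct and is essentially identical to the paper's: both apply Markov's inequality to the nonnegative random variable $e^{-S_A(\rho)}$ with threshold $Z/\tau$ and then take logarithms. The brief remarks about positivity of $Z$ are fine but not needed.
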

\begin{proof}
	By Markov inequality, we have
	\[
	\Pr_{\rho \sim \caD} \left[e^{-S_A(\rho)} \ge Z / \tau\right] \le \tau \mathdot
	\]
	This translates to
	\[
	\Pr_{\rho \sim \caD} \left[S_A(\rho) \le - \ln Z - \ln\tau^{-1} \right] \le \tau \mathdot \tag*{\qedhere}
	\]
\end{proof}

\paragraph*{$\caG_{G_n, \gamma}$ has holographic entropy structure approximated by $G_n$.} We first establish that $\caG_{G_n, \gamma}$ satisfies the RT-formula approximately with high probability with respect to graph $G_n$.

To show this, we need the following lemma, which can be derived using the same method from~\cite{Hayden_2016}.

\begin{lemma}\label{lm:part-G}
    Let $A \subseteq [n]$, $A^c = [n] \setminus A$ and $V$ be the vertex set of $G_n$.
    \[
    \Ex_{\spz{\phi} \sim \caG_{G_n, \gamma},~\rho = \opt{\phi}} \left[e^{-S_2(\rho_A)} \right] \le \left(1 + n^{-\omega(1)} \right) \cdot \sum_{A \subseteq S \subseteq V \setminus A^c} e^{-\Weight_{H_n}(S)}.
    \]
\end{lemma}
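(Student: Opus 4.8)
The plan is to reduce the claim to the replica (``statistical‑mechanics'') computation for random tensor networks of~\cite{Hayden_2016}, carried out for uniformly random stabilizer tensors, after disposing of two bookkeeping nuisances: the conditioning on good tensor networks, and the link‑state phases on the cut edges $\gamma$.

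First I would rewrite $e^{-S_2(\rho_A)}=\tr{\rho_A^{2}}\in[0,1]$ and recall that $\caG_{G_n,\gamma}$ is by construction the mixture of $\caG_{T,\gamma}$ over good $T\sim\caTg_{G_n}$, and that (as established right after the definition of $\caTg_{G_n}$, via~\cref{thm Random stabilizer tensors are perfect} and a union bound over the at most $n^{\beta}$ bulk nodes) $\Pr_{T\sim\caT_{G_n}}[T\text{ good}]\ge 1-n^{-\omega(1)}$. Since $\tr{\rho_A^{2}}\ge 0$ for every $T$, the goodness indicator can only cost a multiplicative factor, so it suffices to bound $\Ex_{T\sim\caT_{G_n}}[\tr{\rho_A^{2}}]$ over the \emph{unconditioned} random stabilizer tensors and multiply by $(1+n^{-\omega(1)})$ at the end. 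Writing $\tr{\rho_A^{2}}=Z_2^{A}/Z_1$ with $Z_2^{A}=\langle\psi^{\otimes 2}|\,\SWAP_A\,|\psi^{\otimes 2}\rangle$, $Z_1=\langle\psi|\psi\rangle^{2}$ (here $\ket{\psi}$ is the unnormalized contraction and $\SWAP_A$ swaps the two replicas on $A$), I would also record that on a good $T$ the contracted state is, by the ``unitary mapping from cuts to the boundary'' decomposition $\State(T)=(U_A\otimes U_{A^c})\bigotimes_i(\text{maximally entangled bond})_i$, so that $Z_1$ equals a constant depending only on $G_n$; this is what pins down the normalization below.

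Second, the link‑state phases are harmless in expectation. On each cut edge $e_\mu\in\gamma$ the inserted tensor $\tfrac{1}{\sqrt q}\sum_i(-1)^{f_\mu(i)}\spz{i}\otimes\spz{i}$ is a maximally entangled bond composed with a diagonal unitary $D_\mu=\sum_i(-1)^{f_\mu(i)}\opt{i}$ on one leg; in the two‑replica contraction the signs $(-1)^{f_\mu}$ occur once in the bra and once in the ket and cancel edge‑by‑edge, so $\Ex_{T}[Z_2^{A}]$ equals its value for the plain random stabilizer network on $G_n$. (On a good $T$ one can say more directly that $D_\mu$ is absorbed into $U_A$, hence the spectrum of $\rho_A$, and so $\tr{\rho_A^{2}}$, is unchanged.) It therefore suffices to evaluate $\Ex_{T\sim\caT_{G_n}}[Z_2^{A}]$ for plain random stabilizer tensors.

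Third, I would run the Hayden et al.\ replica calculation. Uniformly random stabilizer states form an exact unitary $2$‑design, so $\Ex[\,\optk{T_u}{2}\,]=(D_u(D_u+1))^{-1}(\I_u+F_u)$ with $D_u=q^{d_u}$ and $F_u$ the replica swap on vertex $u$. Substituting this into $\Ex_T[Z_2^{A}]$ and expanding $\bigotimes_u(\I_u+F_u)$ assigns to each bulk node a permutation $\sigma_u\in\{e,F\}$ while the boundary legs are pinned to $F$ on $A$ and $e$ on $A^c$; identifying a configuration with $S=A\cup\{u\text{ bulk}:\sigma_u=F\}$, an edge‑by‑edge evaluation (an edge contributes a factor $q^{2}$ when its two endpoints carry equal permutations and $q$ when they differ) shows that $S$ contributes, relative to the trivial configuration, the weight $q^{-|\partial_{\WT{G_n}}(S)|}=e^{-\Weight_{G_n}(S)}$, where $\partial_{\WT{G_n}}(S)$ is the cut of $S$ in the unweighted graph. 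The common prefactor is pinned to $1+n^{-\omega(1)}$ by comparing with $\Ex_T[Z_1]$ (whose dominant, empty, configuration carries weight $1$ and which agrees with the graph‑determined value of $Z_1$ on good $T$, the bad‑$T$ contribution being negligible since $\tr{\rho_A^2}\le 1$). Summing over the admissible sets $A\subseteq S\subseteq V\setminus A^c$ gives
\[
\Ex_{T\sim\caT_{G_n}}\!\left[\tr{\rho_A^{2}}\right]\ \le\ \bigl(1+n^{-\omega(1)}\bigr)\sum_{A\subseteq S\subseteq V\setminus A^c}e^{-\Weight_{G_n}(S)},
\]
and combining with the first paragraph proves the lemma. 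I expect the main obstacle to be exactly this normalization step: $Z_2^{A}$ and $Z_1$ are correlated, so one cannot write $\Ex[Z_2^{A}/Z_1]=\Ex[Z_2^{A}]/\Ex[Z_1]$ for free. In~\cite{Hayden_2016} this is resolved by a concentration argument for $Z_1=\langle\psi|\psi\rangle^{2}$; here it should be cleaner because on a good $T$ the perfect‑tensor network has a deterministic norm, so $Z_1$ is essentially constant and the rare bad‑$T$ events contribute at most $\Pr[T\text{ bad}]\cdot 1=n^{-\omega(1)}$.
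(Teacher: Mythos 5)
Your proof takes essentially the same route as the paper's: express $e^{-S_2(\rho_A)}$ as a ratio of unnormalized replica quantities, note that on good $T$ the normalization $\mathrm{Tr}(\Psi)$ is deterministic (since all tensors are perfect), handle the conditioning on goodness by paying a $1/\Pr[\caE] \le 1+n^{-\omega(1)}$ factor on the unconditioned expectation, and then use the exact $2$‑design property of random stabilizer states to expand $\Ex[\Psi^{\otimes 2}]$ into a sum over $\{I,F\}$ bulk configurations indexed by $S$, with each boundary‑pinned edge contributing a purity factor $q^{-1}$ or $|S_e|^{-1}$ that is insensitive to the pseudorandom phases.

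One small remark: you derive the bound with $\Weight_{G_n}(S)$ whereas the lemma as printed writes $\Weight_{H_n}(S)$; since $\Weight_{H_n}(S)\le\Weight_{G_n}(S)$ your bound is strictly stronger and implies the stated one (and the paper's subsequent application of \cref{lm:bound-partition} to $\WT{G}$ with $\lambda=\ln q$ is in fact using the $G_n$ version, so the $H_n$ in the statement appears to be a typo — $H_n$ is not even defined until the $\caH$ paragraph).
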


Applying~\cref{lm:bound-partition} with $\lambda = \ln q$ to graph $\WT{G}$ (note that $\dmax \le O(n + n_{bulk}) \le \poly(n)$), we have
\[
\Ex_{\spz{\phi} \sim \caG_{G_n, \gamma},~\rho = \opt{\phi}} \left[e^{-S_2(\rho_A)} \right] \le n^{O(\mc)} \cdot e^{-\mc \cdot \ln q} \mathdot
\]
It then follows from~\cref{lm:exp-markov} and the fact that $\lambda = \omega(\log n)$ that $\caG_{G_n, \gamma}$ has holographic entropy structure approximated by $G_n$.

\paragraph*{$\caH_{G_n, \gamma}$ satisfies the RT-formula approximately with high probability.} Next we move to $\caH_{G_n, \gamma}$. Let $H_n$ be the weighted graph obtained by changing the weights of the edges $e_1,\dotsc,e_t$ in $G_n$ from $\ln q$ to $\frac{1}{2} \cdot \ln q$.

Using the same method, we can also show the following lemma.

\begin{lemma}\label{lm:part-H}
    Let $A \subseteq [n]$, $A^c = [n] \setminus A$ and $V$ be the vertex set of $H_n$.
    \[
    \Ex_{\spz{\phi} \sim \caH_{G_n, \gamma},~\rho = \opt{\phi}} \left[e^{-S_2(\rho_A)} \right] \le \left(1 + n^{-\omega(1)} \right) \cdot \sum_{A \subseteq S \subseteq V \setminus A^c} e^{-\Weight_{H_n}(S)}.
    \]
\end{lemma}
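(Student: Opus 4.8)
\section*{Proof proposal for \cref{lm:part-H}}

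The plan is to run the random tensor network second-Rényi computation of \cite{Hayden_2016} --- exactly as in the proof of \cref{lm:part-G} --- and to track the single change caused by replacing the maximally entangled link on each cut edge $e_1,\dots,e_t$ by a subset-phase link of support size $\sqrt q$, which is what lowers the weight of those edges from $\ln q$ to $\tfrac12\ln q$ in the resulting partition function.

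First, since $\rho=\opt{\phi}$ is pure, $e^{-S_2(\rho_A)}=\tr{\rho_A^2}$, so it suffices to upper bound $\Ex_{\spz{\phi}\sim\caH_{G_n,\gamma},\ \rho=\opt{\phi}}\left[\tr{\rho_A^2}\right]$. A sample from $\caH_{G_n,\gamma}$ is the normalised contraction of a good tensor network $T\sim\caTg_{G_n}$ --- all bulk tensors perfect, which by \cref{thm Random stabilizer tensors are perfect} happens with probability $1-n^{-\omega(1)}$, so conditioning on this event costs only a factor $1+n^{-\omega(1)}$ in the expectation of any $[0,1]$-valued quantity --- in which each cut edge $e_\mu$ carries the $2$-leg tensor $|S_\mu|^{-1/2}\sum_{i\in S_\mu}(-1)^{f_\mu(i)}\spz{i}\spz{i}$ with $|S_\mu|=\sqrt q$, while every other edge carries $q^{-1/2}\sum_{j\in[q]}\spz{j}\spz{j}$. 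Write $\tr{\rho_A^2}=\tr{\tilde\rho_A^2}/(\tr{\tilde\rho})^2$ for the unnormalised contraction $\tilde\rho$, introduce two replicas, and average. Uniformly random stabilizer states of prime dimension form an exact projective $2$-design, so each bulk tensor contributes exactly the Haar second moment (proportional to $\I+\SWAP$ on its two-replica leg space); this turns $\Ex[\tr{\tilde\rho_A^2}]$ into a partition function over $\mathbb{Z}_2$ ``spins'' $\sigma_u\in\{+,-\}$ on the bulk vertices, with the boundary legs in $A$ pinned to $-$ and those in $A^c$ to $+$. A spin configuration is the same data as a vertex set $S$ with $A\subseteq S\subseteq V\setminus A^c$ (the $-$ vertices), and its weight relative to the vacuum is $\prod_{e\in\partial S}(\text{domain-wall factor of }e)$, where the domain-wall factor of an edge is $\tr{(\rho^{(e)})^{\otimes 2}\SWAP}=\tr{(\rho^{(e)})^2}=e^{-S_2(\rho^{(e)})}$ and $\rho^{(e)}$ is the reduced state of $e$'s link on one of its two legs: for an ordinary link $\rho^{(e)}=\I/q$, giving $q^{-1}=e^{-\ln q}$, and for a cut edge $\rho^{(e_\mu)}=|S_\mu|^{-1}\sum_{i\in S_\mu}\opt{i}$ is maximally mixed on a $\sqrt q$-dimensional subspace, giving $|S_\mu|^{-1}=q^{-1/2}=e^{-\frac12\ln q}$ --- crucially independent of which subset $S_\mu$ is used (only its size $\sqrt q$ matters) and of the phases $f_\mu$, which cancel. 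These are precisely the edge weights of $H_n$, so the numerator partition function equals $\sum_{A\subseteq S\subseteq V\setminus A^c}e^{-\Weight_{H_n}(S)}$; and, just as in \cite{Hayden_2016}, $\Ex[(\tr{\tilde\rho})^2]$ is dominated by, and concentrates around, the unpinned vacuum value (the edge weight $\ln q=\omega(\ln n)$ makes all bond dimensions super-polynomial, which is what drives the concentration), so that altogether $\Ex_{\spz{\phi}\sim\caH_{G_n,\gamma}}\left[\tr{\rho_A^2}\right]\le(1+n^{-\omega(1)})\sum_{A\subseteq S\subseteq V\setminus A^c}e^{-\Weight_{H_n}(S)}$.

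The only real work is the error bookkeeping: one must verify that the two sources of slack --- conditioning on a good $T$, and the \cite{Hayden_2016}-style concentration of $\tr{\tilde\rho}$ about its mean --- are each $1+n^{-\omega(1)}$, and that the subleading spin configurations appearing in both $\Ex[\tr{\tilde\rho_A^2}]$ and $\Ex[(\tr{\tilde\rho})^2]$ are suppressed by at least $e^{-\omega(\ln n)}$ (which holds because every edge weight is $\ln q=\omega(\ln n)$). The concentration step is the delicate one, but it is word-for-word the one already carried out in the proof of \cref{lm:part-G}; indeed the whole argument is that of \cref{lm:part-G} with the cut-edge links replaced by subset-state links of support size $\sqrt q$, which changes the weights of $e_1,\dots,e_t$ in the partition function from $\ln q$ to $\tfrac12\ln q$. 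In particular --- unlike the indistinguishability proof (\cref{lm:pe_link-planar-ind}) --- this lemma makes no use of the pseudorandomness of $\bF$ or $\bP$, because the second Rényi entropy of a (subset-)phase link, and every quantity it feeds into the replica sum, depend only on the cardinality of its support.
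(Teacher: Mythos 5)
Your replica computation and the translation to an Ising-type partition function over vertex sets $A\subseteq S\subseteq V\setminus A^c$ are exactly the paper's, and your key observation --- that the cut-edge weight $\tfrac12\ln q$ comes out because the reduced state of a subset-phase link with support size $\sqrt q$ is maximally mixed on a $\sqrt q$-dimensional subspace, so its second R\'enyi entropy depends only on $|S_\mu|$ and not on the pseudorandom $f_\mu$, $p_\mu$ --- is correct and is the point of the lemma. Your remark that the conditioning on a good $T$ costs $1+n^{-\omega(1)}$ is also right.

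However, your handling of the normalization $\TR(\Psi)$ diverges from the paper and is where the argument is not fully tight. You invoke a Hayden--Qi-style concentration of $\tr{\tilde\rho}$ around its mean (needing to control $\Ex[(\tr{\tilde\rho})^2]$, bound the probability of $\tr{\tilde\rho}$ being atypically small, etc.), and you assert this is ``word-for-word'' what the paper does for \cref{lm:part-G}. It is not: the paper conditions on the event $\caE$ that all bulk tensors are perfect, and under $\caE$ the contraction is an exact isometry from the link Hilbert spaces to the boundary, so $\TR(\Psi) = q^{-N_b + N_\partial}$ is a \emph{deterministic constant}. No concentration argument is used or needed; the normalization denominator is a fixed number, the only cost of conditioning is the factor $1/\Pr[\caE]\le 1+n^{-\omega(1)}$ applied to the non-negative quantity $\TR(\Psi_A^2)$, and the bound follows directly by dividing $\Ex[\TR(\Psi_A^2)\mid\caE]\le\Pr[\caE]^{-1}\Ex[\TR(\Psi_A^2)]$ by the constant $(\TR(\Psi))^2 = q^{-2N_b+2N_\partial}$. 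In your writeup you set up the conditioning on perfect tensors but then do not use its consequence, instead reaching for concentration as if the tensors were Haar random --- that route can be made to work, but it requires extra second-moment estimates for $\tr{\tilde\rho}$ that you do not supply, and the cleaner deterministic-normalization argument is available to you for free once you've restricted to $\caTg_{G_n}$.

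Aside from this, you should also be explicit that the spin configurations run over subsets $S$ of \emph{bulk} vertices with boundary spins pinned by $A$, so the sum in the bound is over $S\cup A$ with $S\subseteq V_b$; the paper states the final bound as $\sum_{S\subseteq V_b}e^{-\Weight_{H_n}(S\cup A)}$, which is the same as the sum over $A\subseteq S\subseteq V\setminus A^c$ in the lemma statement but is worth spelling out in the domain-wall step.
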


Let $\#_A(t) = \#_{A,A^c}(\WT{G}_n,t)$ be the number of cuts between $A$ and $A^c$ in $\WT{G}_n$, and $\textsf{mc} = \mincut_{A,A^c}(\WT{G}_n)$. (Note that $\WT{H}_n $ is identical to $\WT{G}_n$.) 

First, all cuts in $H_n$ have weight $t/2 \cdot \ln q$ for some $t \in \posN$. Moreover, since the weight of an edge is either unchanged or reduced to $\ln q /2$ from $\ln q$, we note that a cut with weight $t/2 \cdot \ln q$ in $H_n$ has size between $\ceil{t/2}$ and $t$ in $\WT{G}_n$. Therefore, we have:
\begin{equation}\label{eq:bound-cut-B}
    \#_{A,A^c}(H_n, t/2 \cdot \ln q) \le \sum_{z = \ceil{t/2}}^{t} \#_A(z) \le \sum_{z = \ceil{t/2}}^{t} n^{O(z)} \le n^{c_0 \cdot t},
\end{equation}
where $c_0$ is a large constant, the second inequality above follows from~\cref{lemma:bounding-min-cuts} and~\cref{lemma:bounding-cuts} (note that both $\dmax$ and $n_{\sf f}$ are bounded by $\poly(n)$).

\newcommand{\mch}{\textsf{mch}}

Let $\mch = \mincut_{A,A^c}(H_n)$ and $\mu = \mch / \ln q$. It follows that
\begin{align*}
    \sum_{ A \subseteq S \subseteq V \setminus A^c} e^{-\Weight_{H_n}(S)} 
    \le&\sum_{t=0}^{+\infty} \#_{A,A^c}(H_n, t/2 \cdot \ln q + \mch) \cdot e^{ \mch - t/2 \cdot \ln q} \\
    \le&e^{-\mch} \cdot \left(\sum_{t=0}^{+\infty} q^{-t/2} \cdot \#_{A,A^c}(H_n, (t/2 + \mu) \cdot \ln q)\right)\\
    \le&e^{-\mch} \cdot \left(\sum_{t=0}^{+\infty} q^{-t/2} \cdot n^{c_0(t + 2\mu)} \right) \tag{by~\eqref{eq:bound-cut-B}}\\
    \le&e^{-\mch} \cdot n^{2 \cdot c_0 \cdot \mu} \cdot \left(\sum_{t=0}^{+\infty} q^{-t/2} \cdot n^{c_0 \cdot t} \right)\\
    \le&2 \cdot e^{-\mch} \cdot n^{2 \cdot c_0 \cdot \mu}\mathdot
\end{align*}

Similarly to the case of $\caG_{G_n, \gamma}$, applying~\cref{lm:exp-markov} and noting that $\mu \ln n = o(\mch)$ finishes the proof. This completes the proof of~\cref{theo:PSE-general-apx}.

\subsubsection{Proof of~\cref{lm:part-G} and~\cref{lm:part-H}}

In the following, we only prove~\cref{lm:part-H} since~\cref{lm:part-G} can be proved in exactly the same way. The below is essentially identical to the argument from~\cite{Hayden_2016}. See also~\cite[Appendix~B]{nezami2020multipartite} for a succinct presentation of the argument from~\cite{Hayden_2016} when applied to random stabilizer tensor networks. In the following, we will follow the proof from \cite[Appendix~B]{nezami2020multipartite}.

\begin{proofof}{\cref{lm:part-H}}
    
    Let $\caT_{H_n,\gamma}$ be the distribution of tensor networks obtained by replacing (1) each bulk node $u$ by an independent uniformly random stabilizer state with $d_u$ legs and $q$ bond dimensions scaled by a factor of $q^{d_u / 4}$ and (2) each edge $e \in \gamma$ by an independent tensor drawn from $\caD_{\sf subset}$ from~\cref{cor:pseudo-ent} with $k = \sqrt{q}$.
    
    \newcommand{\Contract}{\textsf{Contract}}
        
    Let $V_{b}$ denote the set of all bulk vertices from $G$, and $E_b$ denote the set of all bulk edges (that is, edges connecting bulk vertices). We also let $E_{\partial}$ be the set of edges connecting bulk nodes to boundary nodes, and $V_{\partial}$ be the set of boundary nodes. Let $\spz{V_u}$ be the random stabilizer tensors at node $u$. 
    
    We define the following unnormalized state
    \[
    \spz{\Psi} = \left( \bigotimes_{u \in V} \rpz{V_u} \right) \left( \bigotimes_{e \in E_b \setminus \gamma} \spz{e} \otimes \bigotimes_{e \in \gamma} \spz{e}\right) \mathcomma
    \]
    where $\spz{e} \sim \caD_{\sf subset}$ for every $e \in \gamma$, and $\spz{e} = \frac{1}{\sqrt{q}} \sum_{i \in [q]} \spz{i}\spz{i}$ for every $e \in E_b \setminus \gamma$. We also write $\Psi = \opt{\Psi}$ and $\rho = \Psi / \TR(\Psi)$.
    
    We note that $\caH_{G_n,\gamma}$ can be obtained by (1) drawing $\spz{V_u}$ for each $u \in V_{b}$, conditioning on the event that all $\spz{V_u}$ are perfect. (2) drawing $\spz{e} \sim \caD_{\sf subset}$ for every $e \in \gamma$, output $\rho$.
    
    First, let $N_{b} = \sum_{u \in V_{b}} d_u$ and $N_\partial = |V_\partial|$. Let $D_u = q^{d_u}$ . Since random stabilizer states form a projective $2$-design, we have 
    \[
    \Ex\left[\opt{V_u}\right] = I / D_u\quad\text{and}\quad \Ex\left[\optk{V_u}{2}\right] = \frac{I + F_u}{D_u \cdot (D_u + 1)}\mathcomma
    \]    
    where $I$ denotes the identity operator and $F_u$ denotes the swap operator on two copies of the Hilbert space of vertex $u$.

	From which we have
    \[
    \Ex[\TR(\Psi)] = \Ex\left[ \left( \bigotimes_{u \in V_b} \opt{V_u} \right) \left( \bigotimes_{e \in E_b} \opt{e}\right) \right] = q^{-N_b + N_\partial}\mathdot
    \]
    
    Indeed, we can also show that conditioning on all $\spz{V_u}$ being perfect, we have $\TR(\Psi) = q^{-N_b + N_\partial}$ exactly.
    
    For $S \subseteq V_b$, let $\partial S$ denote the set of edges from $E_b$ with exactly one endpoint in $S$. We also have
    \begin{align*}
        \Ex\left[ \TR(\Psi^2_A) \right] &= \TR\left(\Ex\left[\Psi^{\otimes 2}\right] F_A \right) \tag{$F_A$ is swap operator on the $A$ part of the two copies of $\Psi$}\\
        &= \frac{1}{\prod_{u \in V_b} D_u (D_u + 1)} \TR\left[ \left(\prod_{e \in E} \optk{e}{2} \right) \left( \prod_{u \in V_b} (I + F_u) \right) F_A  \right]\\
        &\le q^{-2 N_b} \sum_{S \subseteq V_b} \prod_{e=(u,v) \in \partial S} \TR\left[ \optk{e}{2} F_u \right] \prod_{(u,v) \in  E_{\partial}} q^{2 - \mathbb{1}_{\{(u \in S) \ne (v \in A)\}}}                                        \tag{for $(u,v) \in E_{\partial}$, we assume $u \in V_{b}$ and $v \in V_{\partial}$}\\
        &\le q^{-2 N_b + 2 N_\partial} \sum_{S \subseteq V_b} \prod_{e=(u,v) \in \partial S} \TR\left[ \optk{e}{2} F_u \right] \prod_{(u,v) \in  E_{\partial}} q^{- \mathbb{1}_{\{(u \in S) \ne (v \in A)\}}}\\
        &\le q^{-2 N_b + 2 N_\partial} \sum_{S \subseteq V_b} e^{-\Weight_{H_n}(S \cup A)}\mathdot
    \end{align*}
    
    Let $\caE$ be the event that all $\spz{V_u}$ are perfect. By~\cref{thm Random stabilizer tensors are perfect}, we have $\Pr[\caE] \ge 1 - n^{-\omega(1)}$.
    
    Therefore, we have
    \begin{align*}
        \Ex_{\spz{\phi} \sim \caH_{G_n, \gamma},~\rho = \opt{\phi}} \left[e^{-S_2(\rho_A)} \right] &= \frac{1}{q^{-2 N_b + 2 N_\partial}} \cdot \Ex\left[ \TR(\Psi^2_A) | \caE \right]\\
        &\le \frac{1}{\Pr[\caE]} \cdot    \frac{1}{q^{-2 N_b + 2 N_\partial}} \cdot \Ex\left[ \TR(\Psi^2_A) \right]\\
        &\le \left(1 + n^{-\omega(1)} \right) \cdot \sum_{S \subseteq V_b} e^{-\Weight_{H_n}(S \cup A)}\mathdot \tag*{\qedhere}
    \end{align*}
\end{proofof}

\subsubsection{Extension to public-key pseudoentangled holographic states}\label{sec:pe_link-planar-public-key}

Finally, we state the extension of~\cref{theo:PSE-general-apx} to the public-key version. We omit the proof here since it is identical to that of the tree tensor network case.

\begin{theorem}[Public-key pseudoentangled holographic states over planar graph]\label{theo:PSE-pk-planar}
	Let $\eps \in (0,1)$. Let $G = \{G_n\}_{n \in \N}$ be a family of nice bulk geometry graphs, where $G_n$ has $n$ boundary nodes and has edge weight $\ln q_n \ge \omega(\ln n)$ for some prime power $q_n$. Let $A \subseteq [n]$ be a continuous segment on the boundary of $G_n$ and let $\gamma = \{e_1,\dotsc,e_t\}$ be a minimum cut between $A$ and $A^c = [n] \setminus A$ on $G$. Assuming the standard LWE assumption. The following holds:
			
	\begin{itemize}
		\item There are two ensembles of quantum states $\caD_{\sf low}$ and $\caD_{\sf high}$ that constitute a public-key pseudoentangled holographic state ensemble with exact entropy structure\footnote{on graphs with the same set of edges as $G$ but potentially different weights} and gap $n^{\eps}$ vs. $\Omega(n)$ w.r.t. to cut $S$.
	\end{itemize}
\end{theorem}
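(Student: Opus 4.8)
The plan is to obtain \cref{theo:PSE-pk-planar} from the private-key planar construction of \cref{theo:PSE-general-apx} by exactly the modification that turned \cref{theo:PSE-tree} into \cref{theo:PSE-pk-tree} in \cref{sec:pe_link-tree-public-key}: keep the random stabilizer tensor network and the ``cuts-to-boundary'' unitary map $A_T$ of \cref{sec:pe_link-planar-states} unchanged, but replace the private-key link states drawn from $\caD_{\sf full}$ / $\caD_{\sf subset}$ (\cref{cor:pseudo-ent}) that were inserted on the $t$ min-cut edges $\gamma=\{e_1,\dots,e_t\}$ by the public-key pseudoentangled states of \cref{lm:bouland-pk}. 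Concretely I would set the security parameter $\lambda$ so that the bond dimension satisfies $q_n = 2^{\lambda}$ (padding if necessary), which, as in the tree case, forces $\ln q_n \ge n^{\Omega(1)}$ so that the LWE-based construction of \cref{lm:bouland-pk} applies; then fix one \emph{good} stabilizer tensor network $T\sim\caTg_{G_n}$ (which exists with probability $1-n^{-\omega(1)}$ by \cref{thm Random stabilizer tensors are perfect}) and publish its Clifford description together with a $t$-tuple of independent Bouland public keys. The two ensembles are then
\[ \caD_{\sf low}\colon\; k_1,\dots,k_t\leftarrow\sfKPsi_\lambda,\;\text{output }A_T\!\Bigl(\textstyle\bigotimes_{\mu\in[t]}\ket{\psi_{k_\mu}}\Bigr),\qquad \caD_{\sf high}\colon\; \text{same with }k_\mu\leftarrow\sfKPhi_\lambda, \]
where $\ket{\psi_{k_\mu}}=\sfGen(k_\mu)$ and $A_T$ substitutes the $t$ inputs for the $t$ maximally entangled states on $\gamma$.

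The three conditions of \cref{defi:pk-pe-states} are checked exactly as in \cref{sec:pe_link-tree-public-key}. Condition (i): $\sfGen$ runs the Bouland generator on each $k_\mu$ and applies $A_T$; both are polynomial time (the stabilizer tensors are Cliffords, and the maps $U_A,U_{A^c}$ composing $A_T$ are their associated Clifford unitaries), and the output has the claimed holographic entropy structure on the keyed graph $G_k$ --- here the RT arguments of the private planar case are reused verbatim: the unitary decomposition of $\State(T,\gamma)$ computes the exact von Neumann entropy across the designated contiguous cut $S$ (the segment for which $\gamma$ is a min-cut) as $\sum_{\mu\in[t]}S(\psi_{k_\mu})$, and the second-moment estimates of \cref{lm:part-G}/\cref{lm:part-H} together with \cref{lm:exp-markov} give approximate RT on the remaining cuts, the expectation there being over the random stabilizer tensors only since the link states are fixed once the keys are drawn. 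Condition (ii): the adversary is handed only the public key, i.e.\ the fixed data $T$ (identically distributed in both ensembles) and $(k_1,\dots,k_t)$, so by a hybrid argument distinguishing reduces to distinguishing one key $k\leftarrow\sfKPsi_\lambda$ from $k\leftarrow\sfKPhi_\lambda$, which is negligible under LWE by \cref{lm:bouland-pk}; this is the multi-key analogue of \cref{lm:pe_link-planar-ind}. Condition (iii): by the exact entropy computed above, $\mincut_{S,[n]\setminus S}(G_k)=\sum_{\mu}S(\psi_{k_\mu})$, which by \cref{lm:bouland-pk} is $\le t\lambda^{\eps}$ with overwhelming probability for $\sfKPsi$ and $\ge t\cdot\Omega(\lambda)=\Omega(n)$ with overwhelming probability for $\sfKPhi$; absorbing the constant $t$ (or rescaling $\eps$) gives the stated gap $n^{\eps}$ vs.\ $\Omega(n)$.

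The main obstacle is reconciling the two flavours of RT argument with what \cref{lm:bouland-pk} actually provides. That lemma controls the \emph{von Neumann} entropy of the link states, whereas the Hayden/Harlow-style second-moment computation underlying \cref{lm:part-G}/\cref{lm:part-H} controls their \emph{R\'enyi-$2$} contribution $\TR[\rho_{e_\mu}^2]$; to claim exact entropy structure on a keyed graph whose $\gamma$-edge weights are the ``right'' numbers one therefore also needs that these public-key states have R\'enyi-$2$ entropy within a $(1-o(1))$ factor of their von Neumann entropy in both regimes. This should follow from their explicit description in \cite{bouland2023public} (the low-entropy states are, up to a known isometry, essentially uniform subset-type states, for which all R\'enyi entropies coincide); alternatively, if one wishes to avoid opening that box, conditioning on $T$ being a good \emph{perfect} tensor network lets one run the max-flow/greedy argument of \cref{sec:pe_link-planar-states} on \emph{every} cut, which yields exact von Neumann RT directly on the keyed graph $G_k$ with $\gamma$-weights $S(\psi_{k_\mu})$ and is consistent with the footnote in the statement --- in either case condition (iii) only uses the cut $S$, where the clean unitary computation applies. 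The remaining bookkeeping --- that folding $T$ into the public key leaks nothing, and that $\lambda=\ln q_n\ge n^{\Omega(1)}$ simultaneously meets the prime-power, $\omega(\ln n)$, and LWE-security requirements --- is routine, precisely as in \cref{sec:pe_link-tree-public-key}.
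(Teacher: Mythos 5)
Your proposal matches what the paper intends: the paper's own treatment of this theorem is literally ``We omit the proof here since it is identical to that of the tree tensor network case,'' and your construction -- reusing the random stabilizer network and the map $A_T$ of \cref{sec:pe_link-planar-states}, replacing the private-key link states drawn from $\caD_{\sf full}/\caD_{\sf subset}$ on the edges of $\gamma$ by the LWE-based public-key states of \cref{lm:bouland-pk}, publishing $T$ together with the $t$-tuple of public keys, and verifying the three conditions of \cref{defi:pk-pe-states} via a per-key hybrid and the exact unitary decomposition across the distinguished cut $S$ -- is precisely the modification that \cref{sec:pe_link-tree-public-key} applies to the tree case. You also correctly flag a subtlety the paper's one-line deferral glosses over: the private-key planar RT analysis (\cref{lm:part-G}, \cref{lm:part-H}) is a R\'enyi-$2$ second-moment computation averaged over both the stabilizer tensors and the link states, whereas \cref{lm:bouland-pk} controls the von Neumann entropy of a \emph{fixed} public-key link state, so the claim that the argument is ``identical'' to the tree case requires either checking that the public-key link states have comparable R\'enyi-$2$ and von Neumann entropies, or (as you suggest, and as is consistent with the footnote in the statement) conditioning on a perfect stabilizer network and running the max-flow unitary decomposition exactly on the relevant cut -- condition (iii) only ever uses the cut $S$, where this clean computation applies.
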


\section{Relation between our work and the (strong) Python's lunch conjecture}\label{sec:pythons lunch}

\subsection{The (strong) Python's lunch conjecture}

For the \emph{operator reconstruction} version of implementing the AdS/CFT dictionary there exist a number of efficient algorithms that function in certain, fixed geometries. 
For example, the HKLL procedure~\cite{Hamilton_2006} can efficiently implement operator reconstruction for bulk operators lying in the causal wedge of some boundary region.
A recent follow-up~\cite{Engelhardt_2021} extends the domain of validity of HKLL to bulk operators that lie outside the outermost extremal surface associated to a boundary region.
Both these procedures assume the geometry of the bulk is fixed and known.
More precisely, these procedures work by calculating a `smearing function' which depends on first solving the bulk equations of motion (and therefore presumes a fixed bulk geometry).
The boundary operator $\phi_{\textrm{CFT}}$ dual to some bulk operator $\phi_{\textrm{AdS}}$ is then given by integrating the product of the smearing function and certain primary operators in the CFT (which are found using the extrapolate dictionary~\cite{banks1998adsdynamicsconformalfield}) over the boundary region that is space-like separated from the bulk point at which $\phi_{AdS}$ acts.
The resulting $\phi_{\textrm{CFT}}$ corresponds simply to time evolution under the local CFT Hamiltonian, and can therefore be implemented efficiently.

Studying when operator reconstruction can be carried out efficiently led to the Python's lunch conjecture~\cite{brown2019pythonslunchgeometricobstructions}.
The conjecture posits that operator reconstruction is exponentially complex if there exist locally (but not globally) minimal surfaces in the bulk (giving rise to a `Python's lunch geometry' - see \cref{fig:pythons lunch}).
The initial evidence for the conjecture arises from tensor network toy models of the duality~\cite{Pastawski_2015,Hayden_2016}.
In these toy models a Python's lunch geometry corresponds to a map from bulk to boundary which involves post-selection, and it is argued that such mappings generically lead to complex boundary operators. 
The \emph{strong} Python's lunch conjecture further posits that such geometries are the \emph{only} source of exponential complexity in operator reconstruction~\cite{Engelhardt_2021}.

In~\cite{Engelhardt_2022} the results of~\cite{bouland2019computationalpseudorandomnesswormholegrowth} were analysed with respect to the Python's lunch conjecture.
It was argued that the geometries studied in~\cite{bouland2019computationalpseudorandomnesswormholegrowth} contain Python's lunches.
These Python's lunches are not immediately apparent in the geometry of~\cite{bouland2019computationalpseudorandomnesswormholegrowth}, but appear once the randomness in the construction is treated as mixedness in the bulk degree of freedom, as is argued must be done in~\cite{Engelhardt_2022}.

\subsection{Do our constructions contain a Python's lunch?}

In order to analyse our constructions in terms of the Python's lunch conjecture we will define precisely what a Python's lunch means in the tensor network setting.

\begin{definition}
For each tensor in a HQECC let the parent legs of the tensor be legs which are contracted with a tensor which is one level closer to the centre of the tessellation. Let the children legs of the tensor be legs which are contracted with a tensor which is one level closer to the boundary of the tessellation.
\end{definition}

Note that every leg in every tensor in a HQECC constructed from Coxeter polytopes is either a parent leg, a child leg, or an uncontracted leg~\cite{Kohler_2019}. 
In this section we will restrict our attention to HQECC with this property to make the analysis concrete.
The uncontracted legs can be split into bulk legs and boundary legs:

\begin{definition}
Uncontracted legs in HQECC are boundary degrees of freedom if they are children legs of the final layer of tensors in the network. Otherwise they correspond to bulk degrees of freedom.
\end{definition}

\begin{definition}
A HQECC contains a Python's lunch iff there exists a tensor in the network that has more `input legs' (parent legs plus bulk legs) than `output legs' (children legs and boundary legs).
\end{definition}

If a tensor in a HQECC has more `input legs' than `output legs'
this gives rise to the `bulge' geometry that defines a Python's lunch (see \cref{fig:pythons lunch}).

\begin{figure}
\centering
\includegraphics[scale=0.7]{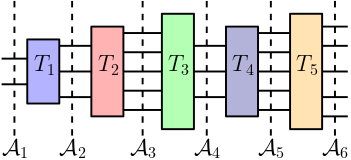}
\caption{Viewing this tensor network as a map from left to right we see that $T_3$ has more inputs than outputs, and this leads to $\mathcal{A}_4$ being a locally minimal cut in the tensor network, while the true minimum cut is $\mathcal{A}_1$.}\label{fig:pythons lunch}
\end{figure}

Care has to be taken when deciding if a tensor network contains a Python's lunch or not. 
It is not enough to simply pick a state in the ensemble of possible boundary states and demonstrate that that particular state does not contain a Python's lunch.
Instead we must consider the maximally mixed state over the entire ensemble~\cite{Engelhardt_2022}.
This means that the randomness in our constructions has to be taken into account.
In particular, in cases where we have a key this key should be treated as a bulk degree of freedom, and the uncertainty in the key as mixedness in that bulk degree of freedom.

For the construction based on pseudoentangled link states this necessarily leads to a Python's lunch geometry when considering the full ensemble of possible boundary states.

The situation for the construction based on low-entangling PRUs is more subtle.
At first it appears that treating the key to the PRU as an input necessarily leads to a Python's lunch geometry, since each small PRU is now mapping from $\omega(\log n ) + \log(k)$ qubits to $\omega(\log n )$ qubits.
However, we note that the proof of the brickwork PRU construction can be extended to give rise to a brickwork pseudorandom isometry (PRI) construction (see \cref{sec:PRI}).
This means we can replace the PRUs in our construction with PRIs without changing any of the conclusions from \cref{sec:pe_pru}.
In particular, we can choose the parameters of the PRIs such that every leg in the tensor network contains at least as many output legs as input legs.
It appears that this has removed the Python's lunch from the construction.
However, it should be noted that the individual tensors in the HQECC are no longer isometric.
This is because each PRI is an isometry, but the map that takes the key as input and implements a particular isometry from the ensemble is not itself isometric.
Decomposing the tensors into smaller components such that every component is an isometric tensor will require the introduction of ancilla registers, and a Python's lunch geometry will appear on this smaller scale.

In both cases the Python's lunch geometry can be avoided if we settle for practical security, as opposed to provable security.
To see what this means, note that while cryptographic proofs of security require the notion of indistinguishable ensembles, in reality any practical implementation of cryptography refers to a single instance.
While we cannot prove that decrypting a single instance is hard, in practise we find that it is.
Working within this paradigm, we could remove the need for randomness in our constructions, and argue that while we cannot prove that our constructions remain hard for a fixed value of the key, they are likely to.
This removes the need to take into account the randomness of the key as a bulk degree of freedom, and we can argue for hardness of geometry reconstruction in the absence of a Python's lunch.
We note that in this setting of `practical security' for state reconstruction, operator reconstruction is easy (as predicted by the strong Python's lunch conjecture). 
Therefore this provides an example of a situation where there is evidence for a gap in complexity between operator reconstruction and state reconstruction.

We note that if there was a public key version of the PRI construction this would imply a construction which is both provably secure, and does not have a Python's lunch.
This is because in this case we could treat the key as part of the input without needing to `throw away' information about the key to obtain the boundary state.
It is this act of `throwing away the key' that leads to a Python's lunch in the brickwork PRI construction with randomness.
The model from a pseudoentangled link state does work with a public key, but here have a Python's lunch anyway because the randomness is associated to an edge in the tensor network as opposed to a tensor, and there is no analogue of replacing a unitary with an isometry to increase the output space of a link state.
However, the existence of public key holographic pseudoentanglement suggests that there is no fundamental barrier to constructing such a scheme, which we leave open as an interesting avenue for future research.

Evidently the question of whether or not our constructions contain a Python's lunch depends subtly on the exact setting of the question, and the notion of security required.

\bibliographystyle{alpha}
\bibliography{main}

\appendix

\newpage

\section{Min-cuts on Bulk Geometry Graphs}

In this section, we discuss some properties of bulk geometry graphs used in the main body of the paper.

\subsection{Bounding the Number of Cuts} \label{sec:bound-the-number-of-cuts}

In the following, for simplicity we assume that $G$ has unit weight. Let $A \subseteq [n]$ and $A^c = [n] \setminus A$ be its complement in the boundary. We assume both $A$ and $A^c$ are non-empty. We also let $\#(t) = \#_{A,A^c}(G,t)$ be the number of cuts between $A$ and $A^c$ on $G$.

\subsubsection{A Cut in $G$ as a collection of paths and cycles in $G^*$} 
Let $\nseg$ be the number of continuous segments of the boundary (note that $\nseg$ is always even). Let $\{ s_1,\dotsc,s_{\nseg} \}$ be the set of faces that is connected the endpoints of these segments. Our first observation is that a cut $S$ between $A$ and $A^c$ (i.e., $A \subseteq S \subseteq V \setminus A^c$) induces a collection of edge-disjoint paths that connect these $ s_1,\dotsc,s_{\nseg} $ in pairs, as well as some other cycles; see~\autoref{fig:paths-and-cycles-on-graph}.

We also note that a minimum cut between $A$ and $A^c$ on $G$ does not contain cycles, since cycles can always to decrease the size of the cut.

\begin{figure}[ht]
	\resizebox{0.45\linewidth}{!}{		
		\begin{tikzpicture}[scale=1.0, line1/.style={},
			line2/.style={},
			line3/.style={},
			]
			\foreach \i in {1,12,11,5,6,7}{
				\pgfmathtruncatemacro{\tmp}{\i*30}
				\node[circle,draw=red,fill=white,line width=1.5pt] (leaf\i) at (\tmp:3.3cm) {\i};
			}
			
			\foreach \i in {2,3,4,8,9,10}{
				\pgfmathtruncatemacro{\tmp}{\i*30}
				\node[circle,draw=blue,fill=white,line width=1.5pt] (leaf\i) at (\tmp:3.3cm) {\i};
			}
			
			\foreach \x in {1,...,3}{
				\foreach \y in {1,...,3}{
					\pgfmathtruncatemacro{\tmp}{(\x-1)*3+\y}
					\ifnum\tmp=1
					\node[rectangle,draw=red,fill=white,line width=1.5pt] (center-\x-\y) at (1.5*\y-3, 3-1.5*\x)  {$C_{\tmp}$};
					\else
					\ifnum\tmp=5
					\node[rectangle,draw=red,fill=white,line width=1.5pt] (center-\x-\y) at (1.5*\y-3, 3-1.5*\x)  {$C_{\tmp}$};
					\else
					\ifnum\tmp=6
					\node[rectangle,draw=red,fill=white,line width=1.5pt] (center-\x-\y) at (1.5*\y-3, 3-1.5*\x)  {$C_{\tmp}$};
					\else
					\node[rectangle,draw=blue,fill=white,line width=1.5pt] (center-\x-\y) at (1.5*\y-3, 3-1.5*\x)  {$C_{\tmp}$};
					\fi\fi\fi
					
				}
			}
			\newcommand{\drawth}[4]{
				\path (barycentric cs:#1=1,#2=1,#3=1) coordinate (centroid);
				\draw[line width=2] (#1) -- (#2);
				\draw[line width=2] (#1) -- (#3);
				\node[inner sep=0pt, outer sep=0pt] (face-#4) at (centroid) {};
			}			
			
			\newcommand{\drawf}[5]{
				\path (barycentric cs:#1=1,#2=1,#3=1,#4=1) coordinate (centroid);
				\draw[line width=2] (#1) -- (#3);
				\draw[line width=2] (#2) -- (#4);
				\node[inner sep=0pt, outer sep=0pt] (face-#5) at (centroid) {};
			}
			
			\newcommand{\drawReflectionLine}[3]{
				\coordinate (reflected) at ($2*(#1)!0.5!(#2)-(#3)$);
				
				
				\draw [dashed, violet, line width=2] (reflected) -- (#3);
			}
			
			\foreach \x in {1,...,3}{
				\foreach \y in {1,...,3}{
					\pgfmathtruncatemacro{\a}{\x+1}
					\pgfmathtruncatemacro{\b}{\y+1}
					\pgfmathtruncatemacro{\tmp}{13+(\x-1)*2+(\y-1)}
					\ifnum \x < 3
					\draw[line width=2] (center-\x-\y) -- (center-\a-\y);
					\ifnum \y < 3
					\drawf{center-\x-\y}{center-\a-\y}{center-\x-\b}{center-\a-\b}{\tmp}
					\fi
					\fi
					\ifnum \y < 3
					\draw[line width=2] (center-\x-\y) -- (center-\x-\b);
					\fi
				}
			}
			
			\drawth{center-1-1}{leaf4}{leaf5}{1};
			\drawf{center-1-1}{center-1-2}{leaf4}{leaf3}{2};
			\drawf{center-1-2}{center-1-3}{leaf3}{leaf2}{3};
			\drawth{center-1-3}{leaf2}{leaf1}{4};
			\drawf{center-1-3}{center-2-3}{leaf1}{leaf12}{5};
			\drawf{center-2-3}{center-3-3}{leaf12}{leaf11}{6};
			\drawth{center-3-3}{leaf11}{leaf10}{7};
			\drawf{center-3-3}{center-3-2}{leaf10}{leaf9}{8};
			\drawf{center-3-2}{center-3-1}{leaf9}{leaf8}{9};
			\drawth{center-3-1}{leaf7}{leaf8}{10};
			\drawf{center-3-1}{center-2-1}{leaf7}{leaf6}{11};
			\drawf{center-2-1}{center-1-1}{leaf6}{leaf5}{12};
			
			\drawReflectionLine{leaf4}{leaf5}{face-1};
			\draw [dashed, violet, line width=2] (face-1) -- (face-2);
			\draw [dashed, violet, line width=2] (face-2) -- (face-13);
			\draw [dashed, violet, line width=2] (face-13) -- (face-14);
			\draw [dashed, violet, line width=2] (face-14) -- (face-5);
			\draw [dashed, violet, line width=2] (face-5) -- (face-4);
			\drawReflectionLine{leaf2}{leaf1}{face-4};
			
			\drawReflectionLine{leaf7}{leaf8}{face-10};
			\draw [dashed, violet, line width=2] (face-10) -- (face-11) -- (face-12) -- (face-13) -- (face-15) -- (face-16) -- (face-6) -- (face-7);
			\drawReflectionLine{leaf10}{leaf11}{face-7};
			
			\begin{scope}[on background layer]
				\draw[dashed, line width=2] (0,0) circle [radius=3.3cm];
			\end{scope}
			
		\end{tikzpicture}
	}
	\qquad
	\resizebox{0.45\linewidth}{!}{		
		\begin{tikzpicture}[scale=1.0, line1/.style={},
			line2/.style={},
			line3/.style={},
			]
			\foreach \i in {5,6,7,1,12,11}{
				\pgfmathtruncatemacro{\tmp}{\i*30}
				\node[circle,draw=red,fill=white,line width=1.5pt] (leaf\i) at (\tmp:3.3cm) {\i};
			}
			
			\foreach \i in {2,3,4,8,9,10}{
				\pgfmathtruncatemacro{\tmp}{\i*30}
				\node[circle,draw=blue,fill=white,line width=1.5pt] (leaf\i) at (\tmp:3.3cm) {\i};
			}
			
			\foreach \x in {1,...,3}{
				\foreach \y in {1,...,3}{
					\pgfmathtruncatemacro{\tmp}{(\x-1)*3+\y}
					\ifnum\tmp=10
					\node[rectangle,draw=red,fill=white,line width=1.5pt] (center-\x-\y) at (1.5*\y-3, 3-1.5*\x)  {$C_{\tmp}$};
					\else
					\node[rectangle,draw=blue,fill=white,line width=1.5pt] (center-\x-\y) at (1.5*\y-3, 3-1.5*\x)  {$C_{\tmp}$};
					\fi
					
				}
			}
			\newcommand{\drawth}[4]{
				\path (barycentric cs:#1=1,#2=1,#3=1) coordinate (centroid);
				\draw[line width=2] (#1) -- (#2);
				\draw[line width=2] (#1) -- (#3);
				\node[inner sep=0pt, outer sep=0pt] (face-#4) at (centroid) {};
			}			
			
			\newcommand{\drawf}[5]{
				\path (barycentric cs:#1=1,#2=1,#3=1,#4=1) coordinate (centroid);
				\draw[line width=2] (#1) -- (#3);
				\draw[line width=2] (#2) -- (#4);
				\node[inner sep=0pt, outer sep=0pt] (face-#5) at (centroid) {};
			}
			
			\newcommand{\drawReflectionLine}[3]{
				\coordinate (reflected) at ($2*(#1)!0.5!(#2)-(#3)$);
				
				
				\draw [dashed, violet, line width=2] (reflected) -- (#3);
			}
			
			\foreach \x in {1,...,3}{
				\foreach \y in {1,...,3}{
					\pgfmathtruncatemacro{\a}{\x+1}
					\pgfmathtruncatemacro{\b}{\y+1}
					\pgfmathtruncatemacro{\tmp}{13+(\x-1)*2+(\y-1)}
					\ifnum \x < 3
					\draw[line width=2] (center-\x-\y) -- (center-\a-\y);
					\ifnum \y < 3
					\drawf{center-\x-\y}{center-\a-\y}{center-\x-\b}{center-\a-\b}{\tmp}
					\fi
					\fi
					\ifnum \y < 3
					\draw[line width=2] (center-\x-\y) -- (center-\x-\b);
					\fi
				}
			}
			
			\drawth{center-1-1}{leaf4}{leaf5}{1};
			\drawf{center-1-1}{center-1-2}{leaf4}{leaf3}{2};
			\drawf{center-1-2}{center-1-3}{leaf3}{leaf2}{3};
			\drawth{center-1-3}{leaf2}{leaf1}{4};
			\drawf{center-1-3}{center-2-3}{leaf1}{leaf12}{5};
			\drawf{center-2-3}{center-3-3}{leaf12}{leaf11}{6};
			\drawth{center-3-3}{leaf11}{leaf10}{7};
			\drawf{center-3-3}{center-3-2}{leaf10}{leaf9}{8};
			\drawf{center-3-2}{center-3-1}{leaf9}{leaf8}{9};
			\drawth{center-3-1}{leaf7}{leaf8}{10};
			\drawf{center-3-1}{center-2-1}{leaf7}{leaf6}{11};
			\drawf{center-2-1}{center-1-1}{leaf6}{leaf5}{12};
			
			\drawReflectionLine{leaf4}{leaf5}{face-1};
			\draw [dashed, violet, line width=2] (face-1) -- (face-12) -- (face-11) -- (face-10);
			\drawReflectionLine{leaf2}{leaf1}{face-4};
			
			\drawReflectionLine{leaf7}{leaf8}{face-10};
			\draw [dashed, violet, line width=2] (face-4) -- (face-5) -- (face-6) -- (face-7);
			\drawReflectionLine{leaf10}{leaf11}{face-7};
			
			
			\begin{scope}[on background layer]
				\draw[dashed, line width=2] (0,0) circle [radius=3.3cm];
			\end{scope}
			
		\end{tikzpicture}
	}
	\caption{Two cuts on $G$ between $A$ and $A^c$ (red vertices denote the set $S$ such that $A \subseteq S \subseteq V \setminus A^c$)}
	\label{fig:paths-and-cycles-on-graph}
\end{figure}

\subsubsection{Bounding the number of min-cuts in $G^*$}

Let the maximum degree of $G^*$ be $\dmax$. Let $\mc = \min_{A \subseteq S \subseteq (V \setminus A^c)}(\WT{G})$. The following fact would be helpful.

\begin{fact}\label{fact:simple-bound}
    Let $m,n \in \posN$ be such that $m \ge n$. It holds that
    \[
    \binom{m+(n-1)}{n-1} \le e^{O(m)}\mathdot
    \]
\end{fact}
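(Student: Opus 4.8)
The plan is to reduce the claim to the elementary fact that every binomial coefficient $\binom{N}{k}$ is at most $2^{N}$, and then use the hypothesis $m \ge n$ to control $N$. Concretely, set $N = m + (n-1)$ and $k = n-1$. First I would note that $\binom{N}{k}$ appears as a single summand in $\sum_{j=0}^{N}\binom{N}{j} = (1+1)^{N} = 2^{N}$, so $\binom{m+(n-1)}{n-1} \le 2^{\,m+n-1}$. Next, since $m \ge n \ge 1$ we have $n - 1 \le m$, hence $m + n - 1 \le 2m$, which gives
\[
\binom{m+(n-1)}{n-1} \;\le\; 2^{\,m+n-1} \;\le\; 2^{2m} \;=\; e^{(2\ln 2)\,m} \;=\; e^{O(m)},
\]
as desired.

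There is essentially no obstacle here; the only point worth care is that the constant hidden in the $O(\cdot)$ must be \emph{absolute} — it is $2\ln 2$, independent of both $m$ and $n$ — since the subsequent cut-counting estimates invoke this fact with $m$ and $n$ both growing with the graph size, and need a bound of the form $e^{O(m)}$ uniformly. (If one wanted a sharper estimate one could instead bound $\binom{m+n-1}{n-1}$ by the central binomial coefficient and apply Stirling, but this refinement is unnecessary: the crude $2^{2m}$ bound already suffices for every application of the fact in this appendix.)
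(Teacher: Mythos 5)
Your proof is correct, and it takes a genuinely different (and simpler) route than the paper. The paper starts from $\binom{m+(n-1)}{n-1} \le \binom{m+n}{n}$, applies the bound $\binom{a}{b} \le \bigl(e\cdot a/b\bigr)^b$ to get $e^n\bigl(1+m/n\bigr)^n$, and then uses the monotonicity of $a \mapsto (1+m/a)^a$ to bound $(1+m/n)^n \le e^m$, concluding $\le e^n \cdot e^m \le e^{2m}$ once $n\le m$ is invoked. You instead bound the single binomial coefficient by the full row sum, $\binom{m+n-1}{n-1} \le 2^{m+n-1}$, and use $n \le m$ directly to get $2^{2m} = e^{(2\ln 2)m}$. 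Your argument avoids both the $(ea/b)^b$ estimate and the limit step, and it even yields a marginally better absolute constant ($2\ln 2 \approx 1.39$ versus the paper's $2$), though of course neither matters inside the $O(\cdot)$. Your remark that the hidden constant is absolute and that this is the property actually needed downstream is a useful sanity check, and it holds for both proofs.
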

\begin{proof}
    We have
    \begin{align*} 
        \binom{m+(n-1)}{n-1} &\le \binom{m+n}{n}\\
        &\le \left( e \cdot \frac{m+n}{n}\right)^{n} \tag{$ \binom{a}{b} \le \left(\frac{e\cdot a}{b}\right)^b$}\\
        &\le e^n \cdot \left( 1 + \frac{m}{n}\right)^{n}\\
        &\le e^n \cdot e^m \le e^{O(m)} \mathdot
    \end{align*}
    The last inequality holds since
    \[
    \left( 1 + \frac{m}{n}\right)^{n} \le \lim_{a \to \infty}\left( 1 + \frac{m}{a}\right)^{a} = e^{m}.\tag*{\qedhere}
    \]
\end{proof}

We have the following lemma bounding the number of min-cuts in $G^*$.

\begin{lemma}\label{lemma:bounding-min-cuts}
    Let $G$ be a bulk geometry graph such that $G^*$ has maximum degree $\dmax$. It holds that
    \[
    \#(\mc) \le (\dmax)^{\mc} \cdot e^{O(\mc)}\mathdot
    \]
\end{lemma}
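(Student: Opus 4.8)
The plan is to encode each min-cut, via planar duality, as a short canonical sequence of local choices in the dual graph $G^*$, and then count the number of such sequences.

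First I would invoke the structure set up earlier in this section: a minimum cut $\gamma$ between $A$ and $A^c$ on $G$ corresponds in $G^*$ to a set $\gamma^*$ of exactly $\mc$ dual edges, and this set decomposes into $\nseg/2$ edge-disjoint simple paths that pair up the fixed junction faces $s_1,\dots,s_{\nseg}$, with no cycles (a cycle could be deleted to strictly reduce the cut). Since these $\nseg/2$ paths are edge-disjoint and each uses at least one edge, we get $\mc \ge \nseg/2$, i.e.\ $\nseg \le 2\mc$. So it suffices to bound the number of such families of paths.

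Next I would encode $\gamma^*$ as follows. Process the junction faces in the fixed cyclic order $s_1,s_2,\dots,s_{\nseg}$, keeping a set of ``used'' junctions. When we reach a not-yet-used $s_i$, it is an endpoint of a unique path $P$ of $\gamma^*$; we trace $P$ starting from $s_i$, and at each step we record (a) which of the at most $\dmax$ incident $G^*$-edges we traverse next, and (b) one bit saying whether the vertex we have reached is the other endpoint of $P$ (``stop'') or not (``continue''). When we stop we mark both ends of $P$ used and continue with the next unused $s_i$. This map is injective: replaying the recorded walks reconstructs $\gamma^*$ exactly. Over the whole run we make one edge-record per edge of $\gamma^*$, hence $\mc$ edge-records each with at most $\dmax$ possibilities, plus at most $\mc$ continue/stop bits. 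Therefore
\[
\#(\mc)\;\le\;(\dmax)^{\mc}\cdot 2^{\mc}\;=\;(\dmax)^{\mc}\cdot e^{O(\mc)}.
\]

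The step I expect to be the real obstacle is not the counting but making the structural claim of the second paragraph airtight: that for a \emph{minimum} cut $\gamma^*$ is genuinely a disjoint union of $\nseg/2$ simple paths pairing the $s_i$ with no cycles, and in particular that ``the path through $s_i$'' is well defined — the case to rule out (or to argue is harmless for an upper bound, since we are over-counting) is a path running through another junction face as an interior vertex. This is exactly the point where planarity of $G$ and the even-degree/boundary structure of bulk geometry graphs are used. Everything else is bookkeeping, and the estimate is loose enough that the crude $2^{\mc}$ factor from the stop bits — or, in the alternative argument where one instead enumerates the multiset of path lengths as a composition of $\mc$ and appeals to \cref{fact:simple-bound} for the $e^{O(\mc)}$ bound on the number of compositions — is absorbed with room to spare.
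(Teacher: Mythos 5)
Your proof is correct and follows the same high-level strategy as the paper (count min-cuts by encoding them as families of paths in the dual graph $G^*$, using the structural fact from \S A.1.1 that a min-cut decomposes into $\nseg/2$ edge-disjoint paths pairing the junction faces $s_1,\dotsc,s_{\nseg}$ with no cycles, together with the observation $\nseg \le 2\mc$). The difference is purely in the bookkeeping. The paper's count is $\binom{\nseg}{\nseg/2} \cdot (\dmax)^{\mc} \cdot \binom{\mc + \nseg/2 - 1}{\nseg/2 - 1}$: it first picks which $\nseg/2$ of the $s_i$ are starting points, then picks a length composition of $\mc$, then picks each step of each path, and invokes \cref{fact:simple-bound} to absorb the two binomials into $e^{O(\mc)}$. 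Your version processes the $s_i$ in a fixed cyclic order and records a continue/stop bit per step, which dispenses with both binomial factors at once and gives the cleaner $(\dmax)^{\mc} \cdot 2^{\mc}$ directly, with no need for \cref{fact:simple-bound}. Both are perfectly fine upper bounds; yours is slightly tighter and more self-contained. You are also right to flag the structural claim (uniqueness/simplicity of the path through a given $s_i$, and the possibility that a path passes through another junction face as an interior vertex) as the only delicate point; for the purposes of an upper bound one can always pick an arbitrary canonical traversal (say, smallest edge index at each step), so the encoding is injective even if the path decomposition at a high-degree dual vertex is not unique, and your note that overcounting is harmless closes this gap.
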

\begin{proof}
    Note that $\mc \ge \nseg / 2$. We can bound the number of cuts between $A$ and $A^c$ with total size $\mc$ by bounding the number of collections of paths that connect $\{ s_1,\dotsc,s_{\nseg} \}$ in pairs, as follows:
    \[
    \binom{\nseg}{\nseg/2} \cdot (\dmax)^{\mc} \cdot \binom{\mc+(\nseg/2-1)}{(\nseg/2-1)}\mathdot
    \]
    The first term $\binom{\nseg}{\nseg/2}$ corresponds to choosing $\nseg/2$ starting points among $\nseg$ endpoints of the segments on the boundary. The last term $\binom{\mc+(\nseg/2-1)}{(\nseg/2-1)}$ corresponds to the total possible length configurations of these $\nseg/2$ paths (their lengths sum up to $\mc$).
    
    The lemma follows directly from~\autoref{fact:simple-bound} and $\binom{\nseg}{\nseg/2} \le e^{O(\mc)}$.
\end{proof}

\subsubsection{Bounding the number of cuts in $G^*$}

\newcommand{\ncyc}{n_{\sf cyc}}
\newcommand{\nf}{n_{\sf f}}

Now we move to bound the number of cuts with size larger than $\mc$.

\begin{lemma}\label{lemma:bounding-cuts}
    Let $G$ be a bulk geometry graph such that $G^*$ has maximum degree $\dmax$. For $t \in \posN$, it holds that
    \[
    \#(\mc+t) \le \nf^{t} \cdot (\dmax)^{\mc+t} \cdot e^{O(\mc+t)} \mathdot
    \]
\end{lemma}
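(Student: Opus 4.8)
**Proof proposal for Lemma (bounding the number of cuts in $G^*$).**

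The plan is to reduce the count of cuts of size $\mc + t$ to the count of min-cuts (size $\mc$, already handled by \cref{lemma:bounding-min-cuts}) plus a bounded ``extra'' structure of total size $t$. Recall from \cref{sec:bound-the-number-of-cuts} that a cut between $A$ and $A^c$ in $G$ corresponds, in the dual graph $G^*$, to a collection of edge-disjoint paths pairing up the $\nseg$ marked faces $s_1,\dotsc,s_{\nseg}$ together with some edge-disjoint cycles. A cut of size $\mc + t$ therefore decomposes into (i) a system of $\nseg/2$ paths pairing the marked faces, with total length $P$, and (ii) a collection of vertex-disjoint (or at least edge-disjoint) cycles with total length $C$, where $P + C = \mc + t$ and $P \ge \mc$ (the path part alone is a cut, hence has size at least $\mc$). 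Consequently $C \le t$, and also $P \le \mc + t$.

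First I would bound the number of path-systems of total length $P$ with $\mc \le P \le \mc + t$. Exactly as in \cref{lemma:bounding-min-cuts}, the number of such systems is at most
\[
\binom{\nseg}{\nseg/2} \cdot (\dmax)^{P} \cdot \binom{P + (\nseg/2 - 1)}{\nseg/2 - 1} \le (\dmax)^{\mc+t} \cdot e^{O(\mc+t)},
\]
using $P \le \mc + t$, \cref{fact:simple-bound}, and $\binom{\nseg}{\nseg/2} \le 2^{\nseg} \le e^{O(\mc)}$ (since $\nseg \le 2\mc$). Summing over the at most $t+1$ choices of $P$ only changes this by a polynomial factor, which is absorbed into $e^{O(\mc+t)}$.

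Next I would bound the number of cycle-collections of total length $C \le t$. A cycle is determined (up to the choice of orientation and starting vertex, a factor of $O(\nf \cdot \ell)$ for a cycle of length $\ell$, which we can crudely overcount) by its starting face and the sequence of turns, giving at most $\nf \cdot (\dmax)^{\ell}$ cycles of length $\ell$; a collection of cycles with total length $C$ is obtained by first choosing how many cycles and their starting faces — at most $\nf^{\,j}$ ways for $j$ cycles, and $j \le C \le t$ — and then the turn-sequences, contributing $(\dmax)^{C}$. Summing over all $C \le t$ and all compositions of $C$ into cycle lengths contributes at most another $e^{O(t)}$ factor (the number of compositions of an integer $\le t$ is $2^{O(t)}$). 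Altogether the cycle part contributes at most $\nf^{\,t} \cdot (\dmax)^{t} \cdot e^{O(t)}$. Multiplying the path bound and the cycle bound gives
\[
\#(\mc + t) \le \nf^{\,t} \cdot (\dmax)^{\mc + t} \cdot e^{O(\mc + t)},
\]
as claimed.

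The main obstacle I anticipate is making the cycle-counting argument genuinely rigorous rather than a crude overcount: one must be careful that the paths and cycles are edge-disjoint, that a cycle is not double-counted under choice of starting vertex/orientation, and that the ``number of cycles'' $j$ is correctly bounded by the total length (each cycle has length $\ge 3$ in a dual graph, or $\ge 1$ with multiedges, so $j \le C$). A clean way to sidestep the bookkeeping is to observe that the symmetric difference of two cuts is a disjoint union of cycles in $G^*$, so one can fix any single min-cut $\gamma_0$ of size $\mc$ and parametrize an arbitrary cut of size $\mc + t$ by $\gamma_0 \,\triangle\, (\text{cycles of total length} \le \mc + t)$; then the entire count reduces to counting cycle-collections, and one reuses \cref{fact:simple-bound} and the degree bound $\dmax$ exactly once. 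I would present the argument in this symmetric-difference form to keep the combinatorics uniform with the proof of \cref{lemma:bounding-min-cuts}.
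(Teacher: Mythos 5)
Your decomposition into paths pairing the marked faces plus leftover cycles, and the observation that the cycle budget is at most $t$, is exactly the paper's approach, and the counting primitives (starting face times $\dmax^{\text{length}}$, \cref{fact:simple-bound} for length compositions) are the same. There is one small bookkeeping slip at the end: you bound the paths (summed over $\mc \le P \le \mc+t$) by $(\dmax)^{\mc+t} e^{O(\mc+t)}$ and the cycles (summed over $C \le t$) by $\nf^t (\dmax)^t e^{O(t)}$, then multiply; but these two ranges are not independent, so the naive product is $\nf^t (\dmax)^{\mc+2t} e^{O(\mc+t)}$, which carries an extra $(\dmax)^t$ factor relative to the lemma's claim. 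The fix is to keep the constraint $P+C=\mc+t$ when combining: for a fixed cycle budget $w\le t$ the paths have length exactly $\mc+t-w$, giving $\nf^w(\dmax)^w e^{O(w)}\cdot(\dmax)^{\mc+t-w}e^{O(\mc+t-w)} = \nf^w(\dmax)^{\mc+t}e^{O(\mc+t)}$, and then summing over $w=0,\dotsc,t$ yields the stated $\nf^t(\dmax)^{\mc+t}e^{O(\mc+t)}$ — which is precisely what the paper does. (This doesn't matter for the downstream application where everything is absorbed into $n^{O(\mc+t)}$, but the lemma as stated is the tighter form.) One further caution: the symmetric-difference reformulation you suggest at the end does not obviously give the claimed bound, because $\gamma_0\,\triangle\,S$ for a fixed min-cut $\gamma_0$ and a cut $S$ of size $\mc+t$ can contain up to $2\mc+t$ edges (cancellation can be small), so the cycle collection you would enumerate has total length up to $2\mc+t$ rather than $t$, inflating the $\nf$ exponent to $\nf^{\mc+t}$ or worse; the paths-plus-cycles decomposition is the right one here because the cycle part is genuinely bounded by $t$.
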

\begin{proof}
    Let $S$ be such that $A \subseteq S \subseteq V \setminus A^c$. Recall that $\Weight_G(S)$ is the total weight of edges with exactly one endpoint contained in $S$. We wish to bound the number of sets $S$ with $\Weight_G(S) = \mc + t$.
    
    As we discussed before, $S$ induces a collection of edge-disjoint paths and cycles in $G^*$ such that the paths connect $\{ s_1,\dotsc,s_{\nseg} \}$ in pairs. We first observe that the total size (the sum of the lengths) of the cycles is at most $t$, since otherwise by removing all these cycles, we can obtain a min-cut between $A$ and $A^c$ with size less than $\mc$, a contradiction to the definition of $\mc$.
    
    In a planar graph, the number of faces $\nf$ is bounded by $O(|V|) \le O(n + n_{\sf bulk})$. To describe a cycle of length $d$, we can fix a starting face and then list the indices of all the outgoing edges. Hence, there are at most $\nf \cdot (\dmax)^{d}$ many cycles of length $d$ in $G^*$.
    
    Suppose the total size of cycles is $w \le t$. Since each cycle has at least $2$ edges, it means there are at most $\ceil{w/2}$ cycles. Suppose there are $k \le \floor{w/2}$ cycles. We can bound the number of collections of cycles with total size $w$ by
    \begin{align*}
        &\sum_{k=1}^{\floor{w/2}} \nf^{k} \cdot \binom{w+(k-1)}{k-1} \cdot (\dmax)^{w} \le O\left(\nf^{w} \cdot (\dmax)^{w} \cdot e^{O(w)} \right)\mathdot
    \end{align*}
    
    The total length of the paths is $\mc + t - w$, and we can bound the number of such collections of paths by
    \[
    (\dmax)^{\mc+t-w} \cdot e^{O(\mc+t-w)}
    \]
    similar to the proof of~\autoref{lemma:bounding-min-cuts}.
    
    Enumerating the possible sizes of cycles, we have
    \begin{align*}
        \#(\mc+t) &\le \sum_{w=0}^{t} O\left(\nf^{w} \cdot (\dmax)^{w} \cdot e^{O(w)} \cdot (\dmax)^{\mc+t-w} \cdot e^{O(\mc+t-w)} \right) \\
        &\le \nf^{t} \cdot (\dmax)^{\mc+t} \cdot e^{O(\mc+t)}\mathdot\tag*{\qedhere}
    \end{align*}
\end{proof}

\subsubsection{Upper bounding the partition function}

Let $c_0$ be a large enough absolute constant that can be used in place of the big-O notation from~\autoref{lemma:bounding-min-cuts} and~\autoref{lemma:bounding-cuts}.

Recall that $G$ has unit weight. Let $\lambda > 0$ be a parameter. We will be interested in the following partition function
\[
Z_{G}(\lambda) \coloneqq \sum_{A \subseteq S \subseteq V \setminus A^c} e^{-\Weight_G(S) \cdot \lambda}\mathdot
\]

We have the following upper bound on $Z_G(\lambda)$ when $\lambda$ is large enough.

\begin{lemma}\label{lm:bound-partition}
    Let $G$ be a bulk geometry graph such that $G^*$ has maximum degree $\dmax$. Assuming $\lambda \ge 2 \cdot \ln \left(\nf \cdot \dmax \cdot e^{c_0}\right)$, it holds that
    \[
    Z_{G}(\lambda) \le 2 \cdot e^{-\mc \cdot \lambda} \cdot (\dmax)^{\mc} \cdot e^{c_0 \cdot \mc}\mathdot
    \]
\end{lemma}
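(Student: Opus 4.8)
The plan is to expand $Z_G(\lambda)$ as a sum over the possible cut values and then collapse it to a geometric series using the counting bounds already proved in \cref{sec:bound-the-number-of-cuts}. Since $G$ has unit weight, every value of $\Weight_G(S)$ is a non-negative integer, and the smallest one attained by a set $S$ with $A \subseteq S \subseteq V \setminus A^c$ is exactly $\mc$. Hence I would first rewrite
\[
Z_G(\lambda) = \sum_{t \ge 0} \#(\mc + t)\, e^{-(\mc+t)\lambda},
\]
where $\#(\cdot)$ is the number-of-cuts function, and the sum starts at $t = 0$ because there are no cuts of weight smaller than $\mc$.

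Next I would substitute the layerwise bounds: $\#(\mc) \le (\dmax)^{\mc} e^{c_0 \mc}$ from \cref{lemma:bounding-min-cuts}, and $\#(\mc+t) \le \nf^{\,t} (\dmax)^{\mc+t} e^{c_0(\mc+t)}$ from \cref{lemma:bounding-cuts} for $t \ge 1$ (the $t=0$ instance being subsumed by the first bound, with $c_0$ the fixed absolute constant absorbing the big-$O$ terms of both lemmas). Factoring out the $t$-independent part gives
\[
Z_G(\lambda) \le e^{-\mc\lambda} (\dmax)^{\mc} e^{c_0 \mc} \sum_{t \ge 0} \bigl( \nf \cdot \dmax \cdot e^{c_0} \cdot e^{-\lambda} \bigr)^{t}.
\]
Now I would invoke the hypothesis $\lambda \ge 2 \ln\!\bigl(\nf \cdot \dmax \cdot e^{c_0}\bigr)$, which yields $e^{-\lambda} \le \bigl(\nf \dmax e^{c_0}\bigr)^{-2}$, so the geometric ratio is at most $\bigl(\nf \dmax e^{c_0}\bigr)^{-1} \le \tfrac12$ (using that $c_0$ is large, so in particular $\nf \dmax e^{c_0} \ge 2$). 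The series therefore sums to at most $2$, and we conclude $Z_G(\lambda) \le 2\, e^{-\mc\lambda} (\dmax)^{\mc} e^{c_0 \mc}$, which is exactly the claimed bound.

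I do not expect a genuine obstacle here: the argument is a one-line geometric-series estimate once the counting lemmas are in hand. The only points that need a moment of care are (i) confirming the cut spectrum is precisely $\{\mc, \mc+1, \dots\}$ so the sum is indexed cleanly (immediate from unit weights and the definition of $\mc$), and (ii) fixing $c_0$ large enough that it simultaneously dominates the implicit constants in \cref{lemma:bounding-min-cuts,lemma:bounding-cuts} and forces $\nf \dmax e^{c_0} \ge 2$, so that the ratio is genuinely $\le \tfrac12$ and the series constant is $2$ rather than something larger; both are trivial to arrange.
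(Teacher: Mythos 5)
Your proposal is correct and follows essentially the same route as the paper: expand $Z_G(\lambda)$ as $\sum_{t\ge 0}\#(\mc+t)e^{-(\mc+t)\lambda}$, plug in the counting bounds from \cref{lemma:bounding-min-cuts,lemma:bounding-cuts}, factor out the $t$-independent prefactor, and use $\lambda \ge 2\ln(\nf\dmax e^{c_0})$ to make the geometric ratio at most $1/2$ so the series sums to at most $2$. The only cosmetic difference is that the paper splits off the $t=0$ term and bounds the tail $\sum_{t\ge 1}$ by $1$, whereas you sum the full geometric series directly; the two are equivalent.
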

\begin{proof}
    We have
    \begin{align*}
        Z_{G}(\lambda) &= \sum_{A \subseteq S \subseteq V \setminus A^c} e^{-\Weight_G(S) \cdot \lambda}\\
        &= \sum_{t=0}^{\infty} \#(\mc + t) \cdot e^{-(\mc + t) \cdot \lambda}\\
        &\le e^{-\mc \cdot \lambda} \cdot (\dmax)^{\mc} \cdot e^{c_0 \cdot \mc} \cdot \sum_{t=0}^{\infty} \nf^{t} \cdot (\dmax)^{t} \cdot e^{c_0 \cdot t} \cdot e^{- t \cdot \lambda}\\
        &\le e^{-\mc \cdot \lambda} \cdot (\dmax)^{\mc} \cdot e^{c_0 \cdot \mc} \cdot \left[1 + \sum_{t=1}^{\infty} \left( \nf \cdot \dmax \cdot e^{c_0} \cdot e^{- \lambda} \right)^{t} \right]\mathdot
    \end{align*}
    From our assumption on $\lambda$, we have
    \[
    Z_{G}(\lambda) \le 2 \cdot e^{-\mc \cdot \lambda} \cdot (\dmax)^{\mc} \cdot e^{c_0 \cdot \mc}\mathdot\tag*{\qedhere}
    \]
\end{proof}

\section{Omitted proofs}

\subsection{Brickwork Pseudorandom isometry construction} \label{sec:PRI}

\begin{definition}[Haar isometry]
We call an isometry $\mathcal{I}:\mathbb{C}^N \rightarrow \mathbb{C}^{NM}$ a Haar isometry if $\mathcal{I}\ket{x} = U\ket{x}\ket{\hat 0}$ where $U:\mathbb{C}^{NM}\rightarrow \mathbb{C}^{NM}$ is a Haar unitary and $\ket{\hat 0} \in \mathbb{C}^M$ is an arbitrary and fixed pure state.
\end{definition}

\begin{definition}[Pseudorandom Isometry (PRI)]

\end{definition}

\begin{lemma}[Lemma 8~\cite{brickwork_t_design}]\label{lemma 8}
Let $A$, $B$, $C$ be three disjoint subsystems.
Consider a random unitary given by $V_{ABC} = U_{AB}U_{BC}$ where $U_{AB}$ and $U_{BC}$ are drawn from $\epsilon_{AB}$ and $\epsilon_{BC}$-approximate unitary $k$-designs respectively. 
Then $V_{ABC}$ is a $\epsilon$-approximate unitary $k$-design for:
\begin{equation}
1 + \epsilon = (1 + \epsilon_{AB})(1+\epsilon_{BC})\left(1 + 2 \left( \frac{k^2}{D_B} +\frac{k^2}{D_{BC}}+ \frac{k^2}{D_BD_{BC}} + \frac{\frac{k^2}{2D_{BC}}}{1-\frac{k^2}{2D_{BC}}} \right)\left(1 + \frac{k^2}{D_{AB}} \right) \right)
\end{equation}
as long as $k^2 \leq D_B$ where $D_\alpha = 2^{|{\alpha}|}$ is the Hilbert space dimension of subsystem $\alpha$.
\end{lemma}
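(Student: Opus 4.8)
The plan is to reduce the statement to the exact-Haar case and then carry out a Weingarten-type estimate for the composition of two Haar moment operators living on the overlapping subsystems $AB$ and $BC$. Concretely, for an ensemble $\nu$ of unitaries on a Hilbert space write $\mathcal M^{(k)}_\nu(X) = \Ex_{U\sim\nu}\big[U^{\otimes k}\,X\,(U^\dagger)^{\otimes k}\big]$ for its $k$-th moment superoperator, and recall that ``$\nu$ is an $\epsilon$-approximate $k$-design'' means $(1-\epsilon)\,\mathcal M^{(k)}_{\mathrm{Haar}} \preceq_{\mathrm{cp}} \mathcal M^{(k)}_\nu \preceq_{\mathrm{cp}} (1+\epsilon)\,\mathcal M^{(k)}_{\mathrm{Haar}}$ in the completely positive ordering on superoperators. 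Two elementary facts do the bookkeeping: (i) for independent $U_{AB},U_{BC}$ the moment operator of $V_{ABC}=U_{AB}U_{BC}$ factors, $\mathcal M^{(k)}_{V} = \mathcal M^{(k)}_{AB}\circ\mathcal M^{(k)}_{BC}$, since $(U_{AB}U_{BC})^{\otimes k} = U_{AB}^{\otimes k}U_{BC}^{\otimes k}$ and the two averages can be taken one at a time; and (ii) $\preceq_{\mathrm{cp}}$ is preserved under pre- and post-composition with completely positive maps. Running the design hypotheses for $U_{AB}$ and $U_{BC}$ through (i) and (ii) pulls out the prefactor $(1+\epsilon_{AB})(1+\epsilon_{BC})$, so the task reduces to comparing the \emph{exact} composition $\mathcal M^{\mathrm{Haar}}_{AB}\circ\mathcal M^{\mathrm{Haar}}_{BC}$ with $\mathcal M^{\mathrm{Haar}}_{ABC}$, with target multiplicative error $\delta = 2\big(\tfrac{k^2}{D_B}+\tfrac{k^2}{D_{BC}}+\tfrac{k^2}{D_BD_{BC}}+\tfrac{k^2/(2D_{BC})}{1-k^2/(2D_{BC})}\big)\big(1+\tfrac{k^2}{D_{AB}}\big)$.

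For that comparison I would use the commutant description: for $k\le D_S$, $\mathcal M^{\mathrm{Haar}}_S$ is, in its Choi/vectorized form, the Hilbert--Schmidt orthogonal projector $P_S$ onto the span of the permutation operators $\{W^{(S)}_\pi : \pi\in S_k\}$, which are linearly independent in this regime. Since every $ABC$-permutation operator factors both as $W^{(AB)}_\pi\otimes W^{(C)}_\pi$ and as $W^{(A)}_\pi\otimes W^{(BC)}_\pi$, one has $\mathrm{range}(P_{ABC})\subseteq \mathrm{range}(P_{AB})\cap\mathrm{range}(P_{BC})$; this makes the lower side of the sandwich essentially automatic (and it can be obtained symmetrically in any case), so the content is to show that $P_{AB}P_{BC}$ does not overshoot $P_{ABC}$ by more than a factor $1+\delta$. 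Concretely I would bound $\norm{P_{AB}P_{BC}-P_{ABC}} = \norm{P_{AB}(I-P_{ABC})P_{BC}}$ by estimating, on $\mathrm{range}(P_{ABC})^\perp$, the Gram matrix of the two families $\{W^{(AB)}_\pi\otimes I_C\}_\pi$ and $\{I_A\otimes W^{(BC)}_\sigma\}_\sigma$: their diagonal entries have order $D_{AB}^{k}$ and $D_{BC}^{k}$, while the cross-overlaps $\langle\!\langle W^{(AB)}_\pi\otimes I_C \,|\, I_A\otimes W^{(BC)}_\sigma\rangle\!\rangle$, computed through the Weingarten function on $S_k$, are suppressed by powers of $D_B^{-1}$, $D_{BC}^{-1}$ and $D_{AB}^{-1}$. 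Summing the resulting (essentially geometric) series over pairs of permutations---which converges precisely because $k^2\le D_B$---and translating the operator-norm bound back into the completely positive ordering yields exactly the four displayed terms and the $(1+k^2/D_{AB})$ factor.

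The main obstacle is this Weingarten estimate: one must control overlaps of permutation operators sitting on \emph{different} tensor factorizations of $ABC$ (a given permutation only ``looks like'' a combination of the others after tracing out $A$ or $C$), and then sum over $S_k$ without losing more than the stated constants---which is where the precise form of the bound, including the ``$\tfrac{k^2/(2D_{BC})}{1-k^2/(2D_{BC})}$'' tail, comes from. A secondary subtlety is to maintain the \emph{multiplicative}/completely-positive notion of approximate design throughout, rather than merely a diamond- or Hilbert--Schmidt-norm bound on the moment-operator difference: this forces one to work with Choi operators and positive-semidefinite orderings, and one has to verify that converting the spectral estimate on $P_{AB}P_{BC}$ into a two-sided sandwich around $\mathcal M^{\mathrm{Haar}}_{ABC}$ costs only the claimed factor.
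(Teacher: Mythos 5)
The paper does not supply a proof of this lemma: it is quoted verbatim from \cite{brickwork_t_design} (Schuster--Huang--Haferkamp) and used as a black box, so there is no in-paper argument to compare yours against. Taking your sketch on its own terms, the first half is correct. The factorization $\mathcal{M}^{(k)}_{V}=\mathcal{M}^{(k)}_{AB}\circ\mathcal{M}^{(k)}_{BC}$ follows from independence of $U_{AB},U_{BC}$ and from $(U_{AB}U_{BC})^{\otimes k}=U_{AB}^{\otimes k}\,U_{BC}^{\otimes k}$, and the relative-error (two-sided CP) design hypotheses do pass through pre- and post-composition with CP maps, pulling out the factor $(1+\epsilon_{AB})(1+\epsilon_{BC})$ on one side and $(1-\epsilon_{AB})(1-\epsilon_{BC})$ on the other. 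That correctly localizes the entire quantitative content of the lemma in the comparison of $\mathcal{M}^{\mathrm{Haar}}_{AB}\circ\mathcal{M}^{\mathrm{Haar}}_{BC}$ with $\mathcal{M}^{\mathrm{Haar}}_{ABC}$.

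The second half, however, is a plan rather than a proof, and the gap is exactly where the lemma lives. You identify the right framework (Schur--Weyl commutants, permutation operators, Weingarten overlaps) and the correct range containment $\mathrm{range}(P_{ABC})\subseteq\mathrm{range}(P_{AB})\cap\mathrm{range}(P_{BC})$, but the Gram-matrix estimate that is supposed to produce the four displayed terms---in particular the $\frac{k^2/(2D_{BC})}{1-k^2/(2D_{BC})}$ tail and the multiplicative $(1+k^2/D_{AB})$ factor---is never carried out; you acknowledge this yourself as the ``main obstacle.'' Two additional points need more care than you give them. First, the lower side of the sandwich is not ``essentially automatic'': range containment tells you that $\mathcal{M}^{\mathrm{Haar}}_{AB}\circ\mathcal{M}^{\mathrm{Haar}}_{BC}$ fixes $\mathrm{range}(\mathcal{M}^{\mathrm{Haar}}_{ABC})$ pointwise, but $(1-\delta)\,\mathcal{M}^{\mathrm{Haar}}_{ABC}\preceq_{\mathrm{cp}}\mathcal{M}^{\mathrm{Haar}}_{AB}\circ\mathcal{M}^{\mathrm{Haar}}_{BC}$ is a positivity statement about a Choi-operator difference and does not follow from range containment alone (note also that $P_{AB}P_{BC}$ is an oblique, non-self-adjoint projection, so even the notion of ``not overshooting'' must be phrased carefully). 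Second---and you do flag this---upgrading an operator-norm estimate on $P_{AB}P_{BC}-P_{ABC}$ to a two-sided completely positive ordering is genuinely non-trivial, and that conversion is precisely where the stated bound acquires its exact shape. Until the Weingarten computation and this CP-ordering conversion are done, the proposal is a well-oriented outline, not a proof.
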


We can use \cref{lemma 8} to prove a slightly modified version of Theorem 1~\cite{brickwork_t_design} where instead of assuming that the two layers of brickwork contain unitaries of the same size we assume that the second row of unitaries is larger, with some external inputs. \cref{fig:pri}.
These external inputs will allow us to construct a PRI instead of a PRU.

\begin{figure}
\centering
\includegraphics[scale=0.5]{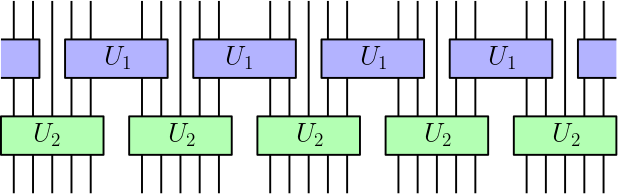}
\caption{The brickwork construction we use to construct a low-depth PRI. The first row of qubits acts on $\omega(\log n)$ qubits, while the second row of qubits acts on $\omega(\log n) + \eta'$ qubits.
One input for each unitary in the brickwork construction will be fixed to $\ket{0}$ to give a PRI.} \label{fig:pri} 
\end{figure}

\begin{lemma}[Modification of Theorem 1~\cite{brickwork_t_design}]
Consider any approximation error $\epsilon \leq 1$. Suppose each small random unitary in the first layer of the brickwork ensemble $\mathcal{E}$ is drawn from an $\frac{\epsilon}{n}$-approximate unitary $k$-design on $2 \eta$ qubits with circuit depth $d$, and each small random unitary in the second layer of the brickwork ensemble $\mathcal{E}$ is drawn from an $\frac{\epsilon}{n}$-approximate unitary $k$-design on $2 \eta + \eta'$ qubits with circuit depth $d$.
Then $\mathcal{E}$ forms an $\epsilon$-approximate unitary $k$-design on $n$ qubits with depth $2d$, whenever the local patch size satisfies $\eta \geq \log_2(nk^2/\epsilon)$ and $\eta' \geq 1$ 
\end{lemma}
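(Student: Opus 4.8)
The plan is to reprove Theorem~1 of~\cite{brickwork_t_design} essentially line by line, tracking the single change: in the second brickwork layer each patch unitary now acts on $2\eta + \eta'$ qubits, the extra $\eta'$ wires being fresh ancilla registers, rather than on $2\eta$ qubits. The whole argument rests on \cref{lemma 8}, which already holds for \emph{arbitrary} disjoint subsystems $A,B,C$; the point is that replacing $2\eta$ by $2\eta+\eta'$ in the second layer only \emph{enlarges} the Hilbert-space dimensions $D_{BC}$ and $D_{AB}$ appearing in denominators of the error expression, while leaving the overlap register $B$ at exactly $\eta$ qubits. Hence the dominant error term $k^2/D_B = k^2/2^{\eta}$ and the applicability hypothesis $k^2 \le D_B$ are unchanged, and every other bracketed term in \cref{lemma 8} can only shrink. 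So the per-composition degradation is no worse than in the original theorem, the choice $\eta \ge \log_2(nk^2/\epsilon)$ suffices for the same reasons, and $\eta' \ge 1$ is needed only so that the ancilla registers are non-empty (so that fixing them to $\ket{0}$ later yields a genuine isometry, cf.~\cref{fig:pri}).

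Concretely I would proceed as follows. \textbf{Step 1 (geometry).} Fix the two-layer brickwork: first-layer patches $U^{(1)}_1,\dots$ on disjoint $2\eta$-qubit blocks, and second-layer patches $U^{(2)}_1,\dots$, each acting on the right $\eta$ qubits of one first-layer block, the left $\eta$ qubits of the next, and its own $\eta'$-qubit ancilla register initialised arbitrarily. \textbf{Step 2 (iterative merge).} Following the inductive peeling of~\cite{brickwork_t_design}, absorb the patches one at a time from left to right, maintaining an already-merged register $A$ that is a $k$-design with accumulated error; at the step absorbing the next first-layer block and the second-layer patch straddling it, invoke \cref{lemma 8} with $B$ the $\eta$-qubit overlap half-block, $C$ the remaining $\eta$ qubits of the new first-layer block together with the $\eta'$ ancillas of the new second-layer patch, and $\epsilon_{BC}\le \epsilon/n$ (the quality of the small designs). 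Check that $k^2 \le D_B = 2^\eta$ holds by $\eta \ge \log_2(nk^2/\epsilon) \ge \log_2 k$, and that each bracketed term in \cref{lemma 8} is $O(k^2/2^\eta) = O(\epsilon/n)$. \textbf{Step 3 (telescoping).} Multiply the $O(n/\eta) = O(n)$ per-step factors, each $(1+\epsilon/n)\big(1 + O(\epsilon/n)\big) = 1 + O(\epsilon/n)$, to get $1 + \epsilon_{\mathrm{total}} \le \big(1 + O(\epsilon/n)\big)^{O(n)} \le 1 + O(\epsilon)$, absorbing the constant by enlarging the hidden constant in $\eta \ge \log_2(nk^2/\epsilon)$, exactly as in~\cite{brickwork_t_design}. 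The circuit is two layers of depth-$d$ patch circuits, so its depth is $2d$.

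The main obstacle is the bookkeeping in Step~2: one must assign each fresh $\eta'$-qubit ancilla register consistently --- to the incoming register $C$ at the step where its second-layer patch is first absorbed, and thereafter to $A$ --- so that the three arguments of every invocation of \cref{lemma 8} really are disjoint and $D_B, D_{AB}, D_{BC}$ are precisely the quantities the bound needs; and one must confirm that the offset brickwork geometry admits such a left-to-right peeling, which is exactly the combinatorial content reused wholesale from the original proof. Once this accounting is pinned down there is essentially nothing new to verify, since every modification relative to~\cite{brickwork_t_design} only increases a Hilbert-space dimension sitting in a denominator.
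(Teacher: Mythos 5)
Your proposal takes essentially the same route as the paper: both apply \cref{lemma 8} iteratively patch by patch, with $B$ the $\eta$-qubit overlap register, $D_B=2^\eta$ unchanged, and $D_{BC}, D_{AB}$ only enlarged by the $\eta'$ ancillas of the second layer. The one place where you genuinely diverge is the final error accounting. You invoke a monotonicity shortcut --- since $D_B$ is unchanged and $D_{BC},D_{AB}$ only grow, each per-step error in \cref{lemma 8} is pointwise $\le$ the corresponding error in the original Theorem~1 of \cite{brickwork_t_design}, so the original numerical bound transfers wholesale with the \emph{same} threshold $\eta\ge\log_2(nk^2/\epsilon)$. The paper does not use this shortcut; it instead rederives the explicit product bound, splitting the error into a first-layer term $f(k,q)$ and a second-layer term $g(k,q,q')$ and evaluating both numerically to reach $\frac{303\,\epsilon}{700\log 2}<\epsilon$. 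Your monotonicity argument is cleaner and avoids re-running the numerics, and it is correct precisely because \cref{lemma 8} is monotone decreasing in the subsystem dimensions that change. One caution: your Step~3 phrase about ``absorbing the constant by enlarging the hidden constant in $\eta\ge\log_2(nk^2/\epsilon)$'' is off --- the lemma's threshold has no hidden constant, and the original theorem (and the paper's rederivation) achieve exactly $\epsilon$, not $O(\epsilon)$, at that exact threshold. If you lean on the $O(\cdot)$-style telescoping you describe there, you have no room to ``absorb'' anything; but if you lean on the monotonicity observation from your opening paragraph (which directly inherits the exact constants from Theorem~1 of \cite{brickwork_t_design}), the argument closes with no slack lost, and that is the version you should write down. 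The remark that $\eta'\ge1$ plays no role in the $k$-design bound and is only needed so the ancilla registers are non-empty for the subsequent PRI construction is correct and matches the paper's use of the lemma.
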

\begin{proof}
We will apply \cref{lemma 8} patch-by-patch. 
Let $m$ be the number of patches of $\eta$ qubits.
Then there will be a total of $m$ small random unitaries applied. 
Let $q = 2^\eta$ and $q' = 2^{\eta'}$.
Then we have that $D_B = q$ for every application of \cref{lemma 8}.
$D_C$ will alternate between $D_C = q$ and $D_C = qq'$ (the former when we are adding a random unitary in the top layer, the latter when we are adding a random unitary in the bottom layer.
$D_A = q$ for the first application of \cref{lemma 8}, then it increases by $D_C$ for every later application. 

Therefore we have that after $m$ applications of \cref{lemma 8} the brickwork ensemble forms a $k$-design with error:

\begin{equation} \label{eq:error}
\begin{split}
\left(1+\frac{\epsilon}{n}\right)^m\left(1+ f(k,q) \right)^{\frac{m}{2}}\left(1+ g(k,q,q') \right)^{\frac{m}{2}} - 1 &\leq \exp\left(\frac{m \epsilon}{2} + \frac{m f(k,q)}{2} + \frac{mg(k,q,q')}{2}\right) - 1 \\ 
& \leq \frac{1}{\log 2} \left(\frac{m \epsilon}{2} + \frac{m f(k,q)}{2} + \frac{mg(k,q,q')}{2}  \right)
\end{split}
\end{equation}
where:
\begin{equation}
f(k,q) =  \frac{k^2}{q} +\frac{k^2}{q^2}+ \frac{k^2}{q^3} + \frac{\frac{k^2}{2q^2}}{1-\frac{k^2}{2q^2}} 
\end{equation}
and
\begin{equation}
g(k,q,q') =  \frac{k^2}{q^2} +\frac{k^2}{q^2q'}+ \frac{k^2}{q^3q'} + \frac{\frac{k^2}{2q^2q'}}{1-\frac{k^2}{2q^2q'}} 
\end{equation}

We need to show this error is less than $\epsilon$.
As in~\cite{brickwork_t_design} we take $k \geq 2$ and $n \geq 3\eta$ as otherwise the theorem holds trivially.
By assumption we have $\epsilon \leq 1$ and $q \geq nk^2/\epsilon$, giving $\eta \geq 7$ and $q \geq 128$.
Therefore the first term in \cref{eq:error} is:
\begin{equation}
\frac{m \epsilon}{n \log 2} \leq \frac{\epsilon}{7\log 2}
\end{equation}
since $ m \leq n/ \eta \leq n / 7$.

Applying $q \geq nk^2 / \epsilon$ and $q \geq 2$ to the second term in \cref{eq:error} gives:
\begin{equation}
\begin{split}
        \frac{m f(k,q)}{2 \log 2} &\leq \frac{n}{7 \log 2}\left(\frac{\epsilon}{n} + \frac{\epsilon}{nq} + \frac{\epsilon^2}{n^2q} + \frac{\frac{\epsilon}{2nq}}{1-\frac{\epsilon}{2nq}} \right)\left(1 + \frac{\epsilon}{nq} \right) \\ 
        &\leq \frac{\epsilon}{7 \log 2}\left(1 + \frac{1}{128} + \frac{1}{21 \times 128} + \frac{\frac{1}{256}}{1-\frac{1}{2 \times 21 \times 128}} \right)\left(1 + \frac{1}{21 \times 128} \right) \\
        &\leq \frac{102 \epsilon}{700 \log 2}
    \end{split}
\end{equation}

Applying $q \geq nk^2 / \epsilon$ to the third term in \cref{eq:error} gives:
\begin{equation}
\begin{split}
        \frac{m g(k,q,q')}{2 \log 2} &\leq \frac{n}{7 \log 2}\left(\frac{\epsilon}{n} + \frac{\epsilon}{2nq} + \frac{\epsilon^2}{2n^2q} + \frac{\frac{\epsilon}{4nq}}{1-\frac{\epsilon}{4nq}} \right)\left(1 + \frac{\epsilon}{nq} \right) \\ 
        &\leq \frac{\epsilon}{7 \log 2}\left(1 + \frac{1}{256} + \frac{1}{2 \times 21 \times 128} + \frac{\frac{1}{512}}{1-\frac{1}{4 \times 21 \times 128}} \right)\left(1 + \frac{1}{21 \times 128} \right) \\
        &\leq \frac{101 \epsilon}{700 \log 2}
    \end{split}
\end{equation}
Therefore the total errors is less than $\frac{303 \epsilon}{700 \log2} < \epsilon$ as required. 
\end{proof}

Finally we can prove that the overall brickwork construction is a PRI:
\begin{theorem}
Let $\mathcal{E}$ be the two-layer brickwork ensemble in \cref{fig:pri} where each small random unitary in the first layer is a $2 \eta$-qubit PRU and each small random unitary in the second layer is a $(2 \eta + \eta')$-qubit PRU, both secure against $\poly(n)$-time adversaries.
Then the ensemble of isometries given by:
\begin{equation}
V \ket{\psi} = U \ket{\psi} \ket{0}^{\otimes m}
\end{equation}
for $U \leftarrow \mathcal{E}$  where the $\ket{0}$ inputs are applied to one free input for each small PRU in the brickwork construction is a PRI secure against $\poly(n)$-time adversaries.
\end{theorem}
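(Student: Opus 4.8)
The plan is to verify the two defining properties of a pseudorandom isometry for the ensemble $V\ket{\psi} = U\ket{\psi}\ket{0}^{\otimes m}$ with $U \leftarrow \cE$: that $V$ is implementable in $\poly(n)$ time, and that $V$ is computationally indistinguishable from a Haar isometry for every $\poly(n)$-time adversary making polynomially many oracle queries. Efficiency is immediate, since $\cE$ is a two-layer brickwork of $m = O(n/\eta)$ small unitaries, each on $2\eta$ or $2\eta + \eta' = \poly\log n$ qubits and each drawn from a PRU (hence implementable in $\poly(n)$ time), and $V$ merely fixes a designated set of $m$ input wires -- one per small unitary -- to $\ket{0}$. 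A useful reformulation of the distinguishing target: by the definition of a Haar isometry $\cI\ket{x} = W\ket{x}\ket{\hat 0}$, a Haar isometry from $n$ qubits into the enlarged register is \emph{exactly} the map that appends $\ket{0}^{\otimes m}$ and then applies a Haar-random unitary on the enlarged register. Hence replacing every small PRU in $\cE$ by an exact Haar unitary turns $V$ into a genuine Haar isometry, and it suffices to show that these two ensembles of isometries are computationally indistinguishable.

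Fix a $\poly(n)$-time adversary $\cA$ making $Q = \poly(n)$ (parallel) oracle queries. First I would run a hybrid argument over the $m$ small unitaries that replaces, one patch at a time, each small PRU by an efficiently samplable $\eps'$-approximate unitary $O(Q)$-design on the same number of qubits, with $\eps' = \negl(n)$ (such designs exist unconditionally). In the step that replaces the $j$-th small PRU, the reduction $\cB$ simulates $\cA$'s oracle by running the brickwork with freshly drawn $O(Q)$-designs on the patches already processed, freshly drawn PRUs on the patches not yet processed, and its own oracle on patch $j$; all of this is $\poly(n)$-time samplable, so pseudorandomness of the $j$-th small PRU bounds the $j$-th step by $\negl(n)$, and summing over the $m = \poly(n)$ steps the total shift in $\cA$'s advantage is $\negl(n)$. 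Let $\cE'$ denote the resulting brickwork ensemble, now built from small $O(Q)$-designs.

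Next I would show that $V' \leftarrow \cE'$ is $\negl(n)$-indistinguishable from a Haar isometry for $Q$-query adversaries. Applying the modification of \cite[Theorem~1]{brickwork_t_design} proved above with $\eps_{AB} = \eps_{BC} = \eps'$ and $k = O(Q)$ -- valid since $\eta = \omega(\log n)$ can be taken large enough that $\eta \ge \log_2\!\big(n\cdot O(Q)^2/\delta\big)$ for the desired $\delta = \negl(n)$, and since $\eta' \ge 1$ -- the brickwork unitary $U \leftarrow \cE'$ on the enlarged register is a $\delta$-approximate unitary $O(Q)$-design. A standard bound then gives that any $\delta$-approximate $O(Q)$-design is $O(\delta)$-indistinguishable from Haar for a procedure making at most $Q$ queries; via the reformulation above this says precisely that $V'$ is $\negl(n)$-indistinguishable from a Haar isometry, and chaining with the previous paragraph, so is $V \leftarrow \cE$. (As for the PRU construction, one should take the two layers from distinct PRU ensembles so that the combined object is secure against parallel-query adversaries.)

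I expect the main obstacle to be the standard one in lifting a statistical design statement to a computational one: the hybrid reduction $\cB$ must itself be efficient and therefore cannot sample exact Haar unitaries on the patches it is not currently attacking. This forces the argument to route entirely through efficiently samplable $O(Q)$-designs on every small patch, passing to Haar only at the end through the generic ``$O(Q)$-design $\approx$ Haar for $Q$ queries'' bound; the remaining bookkeeping is to check that the design order $O(Q)$ comfortably exceeds the adversary's query budget and that the accumulated negligible errors -- the $m$ hybrid steps, the approximate-design guarantee on each patch, and the error term in the modified \cite[Theorem~1]{brickwork_t_design} -- all stay negligible, which they do precisely because $\eta = \omega(\log n)$.
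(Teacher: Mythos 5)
Your proof is correct and follows essentially the same route as the paper's: observe that a Haar isometry is exactly the composition of $\ket{0}^{\otimes m}$-padding with a Haar unitary, so PRI-distinguishing reduces to PRU-distinguishing, and then invoke the fact that the (modified, unequal-layer) brickwork ensemble is a PRU. The paper discharges the latter fact by citing the PRU theorem (Theorem 4) of \cite{brickwork_t_design}, whereas you re-derive it explicitly via the hybrid replacement of each small PRU by an efficiently samplable $O(Q)$-design and then the modified design-composition lemma; this extra detail is reasonable (and arguably more careful than the paper, since the cited theorem nominally treats equal-sized layers), but conceptually you are taking the same path.
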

\begin{proof}
By Theorem 4~\cite{brickwork_t_design} unitaries from $\mathcal{E}$ are pseudorandom unitaries secure against $\poly(n)$-time adversaries.
If $V$ was distinguishable from a Haar isometry by $\poly(n)$-time adversaries this would provide a method for a $\poly(n)$-time adversary to distinguish $U$ from a Haar random unitary.
\end{proof}

\end{document}